\def\BibTeX{{\rm B\kern-.05em{\sc i\kern-.025em b}\kern-.08em
    T\kern-.1667em\lower.7ex\hbox{E}\kern-.125emX}}
\DeclarePairedDelimiter\abs{\lvert}{\rvert}
\newtheorem{theorem}{Theorem}
\newtheorem{lemma}{Lemma}
\newtheorem{corollary}{Corollary}
\newtheorem{definition}{Definition}
\newtheorem{example}{Example}
\newtheorem{remark}{Remark}
\algrenewcommand{\algorithmiccomment}[1]{\hfill// #1}
\begin{document}
	\title{Asynchronous Vector Consensus over Matrix-Weighted Networks}
	
\author{\IEEEauthorblockN{P Raghavendra Rao}
	\IEEEauthorblockA{Department of Electrical Engineering \\
		Indian Institute of Technology Tirupati, India\\
		ee20d002@iittp.ac.in}
	\and
	\IEEEauthorblockN{Pooja Vyavahare}
	\IEEEauthorblockA{Department of Electrical Engineering \\
		Indian Institute of Technology Tirupati, India\\
		poojav@iittp.ac.in}
}

\maketitle
\begin{abstract}
We study the distributed consensus of state vectors in a discrete-time multi-agent network with matrix edge weights using stochastic matrix convergence theory. We present a distributed asynchronous time update model wherein one randomly selected agent updates its state vector at a time by interacting with its neighbors. We prove that all agents converge to same state vector almost surely when every edge weight matrix is positive definite. We study vector consensus in cooperative-competitive networks with edge weights being either positive or negative definite matrices and present a necessary and sufficient condition to achieve bipartite vector consensus in such networks. We study the network structures on which agents achieve zero consensus. We also present a convergence result on nonhomogenous matrix products which is of independent interest in matrix convergence theory. All the results hold true for the synchronous time update model as well in which all agents update their states simultaneously. 
\end{abstract}

\begin{IEEEkeywords}
Multi-agent networks, matrix-weighted networks, vector consensus, bipartite consensus.
\end{IEEEkeywords}

\section{Introduction}
\label{sec:intro}
Many real world systems, such as autonomous vehicles, mobile robots, and smart power grids, are described by Multi-Agent Systems (MAS) \cite{dorri2018multi}. MAS is a set of agents which communicate with each other. The communication among agents is generally represented by a directed graph. Opinion dynamics in MAS is the study of networked systems to understand the evolution of opinions (also called as states) of agents over time. One of the key challenges for agents in MAS is to reach to a common opinion, also called as \textit{distributed consensus,} by sharing limited local information. Consensus is one of the primitives of distributed computation with wide range of applications in distributed optimization \cite{nedic2010constrained}, \cite{lin2016distributed}, state estimation \cite{mitra2020new}, \cite{lalitha2018social}, robotics \cite{ota2006multi}, \cite{inigo2012robotics}, \cite{soriano2013multi} and social networks \cite{jiang2013understanding}. 

While distributed consensus algorithms for scalar states with scalar edge weights has been studied from long time in the literature (see \cite{Qin17,Ning23} for a literature survey), many MAS applications require consensus on multidimensional state vectors. For example, an autonomous vehicle equipped with multiple sensors like camera, radar etc. can be modeled as a MAS with sensors being agents and communication links between them as the underlying directed graph. At any time $t,$ sensors collect data from the environment to estimate the vehicle's attributes like velocity and position. These attributes depend on each other and hence can be modeled as a multidimensional state vector of the system. All the sensors need to agree on the same state vector for the vehicle to operate smoothly. To this end, the sensors communicate with each other and calculate the state vector by combining weighted information received from other sensors. As the attributes are interdependent and each sensor can have different accuracy levels on its estimates, the state vector of each sensor is weighted by a matrix to obtain the aggregated state vector of the system. Analysis of this model can be done by understanding the distributed consensus of vector states on matrix-weighted networks.

With these emerging applications, matrix-weighted MAS for vector consensus has gained interest in recent times \cite{Trinh18,Tran21,Pan22,Liu23,Foight20,Le-Phan24}. While most of the literature \cite{Tran21,Pan22,Liu23,Foight20} consider the MAS systems in which all agents update their state vectors simultaneously (also known as \textit{synchronous} time systems), \cite{Le-Phan24} study the state evolution when agents update states \textit{asynchronously}. The existing literature on matrix-weighted MAS provides network structure conditions for obtaining different types of consensus in MAS using properties of the graph Laplacian matrix. Stochastic matrix convergence theory has been extensively used to study scalar consensus problem \cite{ren2005consensus} but its relation to vector consensus over matrix-weighted networks is not yet explored.  Our focus in this work is on understanding the vector consensus in asynchronous time systems using matrix convergence theory and we show the almost sure convergence of vector consensus. 

\subsection{Related Work}
\label{sec:review}

Many opinion dynamics models on MAS \cite{Dong18} have been studied in the literature starting from DeGroot model \cite{DeGroot74}. DeGroot based consensus model, in which agents update their states by linearly combining states of their neighbors, is one of the most commonly studied model \cite{Liu23, olfati2007consensus, ren2005consensus}. When states are scalars, \cite{olfati2007consensus} showed that consensus is guaranteed in both continuous-time (CT) and discrete-time (DT) systems if the directed graph is strongly connected and \cite{ren2005consensus} showed the existence of spanning tree in the graph being sufficient to achieve consensus. Both these models consider a synchronous update model in which all agents in the MAS update their states simultaneously. In general MAS, only  a set of agents update their states at a time which can be picked randomly leading to an underlying stochastic model on the communication pattern. Such randomized consensus models for scalar states are studied in \cite{Boyd06, Shi15}. In some MAS, the agents update their states based on an event and in such networks distributed scalar consensus is studied extensively; see \cite{Qin17} and references therein.

  While all these results assume cooperative MAS in which all agents trust each other, in practice some agents may have mistrust on the information received from their neighbors in the network. For example, in a social network some people always disagree with opinions of a group of people in the network; see \cite{easley2010networks} for more examples. Inspired by these applications, consensus in cooperative-competitive MAS (with positive and negative edge weights representing cooperation and competition respectively) has been studied in \cite{altafini2012consensus}. Authors in \cite{altafini2012consensus} showed that in structurally balanced networks\footnote{A network is said to be \textit{structurally balanced} \cite{altafini2012consensus} if there exists a bipartition of agents (called \textit{clusters}) such that edge weights within agents of a cluster are positive and negative between the agents of two clusters.} agents of a cluster reach to a state which is negative of the state reached by agents of the other cluster and \cite{Liu17} showed that all agents achieve \textit{zero consensus} if the network is structurally unbalanced.

In all the aforementioned discussion, consensus is studied on a single attribute represented by a scalar state. Consensus on multiple attributes represented by vector states on matrix-weighted networks with edge weight being a non-negative definite matrix has been studied for CT systems in \cite{Trinh18} and for DT systems in \cite{Tran21}. Authors in \cite{Su23,Miao23} studied the consensus on matrix-weighted networks for CT systems as well as when the system switches between CT and DT. In \cite{Trinh18}, a necessary and sufficient condition was presented to achieve vector consensus on static undirected matrix-weighted networks and extended for time-varying networks in \cite{Tran21}, event triggered networks in \cite{Liu23,Liang24}, and random networks in \cite{Le-Phan24}. 

Vector consensus has also been extended to the cooperative-competitive networks \cite{Pan19,Su20,Pan22} in which cooperative agents have non-negative definite edge weight matrices between them and competitive agents have non-positive definite matrices. Bipartite consensus results and network conditions are studied in \cite{Pan19,Su20,Pan22} which show that the existence of a positive-negative spanning tree\footnote{A positive-negative spanning tree is a spanning tree in which each edge weight is either positive or negative definite matrix \cite{Tran21}.} is necessary for achieving bipartite consensus. All these works study the properties of null space of Laplacian matrix of the network to obtain the consensus or bipartite consensus results. Authors in \cite{Foight20} study the $\mathcal{H}_2$ norm of the networks to study the dynamics of consensus in matrix-weighted networks.

Our aim is to analyze vector consensus in a matrix-weighted network with the help of stochastic random matrix convergence theory. In particular, we study the vector consensus over matrix-weighted network when only one randomly chosen agent updates its state at a time. Closest to our model is the system model studied in \cite{Le-Phan24} which studies the asynchronous time systems in which two randomly selected agents updating their states by exchanging information at a time. While results of \cite{Le-Phan24} give convergence in expectation (using random matrix convergence theory), we present the almost sure convergence results. In doing so, we also derive a result on nonhomogenous matrix products which can be of independent interest.

\subsection{Main contributions}
\label{sec:contibutions}
The contributions of this paper are as follows:
\begin{enumerate}
	\item We study the discrete-time matrix-weighted MAS for distributed vector consensus problem in asynchronous update model using stochastic matrix convergence theory. We show that global consensus is achieved asymptotically almost surely when all edges have positive definite weight matrices (Theorem \ref{th:random}). We also prove a result (Lemma~\ref{lm:cauchy}) on the convergence of infinite products of matrices of the form $(A+B)$ which can be of independent interest in nonhomogeneous matrix product theory \cite{Hartfiel02}. To the best of our knowledge, this is the first work on vector consensus using matrix product convergence theory.
	\item We extend the results to cooperative-competitive networks showing bipartite vector consensus. We provide a necessary (Theorem~\ref{th:bipartite}) and sufficient (Theorem~\ref{thm:unbalanced} and Corollary \ref{cor:necessary_sufficient}) network condition to get bipartite vector consensus and present a simple proof for the sufficient condition by relating bipartite to the global consensus over network. We also analyze the network for zero consensus.
\end{enumerate} 

The remainder of this paper is organized as follows: In Section~\ref{sec:model}, we introduce the system model, along with the synchronous and asynchronous state update rules. Section~\ref{sec:matrix_convergence} presents key results in matrix convergence theory which are later applied to prove the global consensus results in Section~\ref{sec:async_result}. In Section~\ref{sec:bipartite}, we discuss the concept of structurally balanced networks and provide results for bipartite consensus. In Section \ref{sec:zero}, we present the network conditions for achieving zero consensus. Numerical examples that illustrate the theoretical findings are provided in Section~\ref{sec:simulation}. Finally, we conclude with a discussion in Section~\ref{sec:conclusion}.

\subsection{Notations}
\label{sec:notations}
We briefly present some useful notations used in this work. A zero matrix of dimension $a \times b$ is denoted by $\mathbf{0}_{a \times b}$ and a zero vector of dimension $d$ is denoted by $\mathbf{0}_d.$  A $d \times d$ identity matrix is denoted by $\mathbf{I}_d.$ A $d$-dimensional vector of ones is denoted by $\mathbf{1}_d.$ All the vectors considered here are column vectors unless otherwise specified. For simplicity of notations, we do not use the dimension subscript in these notations where ever it is clear. A symmetric matrix $S \in \mathbb{R}^{n \times n}$ is positive definite, if $y^TSy>0$ for all non-zero $y \in \mathbb{R}^n.$ The matrix $S$ is called negative definite if $-S$ is positive definite. A matrix-sign function $\text{sgn}(S) = 1$ if $S$ is positive definite. Similarly $\text{sgn}(S) = -1$ if $S$ is negative definite, and $\text{sgn}(S) = 0$ if $S = \mathbf{0}_{n \times n}.$ The $(i,j)$-th element of a matrix $S$ is denoted by $S^{(i,j)}.$ The infinite matrix norm for any matrix $S_{a\times b}$ is defined as the maximum absolute row sum of the matrix and is denoted by $||S|| = \max_{1\leq i \leq a} \sum_{j=1}^{b}|S^{(i,j)}|.$ A square matrix is said to be \textit{stochastic} if its every row is a probability vector.

\section{System Model}
\label{sec:model}

We represent a MAS by a matrix-weighted directed graph $G = (V, E, \mathcal{W})$ where $V$ is the set of $n$ agents and $E$ is the set of communication edges between them. An edge $(i,j) \in E$ if and only if an agent $i$ can receive information directly from an agent $j.$ The set of in-neighbors of an agent $i \in V$ is denoted by $\mathcal{N}_i = \{j \in V |(i, j) \in E\}.$ We assume that the graph $G$ has a directed spanning tree (also called rooted out-branching)\footnote{A graph is said to have a directed spanning tree if there exists an agent which has a directed path to every other agent.}. Each edge $(i,j) \in E$ has a weight associated with it which is denoted by a symmetric matrix $W_{ij} \in \mathbb{R}^{d \times d}.$ When $d = 1$ the weight is a scalar. The interaction weight matrix $\mathcal{W} = [W_{ij}] \in \mathbb{R}^{dn \times dn}$ is a block matrix such that the block identified by rows from $r_i=i+(i-1)(d-1)$ to $r_i+(d-1)$ and columns from $c_j=j+(j-1)(d-1)$ to $c_j+(d-1)$ is the matrix $W_{ij}.$ We assume that $W_{ii} = \mathbf{0}_{d \times d}, \forall i \in V.$ 

Each agent $i \in V$ maintains a $d$-dimensional state vector denoted by $X_i(t) \in \mathbb{R}^d$ at all times $t \in \mathbb{N}^+$ with $X_i(0)$ being the initial state vector. At each discrete time step $t \in \mathbb{N}^+,$ agents update their state vectors by linearly combining states of their neighbors. Let $\mathcal{X}(t):=\left[\mathcal{X}_1(t) ~ \mathcal{X}_2(t) ~ \dots ~ \mathcal{X}_d(t)\right]^T$ be a vector, where, $\mathcal{X}_k(t)$ be the $n$-dimensional vector consisting of $k^{th}$ dimension of the state vector of each agent at time $t.$ The consensus of state vectors is said to be achieved when all agents converge to same state vector. More formally,
\begin{definition}
	A network is said to achieve global consensus if $\lim\limits_{t \rightarrow \infty} \mid X_i(t)- X_j(t) \mid = \mathbf{0}_d,~ \forall i,j \in V.$ 
\end{definition}
The state updates can either happen \textit{synchronously} in which case all the agents update their states simultaneously or \textit{asynchronously} in which case only one agent updates its state at a time. We first present the synchronous update model and then modify it for the asynchronous case.

\subsection{Synchronous update model}
\label{sec:sync_model}

 In synchronous update model, all agents update their state vectors at the same time by weighted combination of their own and their neighbors' state vectors. Such update rule is commonly studied in the distributed scalar consensus literature \cite{olfati2007consensus,nedic2016convergence} and is based on the classical DeGroot model of opinion dynamics \cite{DeGroot74}. Generalization of this to vector states can be written as: $X_i(t+1) = (I_d -\tau \sum\limits_{j \in \mathcal{N}_i}\text{sgn}(W_{ij})W_{ij} X_i(t) + \tau \sum_{j \in \mathcal{N}_i}W_{ij} X_j(t).$ Here $\tau \in \mathcal{R}$ is called the step-size for the update. The range $\mathcal{R}$ is defined as $\mathcal{R} = \left(0,\frac{1}{  \max\limits_{i,k,m} 2\sum\limits_{j \in \mathcal{N}_i}\text{sgn}(W_{ij})W_{ij}^{(k,m)}} \right),$  where $W_{ij}^{(k,m)}$ is the $(k,m)$-th element of $W_{ij}$. As all agents update the states simultaneously, one can write this in matrix form as:
 
 \begin{equation}\label{eq:Update}
 	X(t)=\mathbf{P}X(t-1)=\mathbf{P}^tX(0).
 \end{equation} 
 Here $\mathbf{P}$ is of size   $nd \times nd.$ We rewrite this update rule by clubbing together each dimension of every agent's state vectors as follows for analysis. 
\begin{equation}\label{eq:SyncUpdate}
	\mathcal{X}(t)=\mathcal{F}\mathcal{X}(t-1)=(\mathcal{P}+\mathcal{Q})\mathcal{X}(t-1) = (\mathcal{P}+\mathcal{Q})^t \mathcal{X}(0),
\end{equation}
where, 
\begin{equation}\label{eq:P}
	\mathcal{P}=
	\begin{bmatrix}
		P_1 & \mathbf{0} &  \cdots & \mathbf{0}\\
		\mathbf{0} & P_2 &  \cdots & \mathbf{0}\\
		\vdots & \vdots & \ddots & \vdots \\
		\mathbf{0} & \mathbf{0} & \cdots & P_d
	\end{bmatrix}
\end{equation}
and
\begin{equation}\label{eq:Q}
	\mathcal{Q}=
	\begin{bmatrix}
		\mathbf{0} & Q_{12}&  \cdots & Q_{1d}\\
		Q_{21} & \mathbf{0} & \cdots & Q_{2d}\\
		\vdots & \vdots & \ddots & \vdots \\
		Q_{d1} & Q_{d2} & \cdots & \mathbf{0}
	\end{bmatrix}.
\end{equation}

Here $P_i \in \mathbb{R}^{n \times n}, Q_{ij} \in \mathbb{R}^{n \times n},$ and $\mathbf{0} \in  \mathbb{R}^{n \times n}.$ It is easy to verify that the $(k,m)$-th element of $P_i$ and $Q_{ij}$ matrices are as follows:

\begin{equation}
	P_i^{(k,m)} = \begin{cases}
		1- \tau \sum_{l \in \mathcal{N}_k} \text{sgn}(W_{kl}) W_{kl}^{(i,i)} ~~~ \text{for } k=m,\\
		\tau W_{km}^{(i,i)}\qquad \qquad \qquad \qquad ~~~~~ \text{for } k \neq m.
	\end{cases} \label{eq:Pi}
\end{equation}
\begin{equation}
	Q_{ij}^{(k,m)} = \begin{cases}
		- \tau \sum_{l \in \mathcal{N}_k} \text{sgn}(W_{kl}) W_{kl}^{(i,j)} ~~~ \text{for } k=m,\\
		\tau W_{km}^{(i,j)}\qquad \qquad \qquad \qquad   ~ \text{for } k \neq m.
	\end{cases} \label{eq:Qij}
\end{equation}

\begin{remark}\label{rm:P_stochastic}
	 Note that the structures of $P_i, Q_{ij}$ depend on the matrix sign function of the weight matrices $W_{km}.$
	 When the interaction weight matrix $W_{km}$ is positive definite, we have $\text{sgn}(W_{km}) = 1,$ and each diagonal element $W_{km}^{(i,i)} >0.$ By the definition of step-size, $\tau,$ it follows that every off-diagonal element of $P_i$ is also positive. Furthermore, it is straightforward to verify from \eqref{eq:Pi} that the entries in each row of $P_i$ sum to $1.$ Therefore, $P_i$ is a stochastic matrix with positive diagonal elements leading to matrix $\mathcal{P}$ also being a stochastic matrix with positive diagonal elements. Observe that the row sum of $Q_{ij}$ matrix is zero irrespective of the sign function of $W_{km}$ matrices.
\end{remark}

\subsection{Asynchronous update model}
\label{sec:asyn_model}

In asynchronous update model, only some agents update their states at a time; see \cite{Le-Phan24,Shi15,Boyd06} for some such models. In this work, we study the following asynchronous update model. At any time $t,$ an agent $i \in V$ is picked uniformly at random to update its state while all the other agents keep their states as it is. We assume that the agents are picked for update independently across time. If agent $l$ is picked at time $t$ for update then the update matrix $\mathbf{P}$ of \eqref{eq:Update} can be modified as:
\begin{equation}\label{eq:Update_random}
	X(t+1)=\mathbf{U}(t)X(t) = \prod_{k=0}^t \mathbf{U}(k) X(0)
\end{equation}
with $\mathbf{U}(t) = \mathbf{E}_l\mathbf{E}_l^T\mathbf{P}-\mathbf{E}_l\mathbf{E}_l^T+\mathbf{I} \in \mathbb{R}^{nd \times nd}.$ Here $\mathbf{E}_l \in \mathbb{R}^{nd \times d}$ is a block matrix contains $n$ blocks each of dimension $d \times d$. Note that the blocks are considered row-wise in $\mathbf{E}_l$. The $l^{th}$ block in $\mathbf{E}_l$ is an identity matrix of dimension $d$ and other blocks are zero matrices of dimension $d.$ This ensures that only agent $l$ updates its state and all the other agents' states remain unchanged. Similar to \eqref{eq:SyncUpdate}, one can write the update at time $t$ as:
 
\begin{equation} \label{eq:RandomUpdate1}
	\mathcal{X}(t+1)=\prod_{k=0}^t\mathcal{F}(k)\mathcal{X}(0) = \prod_{k=0}^t(\mathcal{P}(k)+\mathcal{Q}(k))\mathcal{X}(0).
\end{equation} 

\begin{remark}
	\label{rm:structure_P_Q}
Note that matrices $\mathbf{U}(t), \mathcal{P}(t), \mathcal{Q}(t)$ are random matrices as the agent to update its state is chosen randomly at any time $t.$ Structures of $\mathcal{P}(t)$ and $\mathcal{Q}(t)$ are the same as that of \eqref{eq:P} and \eqref{eq:Q}, respectively with the following difference: if an agent $l \in V$ is picked for an update at time $t,$ then the row corresponding to $l$ in $P_i,\forall i \in \{1,\ldots,d\}$ is non-zero with elements given in \eqref{eq:Pi}. All the other rows of $P_i$ have $1$ as their diagonal element and $0$ as off-diagonal elements. Similarly, the rows corresponding to agent $l$ in $Q_{ij}$ have non-zero elements given by \eqref{eq:Qij}, and all other rows are zero. Hence, even in asynchronous update model the random matrics $\mathcal{P}(t), \mathcal{Q}(t)$ have same structural properties as that of $\mathcal{P}, \mathcal{Q}$ matrices in synchronous model.
\end{remark}

Let the probability space generated by the random agent's picking be $(\Omega, \mathbb{F},\mathbb{P}).$ Here $\Omega = \{\omega: \omega=(v_1,v_2,\ldots), v_t \in V, t \in \mathbb{N}^+\},$ $\mathbb{F}$ is the $\sigma$-algebra generated by randomly picked agents, and $\mathbb{P}$ is the probability measure induced on the sample paths in $\Omega.$
\begin{remark}
	\label{rm:finite_choose}
	Recall that the number of agents in the network are $n < \infty.$ Thus each $\mathbf{U}(t)$ $(\text{in turn~} \mathcal{P}(t), \mathcal{Q}(t))$ can be one of the $n$ possible matrices each corresponding to an agent $i \in V$ being chosen at $t$ with probability of picking any agent as $\frac{1}{n}.$ In other words, in asynchronous model the update matrix at any time $t$ is chosen from a finite set with probability of choosing any element of the set being strictly positive.
\end{remark}

In this work, our aim is to analyze the state evolution in asynchronous update model. In the next section, we present some standard results from matrix convergence theory along with a proof of convergence for product of nonhomogenous matrices which are used to analyze the asynchronous model in Section~\ref{sec:async_result}.

\section{Matrix product convergence}
\label{sec:matrix_convergence}

Convergence of matrix products is a well studied area \cite{horn2012matrix, Hartfiel02} which is widely used in analyzing the state evolution in MAS. The following well known results from matrix convergence theory are used to prove our results in Section~\ref{sec:async_result}.

For any $n \times n$ matrix $S$ one can generate a graph $G_S = (V,E_S)$ in which an edge $(i,j) \in E_S$ if and only if $S^{(i,j)}>0.$ We call such a graph as the induced graph of matrix $S.$  
\begin{lemma}\cite{ren2005consensus}\label{lm:rank1convergence}
	Let $A \in \mathbb{R}^{n \times n}$ be a stochastic matrix with positive diagonal elements. Then, $\lim\limits_{t \rightarrow \infty} A^t$ converges to a rank one matrix with all identical rows if and only if the induced graph of $A$ has a spanning tree. 
\end{lemma}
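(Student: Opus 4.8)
The plan is to establish Lemma~\ref{lm:rank1convergence} by reducing it to the two classical spectral conditions under which the powers of a nonnegative matrix converge to a rank-one matrix, and then reading off each of those conditions from the hypotheses. Since $A$ is stochastic we have $\rho(A)=1$ and $A\mathbf 1=\mathbf 1$, and since all entries of $A^{t}$ lie in $[0,1]$ the sequence $(A^{t})$ is bounded, so every eigenvalue of $A$ of modulus $1$ is semisimple. Granting this, one has the elementary fact that $A^{t}$ converges to a rank-one matrix with identical rows if and only if (i) $1$ is the only eigenvalue of $A$ of modulus $1$, and (ii) $1$ is a \emph{simple} eigenvalue of $A$; in that case $A^{t}\to\mathbf 1 v^{T}$ with $v$ the left Perron vector normalised by $v^{T}\mathbf 1=1$ (sufficiency of (i) and (ii) follows by passing to the Jordan form, where the surviving rank-one term is precisely $\mathbf 1 v^{T}$). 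It therefore suffices to show that, under the standing hypotheses, (i) together with (ii) is equivalent to ``$G_A$ has a directed spanning tree.''

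For the sufficiency direction I would get (i) from the positive diagonal: setting $\alpha=\min_i A^{(i,i)}\in(0,1)$ (if $A=\mathbf I$ the claim is immediate) and writing $A=\alpha\mathbf I+(1-\alpha)B$ with $B:=\tfrac{1}{1-\alpha}(A-\alpha\mathbf I)$, which is again stochastic, every eigenvalue of $A$ has the form $\alpha+(1-\alpha)\mu$ with $|\mu|\le 1$ and hence lies in the closed disc of radius $1-\alpha$ about $\alpha$; this disc meets the unit circle only at $1$, which gives (i). For (ii) I would invoke the standard correspondence, for a finite stochastic matrix, between $\ker(\mathbf I-A)$ and the closed communicating classes of the Markov chain with transition matrix $A$: the geometric multiplicity of the eigenvalue $1$ equals the number of such classes, and by semisimplicity this is also its algebraic multiplicity. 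Existence of a directed spanning tree in $G_A$ is exactly the statement that this chain has a single closed communicating class, which makes $1$ simple; together with (i) this yields $A^{t}\to\mathbf 1 v^{T}$.

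For the necessity direction, suppose $A^{t}\to L$ with $L$ of rank one and all rows equal. If $G_A$ had no spanning tree, the correspondence above would produce at least two closed communicating classes, hence $\dim\ker(\mathbf I-A)\ge 2$, so there is a right eigenvector $u$ for the eigenvalue $1$ that is not a multiple of $\mathbf 1$. Then $A^{t}u=u$ for all $t$ forces $Lu=u$, and since also $L\mathbf 1=\mathbf 1$, the image of $L$ contains the two independent vectors $\mathbf 1$ and $u$, contradicting $\operatorname{rank}L=1$. Hence $G_A$ has a directed spanning tree, which completes the equivalence.

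The step I expect to be the real obstacle is the purely graph-theoretic lemma ``$G_A$ has a directed spanning tree $\iff$ the associated Markov chain has exactly one closed communicating class,'' including matching the edge orientations in the induced-graph convention and using the positive diagonal to kill periodicity inside that class (so that $A^{t}$ genuinely converges rather than merely its Cesàro averages); everything else is routine once (i) and (ii) are in hand. As an alternative to the spectral argument on the sufficiency side, one can proceed via a coefficient of ergodicity: a spanning tree together with a positive diagonal forces some power $A^{N}$ (one may take $N=n-1$, padding short walks with self-loops) to have a strictly positive column, whence the Dobrushin coefficient satisfies $\tau_{1}(A^{N})<1$; submultiplicativity gives $\tau_{1}(A^{Nk})\le\tau_{1}(A^{N})^{k}\to 0$, so $A^{Nk}\to\mathbf 1 v^{T}$, and $A^{Nk+\ell}=A^{\ell}A^{Nk}\to A^{\ell}\mathbf 1 v^{T}=\mathbf 1 v^{T}$ fills in the remaining residues.
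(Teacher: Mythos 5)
The paper offers no proof of this lemma --- it is imported verbatim from \cite{ren2005consensus} --- so there is nothing internal to compare against; the question is only whether your reconstruction is sound, and it is. Your primary route (boundedness of $(A^t)$ forces semisimplicity on the unit circle; the splitting $A=\alpha\mathbf{I}+(1-\alpha)B$ confines the spectrum to a disc meeting the unit circle only at $1$; simplicity of the eigenvalue $1$ is counted by closed communicating classes; necessity via a second independent eigenvector in the range of the idempotent limit) is complete and correct, and your alternative via a positive column of $A^{n-1}$ and the Dobrushin coefficient is essentially the argument Ren and Beard actually give (their proof runs through SIA matrices and ergodicity coefficients), so you have in effect supplied both the textbook spectral proof and the original one. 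The one step you rightly flag as the ``real obstacle'' deserves to be made explicit, because with the wrong orientation the lemma is literally false: if one reads the induced graph with an arc from $i$ to $j$ whenever $A^{(i,j)}>0$ and asks for a root with directed paths to all vertices, then $A=\bigl(\begin{smallmatrix}1/2&0&1/2\\0&1/2&1/2\\0&0&1\end{smallmatrix}\bigr)$ has no such root yet $A^t$ converges to a rank-one matrix with identical rows, while $A=\bigl(\begin{smallmatrix}1/2&1/4&1/4\\0&1&0\\0&0&1\end{smallmatrix}\bigr)$ has such a root yet converges to a rank-two limit. The equivalence ``spanning tree $\iff$ single closed communicating class'' that your proof rests on holds only under the information-flow convention the paper (and \cite{ren2005consensus}) intends, namely an arc from $j$ to $i$ whenever $A^{(i,j)}>0$ (agent $i$ receives from $j$), in which case a root $r$ reaching every vertex means every state of the Markov chain can reach $r$, which pins down a unique closed class. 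With that convention fixed, every step of your argument goes through.
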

\begin{lemma}\cite{horn2012matrix}\label{lm:zeroconvergence}
	Let $A \in \mathbb{R}^{n \times n}$ be a matrix with eigenvalue $\lambda.$ If $\mid \lambda \mid<1$ for all eigenvalues of $A,$ then $A$ is power convergent, i.e., $\lim\limits_{t \rightarrow \infty}A^t=\mathbf{0}.$ 
\end{lemma}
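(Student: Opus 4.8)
The plan is to reduce to the Jordan canonical form and then analyze the powers of a single Jordan block. First I would invoke the Jordan decomposition over $\mathbb{C}$ (the eigenvalues may be complex even though $A$ is real): write $A = PJP^{-1}$ where $J = \mathrm{diag}(J_1,\ldots,J_r)$ is block diagonal and each $J_k$ is a Jordan block of size $m_k$ associated with an eigenvalue $\lambda_k$. Since $A^t = P J^t P^{-1}$ and $J^t = \mathrm{diag}(J_1^t,\ldots,J_r^t)$, it suffices to show $J_k^t \to \mathbf{0}$ for every $k$.

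Next I would expand the power of a single Jordan block. Writing $J_k = \lambda_k I + N$, where $N$ is the nilpotent shift matrix with $N^{m_k} = \mathbf{0}$, and noting that $\lambda_k I$ and $N$ commute, the binomial theorem gives $J_k^t = \sum_{j=0}^{m_k-1} \binom{t}{j}\lambda_k^{t-j} N^j$ for $t \ge m_k$. Hence each entry of $J_k^t$ is a finite sum of terms of the form $\binom{t}{j}\lambda_k^{t-j}$ with $0 \le j \le m_k-1$; since $\binom{t}{j}$ is a polynomial in $t$ of degree $j$ and $|\lambda_k| < 1$, every such term is bounded in absolute value by a quantity of the form $(\text{polynomial in }t)\cdot|\lambda_k|^{t}$, which tends to $0$ as $t \to \infty$ because exponential decay dominates polynomial growth. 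Thus $J_k^t \to \mathbf{0}$ entrywise.

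Finally I would assemble the pieces: since each block satisfies $J_k^t \to \mathbf{0}$, we get $J^t \to \mathbf{0}$, and then $A^t = P J^t P^{-1} \to P\,\mathbf{0}\,P^{-1} = \mathbf{0}$ by continuity of matrix multiplication. The only point requiring a little care — more a matter of bookkeeping than of genuine difficulty — is uniform control of the binomial terms across all blocks: because the number of Jordan blocks and their sizes are fixed (at most $n$), only finitely many polynomial-times-exponential terms appear, so a single estimate of the form $\|J^t\| \le C\,t^{n}\,|\lambda|^{t}$ with $|\lambda| = \max_k |\lambda_k| < 1$ handles them all at once and yields $J^t \to \mathbf{0}$. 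An alternative, slicker route would be to quote the characterization $\rho(A) = \inf\{\|A\| : \|\cdot\|\ \text{a submultiplicative matrix norm}\}$ to choose a norm with $\|A\| < 1$, whence $\|A^t\| \le \|A\|^t \to 0$; but that merely relocates the work into proving the norm characterization, so I would prefer the self-contained Jordan-form argument above.
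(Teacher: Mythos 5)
Your proof is correct: the Jordan-form reduction, the binomial expansion $J_k^t=\sum_{j=0}^{m_k-1}\binom{t}{j}\lambda_k^{t-j}N^j$, and the observation that polynomial growth is dominated by $|\lambda_k|^t$ decay together give exactly the standard argument for this fact. The paper states this lemma without proof, citing Horn and Johnson, and your argument is essentially the proof given in that reference, so there is nothing to reconcile.
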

Now, we prove a result on the infinite product of matrices of the form $(A+B).$

\begin{lemma}
	\label{lm:cauchy}
	Let $C = A+B$ be a positive square matrix and $||.||$ be the infinite matrix norm. If $||A|| \leq 1,$ $\lim\limits_{t \rightarrow \infty} A^t$ exists, and $\lim\limits_{t \rightarrow \infty} B^t = \mathbf{0},$ then $\lim\limits_{t \rightarrow \infty} C^t \triangleq D$ exists, where $D$ is some positive matrix.
\end{lemma}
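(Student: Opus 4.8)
The plan is to show that $(C^{t})_{t\ge 1}$ is a Cauchy sequence in $\mathbb{R}^{n\times n}$ under the infinite matrix norm (a complete normed space), so that $D:=\lim_{t\to\infty}C^{t}$ exists; since each $C^{t}$ is entrywise positive the limit is entrywise nonnegative, and strict positivity is then extracted from the identities $D=CD=DC$. Throughout, $r(\cdot)$ denotes the spectral radius and $A_{\infty}:=\lim_{t}A^{t}$.

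First I would collect the elementary consequences of the hypotheses. Submultiplicativity and $\|A\|\le 1$ give $\|A^{t}\|\le 1$ for all $t$, hence $\|A_{\infty}\|\le 1$, and convergence of $A^{t}$ yields $\sup_{s\ge 0}\|A^{t+s}-A_{\infty}\|\to 0$; in particular $\|A^{j}-A_{\infty}\|$ is arbitrarily small for $j$ large. Since $B^{t}\to\mathbf{0}$, the spectral radius of $B$ is strictly below $1$. The remaining, and in my view pivotal, ingredient is the uniform bound $\sup_{t}\|C^{t}\|<\infty$: this is where the structure of $C$ must be used, since in the settings where the lemma is applied $C$ is a row-(sub)stochastic matrix and so $\|C\|\le 1$, whence $\|C^{t}\|\le 1$ for all $t$ and $r(C)\le 1$ by Gelfand's formula. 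I expect this boundedness to be the main obstacle, because a positive matrix $C=A+B$ satisfying all three listed hypotheses can still have $r(C)>1$, so the statement must be read together with the substochasticity of $C$ coming from the model; granting it, and using that a positive matrix is primitive, $r(C)$ is a simple eigenvalue strictly dominant over all others.

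To conclude I would use the Perron decomposition $C=r(C)\,\Pi+N$, where $\Pi$ is the rank-one Perron spectral projection and $N=C(I-\Pi)$ has $r(N)<r(C)\le 1$; then $\Pi^{2}=\Pi$, $\Pi N=N\Pi=\mathbf{0}$, so $C^{t}=r(C)^{t}\Pi+N^{t}$. If $r(C)=1$ this converges to $\Pi$ because $N^{t}\to\mathbf{0}$ by Lemma~\ref{lm:zeroconvergence}; if $r(C)<1$ both terms vanish and $C^{t}\to\mathbf{0}$. The $A+B$ splitting enters as the natural alternative, cleanest when $r(C)<1$: setting $E_{t}:=C^{t}-A^{t}$, from $C^{t+1}=(A+B)C^{t}$ and $A^{t+1}=A\,A^{t}$ one gets $E_{t+1}=AE_{t}+BC^{t}$ with $E_{0}=\mathbf{0}$, hence $E_{t}=\sum_{k=0}^{t-1}A^{\,t-1-k}BC^{k}$, and splitting this convolution into a fixed-length tail $k>t-M$ (finitely many terms, each of size $O(\|C^{k}\|)$) and a head $k\le t-M$ on which $A^{\,t-1-k}$ is replaced by $A_{\infty}$ up to error $\sup_{j\ge M}\|A^{j}-A_{\infty}\|\cdot\|C^{k}\|$, the geometric decay of $\|C^{k}\|$ makes $E_{t}\to A_{\infty}\sum_{k\ge 0}BC^{k}$, so $C^{t}$ converges. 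Finally, $D=\lim_{t}C^{t}$ exists and is entrywise nonnegative; from $D=CD=DC$ and $C>\mathbf{0}$, any nonzero row of $D$ is forced to be strictly positive, and this propagates so that $D>\mathbf{0}$ whenever $D\ne\mathbf{0}$, i.e.\ whenever $r(C)=1$, which is the content of the assertion that $D$ is a positive matrix.
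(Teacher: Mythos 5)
Your central observation is correct and identifies a real problem: the three stated hypotheses do not control $\sup_t\|C^t\|$, and the lemma as written is false. Take $A=\mathbf{I}_n$ and $B=\tfrac{1}{2n}\mathbf{1}_n\mathbf{1}_n^T$: then $\|A\|=1$, $A^t\to\mathbf{I}_n$, $\|B^t\|=2^{-t}\to 0$, and $C=A+B$ is entrywise positive, yet $C\mathbf{1}_n=\tfrac{3}{2}\mathbf{1}_n$, so $C^t$ diverges. The paper's own proof hides this at step (a): it expands $(A+B)^{t-r}-A^{t-r}$ into only $t-r$ cross terms, which is the binomial identity and requires $AB=BA$; for noncommuting $A,B$ there are $2^{t-r}-1$ words containing a $B$, and the resulting bound $(1+\delta)^{t-r}-1$ is not small. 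It also assumes $\|B\|\le\delta<1$ at step (d), which does not follow from $B^t\to\mathbf{0}$. So you were right to insist that the missing ingredient $\|C\|\le 1$ (equivalently $r(C)\le 1$) must be imported from the model, where $C$ is a row-stochastic update matrix.

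Granting that, your route is genuinely different from the paper's and is correct: entrywise positivity makes $C$ primitive, the Perron decomposition $C^t=r(C)^t\Pi+N^t$ with $r(N)<r(C)$ gives convergence to $\Pi$ (strictly positive, being the outer product of positive left and right Perron vectors) when $r(C)=1$, and to $\mathbf{0}$ when $r(C)<1$, with Lemma~\ref{lm:zeroconvergence} disposing of $N^t$. Note, however, that this argument never uses $A$ or $B$; the paper's Cauchy-sequence argument with the interpolant $A^{t-r}(A+B)^r$ is clumsier precisely because it is designed to be replayed for the time-varying product $\prod_k(\mathcal{P}(k)+\mathcal{Q}(k))$ in Lemma~\ref{lm:prod_F(t) convergence}, where single-matrix spectral theory is unavailable, so your Perron argument would not transfer there. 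Two smaller caveats: your convolution identity $E_t=\sum_{k=0}^{t-1}A^{t-1-k}BC^k$ only closes the argument when $\|C^k\|$ decays geometrically, i.e.\ $r(C)<1$, in which case $D=\mathbf{0}$ and the claimed positivity of $D$ fails anyway; and in the actual application $C=\mathcal{P}+\mathcal{Q}$ is not entrywise positive (the diagonal entries of the blocks of $\mathcal{Q}$ are negative), so primitivity cannot be invoked there either --- a defect of the lemma's hypotheses rather than of your argument.
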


\begin{proof}
	We will prove that $\lim\limits_{t \rightarrow \infty} C^t$ exists by showing that the matrix sequence $J_t= C^t = (A+B)^t, \forall t \in \mathbb{N}^{+}$ is a Cauchy sequence. Let $J_{t,r}=A^{t-r}(A+B)^r$ for some $t > r,$ where $r \in \mathbb{N}^{+}$ is finite. By triangular inequality of the matrix norm, for some $t, s > r,$
	\begin{equation}
		||J_t-J_s|| \leq ||J_t-J_{t,r}||+||J_{t,r}-J_{s,r}||+||J_{s,r}-J_s||. \label{eq:J_expand}
	\end{equation}
	We will first show that for sufficiently large $t,s$ and $r,$ each term of the right side of \eqref{eq:J_expand} is less than $\epsilon/3$ for some constant $\epsilon>0.$ Expanding the first term and applying triangular inequality we get,
	\begin{align*}
		||J_t-J_{t,r}|| & = ||(A+B)^t-A^{t-r}(A+B)^r|| \\
		& \leq ||(A+B)^r||~||(A+B)^{t-r} - A^{t-r}||.
	\end{align*}
	For a fixed $r,$ $||(A+B)^r||$ is a constant. The second term here can be written as:
	\begin{align*}
		||(A+B)^{t-r} - A^{t-r}|| & \stackrel {\text{(a)}} {=} || A^{t-r-1}B + \ldots +B^{t-r} || \\
		& \stackrel {\text{(b)}} {\leq} || A^{t-r-1}B||+ \ldots +||AB^{t-r-1}|| \\&  +||B^{t-r}|| \\
		& \stackrel {\text{(c)}} {\leq} ||B|| + ||B^2||+ \ldots +||B^{t-r}|| \\
		& \stackrel {\text{(d)}} {\leq} \epsilon_1.
	\end{align*}
	Here the the equality (a) is by expanding the matrix term $(A+B)^{t-r},$ inequalities (b) and (c) are due to triangular inequality and the fact that $||A|| = 1$ respectively. For the inequality (d), note that there exist $\delta \in (0,1)$ such that $||B|| \leq \delta.$ As $\lim\limits_{t \rightarrow \infty} B^t \rightarrow \mathbf{0},$ there exist $t_1$ such that $||B^{t-r-s}-\mathbf{0}|| \leq \delta$ for all $t-r-s \geq t_1, ~\forall s=0,1,\ldots,(-r+t-1).$ By properly choosing $r,$ we get $||(A+B)^r||~\epsilon_1 \leq \epsilon/3.$ Which implies, $||J_t-J_{t,r}|| \leq \epsilon/3.$ Similarly, one can prove $||J_{s,r}-J_s|| \leq \epsilon/3.$ Now consider the second term on the right hand side of ~\eqref{eq:J_expand}.
	\begin{align*}
		||J_{t,r}-J_{s,r}|| & = ||A^{t-r}(A+B)^r-A^{s-r}(A+B)^r|| \\
		&=||(A+B)^r \left(A^{t-r}-A^{s-r}\right)||\\
		& \leq ||(A+B)^r||~||A^{t-r}-A^{s-r}||.
	\end{align*}
	For a fixed $r,$ $||(A+B)^r||$ is a constant. As $\lim\limits_{t \rightarrow \infty} A^t$ exists, $\exists ~t_2$ and $\epsilon_2>0$ such that $||A^{t-r}-A^{s-r}|| < \epsilon_2$ for all $t-r > t_2 \text{~and~} s-r>t_2.$ By properly choosing $r,$ we get $||(A+B)^r||~\epsilon_2 < \epsilon/3.$ Which implies, $||J_{t,r}-J_{s,r}|| < \epsilon/3.$ Hence using \eqref{eq:J_expand}, $||J_t-J_s|| < \epsilon$ which implies, $J_t$ is a Cauchy sequence. As the matrix product space with norm $||.||$ is complete, this shows that the Cauchy sequence $J_t$ converges thus proving the result.
\end{proof}

\begin{remark}
	Lemma~\ref{lm:cauchy} shows the convergence of infinite product of matrices of the form $(A+B)$ when $A$ and $B$ hold certain properties. Such matrix products are studied in nonhomogenous matrix theory which has applications in Markov chains and graphics; see \cite{Hartfiel02} for more details.  
\end{remark}

The following lemma shows that the induced graph of the product of matrices has more communication edges than the union of induced graphs of individual matrices.
\begin{lemma}(\cite{mounika2021opinion})\label{lm:async_induced}
	Let $A(1),A(2),\ldots,A(k)$ for some $k \geq 1$ be a sequence of stochastic matrices with positive diagonal elements. Then $\cup_{j=1}^{k}G_{A(j)} \subseteq G_{A(k)A(k-1)\ldots A(1)}.$
\end{lemma}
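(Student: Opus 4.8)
The plan is to induct on $k$, with the only preliminary observation being that the class of stochastic matrices with positive diagonal entries is closed under multiplication: if $X,Y$ are nonnegative with positive diagonals, then $(XY)^{(i,i)}=\sum_m X^{(i,m)}Y^{(m,i)}\ge X^{(i,i)}Y^{(i,i)}>0$, and a product of stochastic matrices is stochastic. The base case $k=1$ is the trivial inclusion $G_{A(1)}\subseteq G_{A(1)}$. For the inductive step, set $C=A(k-1)A(k-2)\cdots A(1)$ and $B=A(k)C$; by the remark just made, $C$ is stochastic with positive diagonal, and by the inductive hypothesis $\bigcup_{j=1}^{k-1}G_{A(j)}\subseteq G_C$.

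It then suffices to show $G_{A(k)}\cup G_C\subseteq G_B$. Since every entry of $A(k)$ and of $C$ is nonnegative, each term of $B^{(i,l)}=\sum_m A(k)^{(i,m)}C^{(m,l)}$ is nonnegative, so the sum is bounded below by any single summand. If $(i,l)\in G_{A(k)}$, i.e.\ $A(k)^{(i,l)}>0$, then keeping the term $m=l$ gives $B^{(i,l)}\ge A(k)^{(i,l)}C^{(l,l)}>0$ because $C$ has positive diagonal, so $(i,l)\in G_B$. If $(i,l)\in G_C$, i.e.\ $C^{(i,l)}>0$, then keeping the term $m=i$ gives $B^{(i,l)}\ge A(k)^{(i,i)}C^{(i,l)}>0$ because $A(k)$ has positive diagonal, so again $(i,l)\in G_B$. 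Combining, $\bigcup_{j=1}^{k}G_{A(j)}=G_{A(k)}\cup\bigl(\bigcup_{j=1}^{k-1}G_{A(j)}\bigr)\subseteq G_{A(k)}\cup G_C\subseteq G_B$, which closes the induction.

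As an alternative one can avoid the induction and argue directly by expanding $\bigl(A(k)\cdots A(1)\bigr)^{(i,l)}$ as a sum over length-$k$ walks $i=v_0,v_1,\dots,v_k=l$ whose $m$-th step carries weight $A(k-m+1)^{(v_{m-1},v_m)}$: given $A(j)^{(i,l)}>0$, the walk that stays at $i$ for the first $k-j$ steps, jumps $i\to l$ at step $k-j+1$, and stays at $l$ for the remaining steps contributes a strictly positive product (the "stay" steps use the positive diagonals), forcing the $(i,l)$ entry to be positive. I do not expect a real obstacle here; the only point needing care is bookkeeping which factor of the product the edge of $A(j)$ gets inserted into, together with the elementary fact that multiplication preserves positivity of the diagonal — exactly the two ingredients isolated above.
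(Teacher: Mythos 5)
Your proof is correct. The paper does not prove this lemma at all --- it imports it by citation from \cite{mounika2021opinion} --- so there is no in-paper argument to compare against; your induction (using closure of stochastic matrices with positive diagonals under multiplication, then lower-bounding $B^{(i,l)}=\sum_m A(k)^{(i,m)}C^{(m,l)}$ by the $m=l$ term when $(i,l)\in G_{A(k)}$ and by the $m=i$ term when $(i,l)\in G_C$) is the standard and complete way to establish it.
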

In other words, Lemma~\ref{lm:async_induced} says that the union of induced graphs of individual matrices is a subgraph of the induced graph of product of these matrices. This is used to prove the following result about existence of a spanning tree in the finite union of the induced graphs of update matrices. 

\begin{lemma}\label{lm:async_sp}
	Let $G = (V, E, \mathcal{W})$ be a matrix-weighted network with a spanning tree and positive definite edge weights. If agents update their states using \eqref{eq:RandomUpdate1}, then the induced graph of $H_i=\lim \limits_{t \rightarrow \infty} \prod_{k=0}^tP_i(k), \forall i \in \{1,2,\ldots,d\}$ has a spanning tree almost surely.
\end{lemma}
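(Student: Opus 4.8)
The plan is to reduce the claim to the $n\times n$ diagonal blocks of $\mathcal P(k)$, show that almost surely the supports of the finite matrix products already contain the whole communication graph $G$, and then upgrade this to the infinite product; this last step is the delicate one, since the spanning-tree property of the finite products need not survive a limit in which entries vanish.

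First I would fix the block structure. By Remark~\ref{rm:structure_P_Q}, $\prod_{k=0}^{t}\mathcal P(k)$ is block diagonal with blocks $\Pi_i(t):=\prod_{k=0}^{t}P_i(k)$, $i=1,\dots,d$, so it suffices to show that for each $i$ the induced graph of $H_i:=\lim_{t\to\infty}\Pi_i(t)$ has a spanning tree. Using \eqref{eq:Pi}, positive definiteness of the weights (whence $\text{sgn}(W_{ij})=1$ and $W_{ij}^{(i,i)}>0$ on every edge) and $\tau\in\mathcal R$, each $P_i(k)$ is a stochastic matrix with strictly positive diagonal and nonnegative entries, and if agent $l$ is the one picked at time $k$ then $G_{P_i(k)}$ is the self-loop at every vertex together with the edges $(l,m)$, $m\in\mathcal N_l$. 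Next I would use the randomness: by Remark~\ref{rm:finite_choose} the updating agent is drawn from $V$ independently and with probability $1/n$ at every step, so by the second Borel--Cantelli lemma every agent is picked infinitely often almost surely, and in particular on a probability-one event there is a finite random time $T$ by which every agent has been picked at least once. On this event $\bigcup_{k=0}^{T}G_{P_i(k)}$ contains, for every $l$, all edges $(l,m)$ with $m\in\mathcal N_l$, hence contains a copy of $G$ (plus self-loops) and therefore has a spanning tree; Lemma~\ref{lm:async_induced} then gives $G_{\Pi_i(T)}\supseteq\bigcup_{k=0}^{T}G_{P_i(k)}$, so $G_{\Pi_i(T)}$ has a spanning tree, and one more application of Lemma~\ref{lm:async_induced} (treating $\Pi_i(T)$ as a single stochastic positive-diagonal factor) shows that $G_{\Pi_i(t)}$ has a spanning tree for every $t\ge T$.

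The main obstacle is to pass from this to $G_{H_i}$, and I would do so by proving that $\Pi_i(t)$ in fact converges to a rank-one stochastic matrix $\mathbf{1}_n h_i^{\top}$. Partition time into consecutive windows of length $n$; each window independently contains an update of every agent with the fixed positive probability $n!/n^{n}$, so almost surely infinitely many windows do, and by Lemma~\ref{lm:async_induced} the product of the $P_i(k)$ over such a window is a stochastic matrix whose induced graph contains $G$ and whose positive entries are bounded below by a constant $\gamma>0$ depending only on the finite set of possible update matrices. A standard weak-ergodicity (coefficient-of-ergodicity) argument for backward products of stochastic matrices then yields $\Pi_i(t)\to\mathbf{1}_n h_i^{\top}$ with $h_i$ a probability vector. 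Since $h_i\neq\mathbf{0}$, some coordinate $w^{*}$ satisfies $(h_i)_{w^{*}}>0$, so every vertex has an edge to $w^{*}$ in $G_{H_i}$; hence $w^{*}$ is reachable from every vertex and $G_{H_i}$ has a spanning tree. As this holds for each $i$ on one probability-one event, the induced graph of $\lim_{t\to\infty}\prod_{k=0}^{t}\mathcal P(k)$ has a spanning tree almost surely.
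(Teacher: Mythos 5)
Your proposal is correct, and its probabilistic core is the same as the paper's: you both partition time into windows of length $n$, lower-bound the probability that a window contains every agent, invoke the second Borel--Cantelli lemma to get infinitely many such windows, and use Lemma~\ref{lm:async_induced} to transfer the spanning tree of $G$ into the induced graphs of the corresponding finite products. Where you genuinely diverge is the last step. The paper stops at the finite products and simply asserts that the induced graph of the \emph{limit} $H=\lim_{t\to\infty}\prod_{k}\mathcal P(k)$ inherits the spanning tree; it never addresses the issue you explicitly flag, namely that entries positive in every finite product could vanish in the limit. Your weak-ergodicity argument (a uniform lower bound $\gamma$ on positive entries, available because the update matrices range over a finite set; scrambling obtained from ``good'' windows; convergence of the backward product to $\mathbf 1_n h_i^{\top}$) closes that gap, and as a by-product essentially proves the paper's Lemma~\ref{lm:random_P} at the same time. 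Two small remarks. First, passing from ``infinitely many window-products have induced graph containing $G$ with entries at least $\gamma$'' to a coefficient of ergodicity strictly below one requires grouping on the order of $n-1$ such windows, since a single stochastic matrix whose graph merely contains a spanning tree need not be scrambling; this deserves a sentence if you write the argument out. Second, your closing inference can be made convention-proof: $\mathbf 1_n h_i^{\top}$ is a stochastic matrix with identical rows that equals all of its own powers, so the ``only if'' direction of Lemma~\ref{lm:rank1convergence} immediately gives a spanning tree in its induced graph, sidestepping any ambiguity about the orientation of the star you construct around the coordinate $w^{*}$.
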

\begin{proof}
We fix an index $i \in \{1,\ldots,d\}$ and prove the result for it. The structure of each $P_i(k)$ matrix depends on the agent picked for update at time $k$ and is same for all $i.$ Recall that the matrix $P_i(k),$ from \eqref{eq:Pi} and Remark~\ref{rm:structure_P_Q}, is a stochastic matrix with positive diagonal elements. Since stochastic matrices with positive diagonal elements are closed under multiplication \cite{horn2012matrix}, $\prod_{k=0}^t P_i(k)$ is a stochastic matrix with positive diagonal elements. Recall that every agent is chosen at any time with positive probability (Remark \ref{rm:finite_choose}). Let $Y_l=P_i(1+(l-1)n)P_i(2+(l-1)n)\ldots P_i(ln)$ for all $l \in \mathbb{N}^{+}$ be a random matrix obtained by multiplying $n,$ $P_i(k)$ matrices. It is easy to observe that the induced graph $G_{Y_l}$ of $Y_l$ depends only of the sequence of agents picked for updates and not on the the index $i.$ According to Lemma \ref{lm:async_induced}, the induced graph $G_{Y_l}$ of $Y_l$ contains more communication edges than the union of induced graphs $\cup_{k=1}^{n}G_{P_i(k+(l-1)n)}.$  The probability that a spanning tree $\mathcal{T}$ is included in the induced graph of $Y_l$ for any $l$ is: $\mathbb{P}(\mathcal{T} \subseteq G_{Y_l}) \geq \mathbb{P}(\mathcal{T} \subseteq \cup_{k=1}^{n}G_{P_i(k+(l-1)n)}) \geq \left( \frac{1}{n} \right)^n.$ Here the last inequality holds by the fact that at every time an agent is selected uniformly at random for update with each selection being independent of previous ones. Let $E_l$ be the event that there is a spanning tree in the induced graph of $Y_l.$ Thus, $\mathbb{P}(E_l) \geq \left( \frac{1}{n} \right)^n>0.$ Note that $\sum_{l=1}^{\infty}\mathbb{P}(E_l) = \infty$ and the events $\left( E_l \right)_{l=1}^{\infty}$ are independent. Hence, by the second Borel-Cantelli lemma, the event $E_l$ occurs infinitely often almost surely, i.e., $\mathbb{P}(E_l \text{~i.o.~})=1.$ Thus, the induced graph of $\lim\limits_{t \rightarrow \infty} \prod_{k=0}^tP_i(k), ~\forall i \in \{1,\ldots,d\}$ has a spanning tree almost surely. 
\end{proof}
\begin{remark}
	\label{rm:samplepath}
	Let $\Omega' \subseteq \Omega$ be the set of sample paths for which Lemma~\ref{lm:async_sp} holds. Then, by Lemma~\ref{lm:async_sp}, $\mathbb{P}(\Omega') =1.$ For any sample path $\omega \in \Omega',$ let $t'(\omega) < \infty$ be the time such that $E_l$ occurs infinitely often almost surely for all $t \geq t'(\omega).$  Implying the induced graph of $\prod_{k=0}^tP_i(k),~\forall i$ has a spanning tree almost surely for all $t \geq t'(\omega).$
\end{remark}

In the next section we use these results to prove convergence of global consensus.
%

\section{Global Consensus in asynchronous model}
\label{sec:async_result}

In this section, we analyze the evolution of state vectors for update rule \eqref{eq:RandomUpdate1} when every edge weight matrix $W_{ij}$ is positive definite in the graph $G.$ To prove the convergence of state vectors, we need to prove the convergence of $\prod_{k=0}^t\mathcal{F}(k)$ which in turn requires the convergence of  $\lim\limits_{t \rightarrow \infty} \prod_{k=0}^t\mathcal{P}(k)$ and $\lim\limits_{t \rightarrow \infty} \prod_{k=0}^t\mathcal{Q}(k)$. Recall that the matrix $\mathcal{F}(k)$ is dictated by the sample path $\omega$ coming from the sample space $\Omega$ and thus can be denoted as $\mathcal{F}_\omega(k).$ We prove the convergence of product of $\mathcal{F}_\omega(k)$ for every sample path $\omega \in \Omega'$ (recall definition of $\Omega'$ from Remark~\ref{rm:samplepath}). For simplicity, we skip the subscript $\omega$ from the matrix notation whenever it clear from the context.
 
We first show that $\prod_{k=0}^t\mathcal{P}(k)$ converges to a block diagonal matrix with a rank-1 matrix in each block. The following result is generalization of a standard result (see Lemma $3.7$ of \cite{ren2005consensus}) on the convergence of product of stochastic matrices, which can be proved by combining Lemma~\ref{lm:async_sp} and property of $\mathcal{P}(k)$ matrices as mentioned in Remark~\ref{rm:structure_P_Q}.
\begin{lemma}
	\label{lm:random_P}
	Let $G = (V, E, \mathcal{W})$ be a matrix-weighted network with a spanning tree and positive definite edge weights. If agents update their states using \eqref{eq:RandomUpdate1}, then for every sample path $\omega \in \Omega',$ $\lim\limits_{t \rightarrow \infty} \prod_{k=0}^t\mathcal{P}(k)$ converges to a block diagonal matrix with a rank-1 matrix in each block of the diagonal.
\end{lemma}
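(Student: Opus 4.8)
The plan is to reduce to the diagonal blocks and then run a coefficient-of-ergodicity argument whose connectivity input is supplied by Lemma~\ref{lm:async_sp}. Since each $\mathcal P(k)=\mathrm{diag}\!\big(P_1(k),\dots,P_d(k)\big)$ is block diagonal, so is the finite product $\prod_{k=0}^{t}\mathcal P(k)$, with $i$-th diagonal block $\Phi_i(t):=\prod_{k=0}^{t}P_i(k)$; hence it suffices to show that for every $i$ the sequence $\Phi_i(t)$ converges to a rank-one $n\times n$ matrix with identical rows. Recall from the proof of Lemma~\ref{lm:async_sp} that each $P_i(k)$ is row-stochastic with strictly positive diagonal, is one of only $n$ possible matrices (Remark~\ref{rm:finite_choose}), and has induced graph as in Remark~\ref{rm:structure_P_Q} (a self-loop at every vertex, together with the out-edges of the picked agent); in particular each $\Phi_i(t)$ is row-stochastic with positive diagonal.

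Next I would establish a uniform contraction. Let $\tau_1(\cdot)\in[0,1]$ denote the Dobrushin coefficient of ergodicity; it is sub-multiplicative on row-stochastic matrices, $\tau_1(A)<1$ iff $A$ is scrambling, and $\tau_1(A)=0$ iff $A$ has identical rows. Fix the window length $m=n^{2}$ and call an $m$-step window \emph{good} if every agent is picked inside each of the $n$ consecutive length-$n$ sub-blocks of the window; then, by Lemma~\ref{lm:async_induced} and the inclusion $G_{NM}\supseteq G_{N}\circ G_{M}$ valid for row-stochastic $N,M$ with positive diagonal, the product of the $P_i(k)$ over a good window has induced graph containing the $n$-fold composition of $\big(G\cup\{\text{self-loops}\}\big)$; since $G$ has a spanning tree this composition carries the root to every vertex, so that product has a strictly positive column and is scrambling --- this is exactly the combinatorial step inside the proof of Lemma~\ref{lm:rank1convergence}. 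There being only finitely many good-window products, their Dobrushin coefficients are all at most some $\rho<1$. By the Borel--Cantelli argument of Lemma~\ref{lm:async_sp} (disjoint $m$-step windows are independent, each good with probability at least $(n!/n^{n})^{n}>0$), almost surely infinitely many disjoint good windows occur; let $k(t)\to\infty$ count those completed by time $t$. Writing $\Phi_i(t)$ as an alternation of these scrambling window-products with arbitrary row-stochastic positive-diagonal ``junk'' factors and using sub-multiplicativity gives $\tau_1\!\big(\Phi_i(t)\big)\le\rho^{\,k(t)}\to0$; note the gaps between good windows, however long, are harmless because the junk factors contribute only factors $\le1$.

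Convergence then follows from a standard hull-shrinking argument: each row of $\Phi_i(t+1)=P_i(t+1)\Phi_i(t)$ is a convex combination of the rows of $\Phi_i(t)$, so the convex hulls of the row sets form a nested decreasing sequence whose $\ell_1$-diameter equals $2\tau_1\!\big(\Phi_i(t)\big)\to0$; these hulls therefore collapse to a single point $v_i$, i.e.\ $\Phi_i(t)\to\mathbf 1\,v_i^{T}$, a rank-one matrix with identical rows. Reassembling the $d$ blocks yields the claimed block-diagonal limit with a rank-one matrix on each diagonal block.

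The step I expect to be the main obstacle is the uniform contraction: Lemma~\ref{lm:async_sp} on its own delivers only \emph{connectivity} (a spanning tree in the induced graph of a single window's product), whereas the ergodicity argument needs an honest \emph{contraction}, i.e.\ a scrambling product with $\tau_1<1$. Closing this gap is what forces one to compose on the order of $n$ good windows and to invoke the standard fact --- the combinatorial heart of Lemma~\ref{lm:rank1convergence} --- that the $n$-fold composition of a self-looped digraph possessing a spanning tree carries the root to every vertex, so the associated product acquires a strictly positive column; it also matters here that the update matrices come from a finite set, so the scrambling contributions do not decay even when good windows are far apart. Once such a $\rho<1$ is available, the rest is a routine adaptation of the proof of Lemma~\ref{lm:rank1convergence}.
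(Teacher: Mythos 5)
Your proof is correct, and it is considerably more explicit than what the paper actually provides: the paper disposes of this lemma in one sentence, asserting that it "can be proved by combining" Lemma~\ref{lm:rank1convergence} and Lemma~\ref{lm:async_sp} in the manner of Lemma~3.7 of the Ren--Beard reference, and gives no argument. The substance is the same --- repeated connectivity over bounded windows, drawn from a finite set of update matrices, forces the backward product of the row-stochastic positive-diagonal blocks $P_i(k)$ to a rank-one limit --- but you supply the machinery (Dobrushin coefficient, scrambling good-window products, sub-multiplicativity, nested-hull collapse) that the paper delegates to a citation. Your closing remark about the "main obstacle" is exactly the right diagnosis of why the paper's sketch is thin as stated: Lemma~\ref{lm:rank1convergence} concerns powers $A^t$ of a \emph{single} matrix, and Lemma~\ref{lm:async_sp} by itself only delivers a spanning tree in the induced graph (infinitely often), which is connectivity rather than contraction; the two missing ingredients are (i) composing order-$n$ connected windows to obtain a positive column, hence a scrambling product, and (ii) the finiteness of the set of possible window-products, which bounds the contraction coefficient uniformly away from $1$ so that arbitrarily long gaps between good windows cannot spoil convergence. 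Both are standard, both are needed, and your write-up includes both, so your version is, if anything, the more complete proof of the two.
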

\begin{proof}
	On any sample path $\omega \in \Omega',$ by the definition of matrices $\mathcal{P}(k)$ (see Remark~\ref{rm:structure_P_Q} and\eqref{eq:P}) observe that $\prod_{k=0}^t\mathcal{P}(k)$ is a block diagonal matrix with each block $i \in \{1,\ldots,d\}$ being $\prod_{k=0}^t P_i(k).$ We skip the notation $\omega$ for simplicity. By Remark~\ref{rm:structure_P_Q}, for any $i \in \{1,\ldots,d\}$ and for any $k \in \mathbb{N}^+,$ $P_i(k)$ matrix is stochastic with positive diagonal elements thus $\prod_{k=0}^t P_i(k)$ is also a stochastic matrix with positive diagonal elements. By Lemma~\ref{lm:async_sp}, the induced graph of $\lim\limits_{t \rightarrow \infty} \prod_{k=0}^t P_i(k)$ has a spanning tree on the sample path $\omega.$ Thus, by Lemma~3.9 of \cite{ren2005consensus}, $\lim\limits_{t \rightarrow \infty} \prod_{k=0}^t P_i(k)$ converges to a rank-1 matrix. This proves the result.  
\end{proof}

Now, we show that the product $\prod_{k=0}^t\mathcal{Q}(k)$ converges to a zero matrix for a sample path $\omega \in \Omega'$ whose proof is given in Appendix \ref{sec:app_a}.
\begin{lemma}
	\label{lm:random_Q}
	Let $G = (V, E, \mathcal{W})$ be a matrix-weighted network with a spanning tree and positive definite edge weights. If agents update their states using \eqref{eq:RandomUpdate1}, then for every sample path $\omega \in \Omega',\lim\limits_{t \rightarrow \infty} \prod_{k=0}^t\mathcal{Q}(k)$ converges almost surely to a zero matrix.
\end{lemma}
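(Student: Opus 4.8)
The plan is to show that $\prod_{k=0}^{t}\mathcal{Q}(k)$ vanishes by exploiting the relationship between $\mathcal{P}(k)$ and $\mathcal{Q}(k)$ induced by the consensus structure. The key observation is that for each fixed choice of updating agent $l$, the matrix $\mathcal{F}(k)=\mathcal{P}(k)+\mathcal{Q}(k)$ has a very rigid row structure (Remark~\ref{rm:structure_P_Q}): only the $n$ rows corresponding to agent $l$ (one in each of the $d$ blocks) are ``active,'' and on those rows $\mathcal{F}(k)$ acts exactly like the synchronous update, which by construction of the step-size $\tau$ has nonnegative entries summing to $1$ across the whole row. In particular, $\mathcal{F}(k)$ is itself stochastic with positive diagonal, and $\mathcal{Q}(k)$ is the ``off-block'' part of this stochastic matrix whose row sums are exactly cancelled by the diagonal-block correction terms in $\mathcal{P}(k)$. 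So the first step is to record that $\mathcal{F}(k)=\mathcal{P}(k)+\mathcal{Q}(k)$ is stochastic with positive diagonal entries, and that $\|\mathcal{P}(k)\|\le 1$ with the active rows of $\mathcal{P}(k)$ having row sum strictly less than $1$ whenever the corresponding agent has an incident edge whose weight matrix has a nonzero off-diagonal entry in the relevant block.

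Next I would set up the comparison with Lemma~\ref{lm:cauchy}. Write $C(t):=\prod_{k=0}^{t}\mathcal{F}(k)$, which converges on $\Omega'$ by Lemmas~\ref{lm:random_P} and the forthcoming global-consensus argument (or, more self-containedly, note $\prod\mathcal{F}(k)$ converges because it is a product of stochastic matrices with positive diagonal whose union of induced graphs eventually contains a spanning tree, plus a Cauchy argument in the spirit of Lemma~\ref{lm:cauchy}). The heart of the matter is that $\prod_{k=0}^{t}\mathcal{F}(k)-\prod_{k=0}^{t}\mathcal{P}(k)$ must converge, and that the limit of $\prod\mathcal{P}(k)$ is block-diagonal with rank-one diagonal blocks (Lemma~\ref{lm:random_P}) while the limit of $\prod\mathcal{F}(k)$ has all identical rows across the \emph{entire} $nd\times nd$ matrix (this is what global consensus means). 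Subtracting, the off-diagonal blocks of $\lim\prod\mathcal{F}(k)$ coincide with the off-diagonal blocks of $\lim\prod\mathcal{P}(k)$, which are zero; and the diagonal blocks must agree too. The cleanest route, though, is direct: expand $\prod_{k=0}^{t}\mathcal{F}(k)=\prod_{k=0}^{t}(\mathcal{P}(k)+\mathcal{Q}(k))$ and argue that every term in the expansion that contains at least one factor of $\mathcal{Q}$ contributes a row sum that decays, because a $\mathcal{Q}(k)$ factor sitting to the right of a long product of $\mathcal{P}$'s gets multiplied into the rank-one limit and the $\mathcal{Q}$-row-sums are $(-1)\times$(diagonal correction), which telescopes against the sub-stochasticity.

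A more robust version of the argument, and the one I would actually write, is this: show that $\|\mathcal{Q}(k)\|$ is bounded and that the product $\prod_{k=0}^{t}\mathcal{P}(k)$, while not going to zero, has the property that its \emph{row sums} stay equal to $1$ but any specific coordinate's ``leakage'' into the off-block positions shrinks. Concretely, consider $\mathcal{R}(t):=\prod_{k=0}^{t}\mathcal{P}(k)+\text{(corrections)}$ and bound $\|\prod_{k=0}^{t}\mathcal{Q}(k)\|$ by $\prod$ of norms along sub-products where the spanning-tree event $E_l$ of Lemma~\ref{lm:async_sp} occurs: each block of $n$ consecutive updates in which every agent is picked contracts the relevant norm by a factor $\rho<1$ depending on $\tau$ and the smallest off-diagonal magnitude of the $W_{ij}$'s. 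Since on $\Omega'$ infinitely many such blocks occur (Remark~\ref{rm:samplepath}), $\|\prod_{k=0}^{t}\mathcal{Q}(k)\|\le \rho^{m(t)}\to 0$ where $m(t)\to\infty$ a.s.

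The main obstacle I anticipate is making the contraction factor $\rho<1$ for $\prod\mathcal{Q}(k)$ precise and uniform: unlike $\prod\mathcal{P}(k)$, which is a product of honest stochastic matrices, $\mathcal{Q}(k)$ has negative entries and zero rows, so $\|\mathcal{Q}(k)\|$ alone need not be less than $1$, and one cannot simply multiply norms. The resolution is to couple $\mathcal{Q}$ with $\mathcal{P}$: bound $\|\mathcal{Q}(k_1)\mathcal{Q}(k_2)\cdots\|$ not factor-by-factor but by inserting the identity $\mathcal{F}=\mathcal{P}+\mathcal{Q}$ and using that $\mathcal{F}$-products converge to a matrix whose rows are constant, so the difference $\prod\mathcal{F}-\prod\mathcal{P}$ — which is exactly a sum of products containing $\mathcal{Q}$ factors — converges, and comparing its off-block structure to Lemma~\ref{lm:random_P} forces $\prod\mathcal{Q}(k)\to\mathbf{0}$. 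I would present this coupling cleanly and relegate the entry-wise bookkeeping to the appendix as the paper indicates.
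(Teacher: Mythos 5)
Your proposal misses the one fact that makes this lemma easy, and the detour you take instead is circular. The paper's proof is two lines: each nonzero row of $\mathcal{Q}(k)$ has absolute row sum at most $2\tau\sum_{j}\sum_{m\in\mathcal{N}_l}|W_{lm}^{(i,j)}|$, which is strictly less than $1$ by the definition of the step-size range $\mathcal{R}$; hence $\|\mathcal{Q}(k)\|<1$ for every $k$, and since there are only $n$ possible matrices $\mathcal{Q}(k)$ (Remark~\ref{rm:finite_choose}) this bound is uniform, so submultiplicativity of the induced $\infty$-norm gives $\|\prod_{k=0}^{t}\mathcal{Q}(k)\|\le\prod_{k=0}^{t}\|\mathcal{Q}(k)\|\to 0$ deterministically (the paper packages this via Theorem~6.3 of \cite{Hartfiel02}). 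Your stated obstacle --- that ``$\|\mathcal{Q}(k)\|$ alone need not be less than $1$, and one cannot simply multiply norms'' --- gets this exactly backwards: the norm \emph{is} less than one by construction of $\tau$, and $\|AB\|\le\|A\|\,\|B\|$ holds for the $\infty$-norm regardless of signs or zero rows. No spanning-tree event, no Borel--Cantelli, and no contraction-per-$n$-updates argument is needed here; the randomness plays no role in this particular lemma.

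The alternative route you propose in its place does not work. Deducing $\prod\mathcal{Q}(k)\to\mathbf{0}$ from the convergence of $\prod\mathcal{F}(k)$ is circular in the paper's architecture: Lemma~\ref{lm:prod_F(t) convergence} (convergence of $\prod\mathcal{F}(k)$) is proved \emph{from} Lemmas~\ref{lm:random_P} and \ref{lm:random_Q}, not the other way around. Even granting that convergence, $\prod_{k}\mathcal{F}(k)-\prod_{k}\mathcal{P}(k)$ is a sum of exponentially many mixed products containing at least one $\mathcal{Q}$ factor; convergence of that sum says nothing about the single term $\prod_k\mathcal{Q}(k)$. Two further assertions are false as stated: $\mathcal{F}(k)$ is not stochastic (the entries $\tau W_{km}^{(i,j)}$ from \eqref{eq:Qij} can be negative, since off-diagonal entries of a positive definite matrix need not be nonnegative), and the limit of $\prod\mathcal{F}(k)$ does not have all $nd$ rows identical --- Theorem~\ref{th:random} shows it is of the form $[R_1\ \dots\ R_d]^T$ with each $R_i$ rank one, i.e., rows are identical only within each dimension block.
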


By using properties proved in Lemmas~\ref{lm:random_P} and \ref{lm:random_Q} we now show the existence of convergence of $\prod_{k=0}^t\mathcal{F}(k)$ on the lines of proof of Lemma~\ref{lm:cauchy}. 

\begin{lemma}\label{lm:prod_F(t) convergence}
	Let $G = (V, E, \mathcal{W})$ be a matrix-weighted network with a spanning tree and positive definite edge weights. If agents update their states using \eqref{eq:RandomUpdate1}, then for every sample path $\omega \in \Omega',$ the matrix sequence $\prod_{k=0}^t\mathcal{F}(k), \forall t \geq t'(\omega)$ is a Cauchy sequence.
\end{lemma}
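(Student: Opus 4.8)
The plan is to mimic the Cauchy-sequence argument of Lemma~\ref{lm:cauchy}, but applied to the nonhomogeneous products $\prod_{k}\mathcal{F}(k)=\prod_k(\mathcal{P}(k)+\mathcal{Q}(k))$ rather than to powers of a fixed matrix $A+B$. Fix a sample path $\omega\in\Omega'$ and write, for $t>r\geq t'(\omega)$, the ``hybrid'' product
\[
J_{t,r}=\Big(\prod_{k=r+1}^{t}\mathcal{P}(k)\Big)\Big(\prod_{k=0}^{r}\mathcal{F}(k)\Big),
\]
so that $J_{t,t}=\prod_{k=0}^t\mathcal{F}(k)=:J_t$ and $J_{t,r}$ replaces the ``tail'' factors $\mathcal{F}(k)$ by the pure $\mathcal{P}(k)$ factors. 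As in \eqref{eq:J_expand}, I would bound $\|J_t-J_s\|\leq \|J_t-J_{t,r}\|+\|J_{t,r}-J_{s,r}\|+\|J_{s,r}-J_s\|$ and show each term is at most $\epsilon/3$ for $r$, and then $t,s$, chosen large enough.

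For the first and third terms, the telescoping identity
\[
\prod_{k=r+1}^{t}\mathcal{F}(k)-\prod_{k=r+1}^{t}\mathcal{P}(k)=\sum_{j=r+1}^{t}\Big(\prod_{k=j+1}^{t}\mathcal{F}(k)\Big)\mathcal{Q}(j)\Big(\prod_{k=r+1}^{j-1}\mathcal{P}(k)\Big)
\]
expresses the difference as a sum of products each containing at least one $\mathcal{Q}$-factor; since every $\mathcal{F}(k),\mathcal{P}(k)$ is stochastic (hence has $\|\cdot\|=1$) and $\prod\mathcal{Q}(k)\to\mathbf{0}$ by Lemma~\ref{lm:random_Q}, the tail of this sum is small — this is the analogue of step (d) in Lemma~\ref{lm:cauchy}, and I would import the same geometric-decay estimate $\|\prod\mathcal{Q}\|\leq\delta^{\lfloor\cdot\rfloor}$ that is established in the proof of Lemma~\ref{lm:random_Q}/Appendix~\ref{sec:app_a}. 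Multiplying by the constant $\|\prod_{k=0}^r\mathcal{F}(k)\|=1$ keeps the bound. For the middle term, $J_{t,r}-J_{s,r}=\big(\prod_{k=r+1}^{t}\mathcal{P}(k)-\prod_{k=r+1}^{s}\mathcal{P}(k)\big)\prod_{k=0}^r\mathcal{F}(k)$, and since Lemma~\ref{lm:random_P} gives convergence of $\prod_{k=0}^t\mathcal{P}(k)$, its increments $\prod_{k=r+1}^{t}\mathcal{P}(k)-\prod_{k=r+1}^{s}\mathcal{P}(k)$ can be made arbitrarily small for $t,s$ large with $r$ fixed (this uses that the finite prefix $\prod_{k=0}^r\mathcal{P}(k)$ is invertible on the relevant subspace, or more simply that convergence of the full product forces convergence of the shifted product). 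Completeness of the space of $nd\times nd$ matrices under $\|\cdot\|$ then yields the claim.

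The main obstacle is handling the order of quantifiers between $r$ and $(t,s)$ correctly: unlike the homogeneous case, the ``tail'' $\prod_{k=r+1}^t\mathcal{P}(k)$ is not simply $\mathcal{P}^{t-r}$, so I cannot directly invoke $\lim_t \mathcal{P}^t$ — instead I must argue that along the fixed sample path $\omega\in\Omega'$ the shifted product $\prod_{k=r+1}^t\mathcal{P}(k)$ still converges as $t\to\infty$ (its limit generally depends on $r$), which follows because $\prod_{k=0}^t\mathcal{P}(k)=\big(\prod_{k=r+1}^t\mathcal{P}(k)\big)\big(\prod_{k=0}^r\mathcal{P}(k)\big)$ converges (Lemma~\ref{lm:random_P}) and the prefix is a fixed stochastic matrix; combined with the Cesàro/Borel--Cantelli structure of $\Omega'$ (Remark~\ref{rm:samplepath}) guaranteeing a spanning tree after time $t'(\omega)$, the shifted tail also has a rank-structured limit. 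A secondary technical point is making the decay bound on $\prod_{k}\mathcal{Q}(k)$ uniform enough along $\omega\in\Omega'$ to control the telescoped sum of $t-r$ terms — but this is exactly the estimate already in the proof of Lemma~\ref{lm:random_Q}, so it can be cited rather than reproved.
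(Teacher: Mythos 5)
Your proposal follows essentially the same route as the paper's own proof (Appendix~\ref{sec:app_b}): form the hybrid product $\mathcal{J}_{t,r}$ whose tail factors are the pure $\mathcal{P}(k)$'s, split $\|\mathcal{J}_t-\mathcal{J}_s\|$ by the three-term triangle inequality exactly as in \eqref{eq:J_expand}, control the middle term by the convergence of $\prod_k\mathcal{P}(k)$ (Lemma~\ref{lm:random_P}), control the outer terms using $\|\mathcal{P}(k)\|=1$ together with Lemma~\ref{lm:random_Q}, and conclude by completeness. Your two refinements --- the explicit telescoping identity and the observation that the shifted product $\prod_{k=r+1}^{t}\mathcal{P}(k)$ must itself be shown to converge --- make precise what the paper dismisses with ``on the lines of the proof of Lemma~\ref{lm:cauchy}''; the only caution is that each telescoped summand contains a \emph{single} factor $\mathcal{Q}(j)$, so what must actually be made small is the sum of $t-r$ such terms (requiring a summable decay in $j$), not $\prod_k\mathcal{Q}(k)$ itself --- the same point the paper's own argument leaves implicit.
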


Formal proof is given in Appendix \ref{sec:app_b}. Note that the above result shows that $\prod_{k=0}^t\mathcal{F}(k)$ converges to a finite matrix $\mathcal{B}(\omega)$ on sample path $\omega$. In the following results, we show some properties of eigenvalues of $\prod_{k=0}^t\mathcal{F}(k), \forall t \geq t'(\omega)$ which will be used to prove the structure of $\mathcal{B}(\omega).$

\begin{lemma} \label{lm:multiplicity_eigen_random}
	Let $G = (V, E, \mathcal{W})$ be a matrix-weighted network with a spanning tree and positive definite edge weights. If agents update their states using \eqref{eq:RandomUpdate1}, then for every sample path $\omega \in \Omega',$ $\prod_{k=0}^t\mathcal{F}(k), \forall t \geq t'(\omega)$ has an eigenvalue $\lambda=1$ with algebraic multiplicity $d$ and geometric multiplicity $d.$
\end{lemma}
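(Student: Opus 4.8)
The plan is to establish the algebraic and geometric multiplicities separately, working with a fixed sample path $\omega \in \Omega'$ and a fixed $t \geq t'(\omega)$, and writing $\mathcal{F} = \prod_{k=0}^t \mathcal{F}(k) = \mathcal{P}^* + \mathcal{Q}^*$ where I abbreviate the (finite, already-established) partial products. The crucial structural fact I would exploit is that each $\mathcal{F}(k)$ is itself of the form $\mathcal{P}(k) + \mathcal{Q}(k)$, and that, reading the block structure of \eqref{eq:P}--\eqref{eq:Q} together with \eqref{eq:Pi}--\eqref{eq:Qij}, the $d$ vectors $e_i \otimes \mathbf{1}_n$ (the indicator of ``dimension $i$, all agents'') are fixed by $\mathcal{F}(k)$: within a block $P_i$ the rows sum to $1$, and the off-diagonal blocks $Q_{ij}$ were built precisely so that the row sums of $P_i$ together with the corresponding rows of $Q_{ij}$ cancel the $\text{sgn}(W)W$ contributions — concretely, $\sum_m P_i^{(k,m)} + \sum_{j\neq i}\sum_m Q_{ij}^{(k,m)} = 1$ for every $k$. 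Hence $\mathbf{1}_{nd}$ is a right eigenvector of every $\mathcal{F}(k)$ with eigenvalue $1$, but more is true: each of the $d$ vectors $v_i := e_i \otimes \mathbf{1}_n \in \mathbb{R}^{nd}$ is fixed by $\mathcal{F}(k)$, giving a $d$-dimensional eigenspace for eigenvalue $1$ common to all factors, hence $\mathcal{F} v_i = v_i$ for $i = 1,\dots,d$. This immediately yields geometric multiplicity $\geq d$.

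For the upper bound (and thus algebraic multiplicity exactly $d$), I would use Lemmas~\ref{lm:random_P} and \ref{lm:random_Q}: since $\prod_{k=0}^t\mathcal{P}(k) \to \mathcal{P}_\infty$, a block-diagonal matrix with a rank-one block in each of the $d$ diagonal positions, and $\prod_{k=0}^t \mathcal{Q}(k) \to \mathbf{0}$, one expects $\mathcal{F}^t \to \mathcal{P}_\infty$ as well (this is essentially what Lemma~\ref{lm:prod_F(t) convergence} and its Cauchy argument deliver). Now $\mathcal{P}_\infty$ has rank exactly $d$ — one dimension of range per diagonal block — so its only nonzero eigenvalue is $1$ with algebraic multiplicity $d$ and the remaining $nd - d$ eigenvalues are $0$. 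Since $\mathcal{F}^m \to \mathcal{P}_\infty$, the eigenvalues of $\mathcal{F}$ inside the open unit disk must all tend to $0$ under powering, and any eigenvalue of modulus $1$ must be $1$ itself with a trivial Jordan block (else $\mathcal{F}^m$ would not converge); comparing $\operatorname{rank}\mathcal{P}_\infty = d$ with $\lim_m \operatorname{rank}\mathcal{F}^m$ forces the eigenvalue $1$ of $\mathcal{F}$ to have algebraic multiplicity exactly $d$. Combined with the $d$ independent eigenvectors $v_1,\dots,v_d$ found above, algebraic multiplicity $= d =$ geometric multiplicity, and the Jordan block structure at $\lambda = 1$ is automatically diagonal.

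I would then assemble this: geometric multiplicity $\geq d$ from the explicit eigenvectors $v_i$; algebraic multiplicity $\leq d$ from $\operatorname{rank}\mathcal{P}_\infty = d$ and convergence of the powers; and the sandwiching $d \leq \text{geom} \leq \text{alg} \leq d$ closes everything. The main obstacle I anticipate is the upper bound on algebraic multiplicity: one must argue carefully that no eigenvalue of $\mathcal{F}$ other than $1$ can sit on the unit circle and that $1$ carries no nontrivial Jordan block, which requires invoking convergence of $\mathcal{F}^m$ (not merely of the $\mathcal{P}$ and $\mathcal{Q}$ products separately) — here I would lean on Lemma~\ref{lm:prod_F(t) convergence}, perhaps applied to the constant sequence $\mathcal{F}, \mathcal{F}, \dots$, or more directly re-run the Cauchy estimate of Lemma~\ref{lm:cauchy} with $A = \mathcal{P}_\infty$-type data to conclude $\mathcal{F}^m \to \mathcal{P}_\infty$ and read off the spectrum. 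A secondary subtlety is verifying cleanly that $\mathcal{F}(k) v_i = v_i$ directly from \eqref{eq:Pi}--\eqref{eq:Qij} rather than from the abstract consensus structure; this is a short bookkeeping check on row sums that I would state as a preliminary claim before the main argument.
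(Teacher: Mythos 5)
Your construction of the $d$ eigenvectors $v_i = e_i \otimes \mathbf{1}_n$ is exactly the paper's argument (the paper calls them $\overrightarrow{V_i}$): each diagonal block of the product is row-stochastic with positive diagonal and each off-diagonal block has zero row sums, so $\prod_k\mathcal{F}(k)\,v_i = v_i$, giving geometric multiplicity at least $d$. Your observation that each individual factor $\mathcal{F}(k)$ already fixes every $v_i$ (hence so does the product) is a slightly cleaner route to the same conclusion; just note that the identity you display, $\sum_m P_i^{(k,m)} + \sum_{j\neq i}\sum_m Q_{ij}^{(k,m)}=1$, only certifies that $\mathbf{1}_{nd}$ is fixed --- for each $v_i$ separately you need the two facts $P_i\mathbf{1}_n=\mathbf{1}_n$ and $Q_{ji}\mathbf{1}_n=\mathbf{0}_n$ individually, both of which do hold by \eqref{eq:Pi}--\eqref{eq:Qij}.

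The genuine gap is in your upper bound on the algebraic multiplicity. You want to deduce it from $\mathcal{F}^m \to \mathcal{P}_\infty$ and $\operatorname{rank}\mathcal{P}_\infty = d$, but none of the cited results give convergence of the \emph{powers} of the fixed finite product $\mathcal{F}=\prod_{k=0}^{t}\mathcal{F}(k)$: Lemma~\ref{lm:prod_F(t) convergence} concerns the growing product $\prod_{k=0}^{t}\mathcal{F}(k)$ as $t\to\infty$ along the random sample path, which is a product of \emph{different} matrices, not powers of one of its partial products (powers of a fixed block would correspond to a periodically repeating, probability-zero selection sequence). Re-running Lemma~\ref{lm:cauchy} with $A=\prod_{k=0}^{t}\mathcal{P}(k)$ and $B=\mathcal{F}-A$ requires $B^m\to\mathbf{0}$, and $B$ (the sum of all cross terms containing at least one $\mathcal{Q}$ factor) carries no obvious norm or spectral bound below one; establishing $B^m\to\mathbf{0}$ is essentially as hard as the spectral claim you are after, so the argument is circular. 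Likewise, the step ``any eigenvalue of modulus $1$ must equal $1$ with a trivial Jordan block, else $\mathcal{F}^m$ would not converge'' presupposes the very convergence you have not shown. The paper closes the upper bound without powers: for $t\ge t'(\omega)$ each diagonal block $P_i$ of the block-diagonal part $\mathcal{A}_t$ of the product is a stochastic matrix with positive diagonal whose induced graph has a spanning tree (Lemma~\ref{lm:async_sp}), so by Corollary~3.5 of \cite{ren2005consensus} its unit eigenvalue is \emph{simple} and all other eigenvalues have modulus strictly less than one; hence $\mathcal{A}_t$ has unit eigenvalue of algebraic multiplicity exactly $d$, and the paper transfers this, together with the eigenvectors $\overrightarrow{V_i}$, to $\prod_{k=0}^t\mathcal{F}(k)$. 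If you want a self-contained bound, argue via simplicity of the Perron eigenvalue of each diagonal block rather than via limits of powers.
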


See Appendix \ref{sec:app_c} for proof of Lemma \ref{lm:multiplicity_eigen_random}. Now, we show that no eigenvalue $\lambda$ of $\prod_{k=0}^t\mathcal{F}(k), \forall t \geq t'(\omega)$ has $|\lambda| > 1.$
%
\begin{lemma} \label{lm:PQ_eigen_less_random}
	Let $G = (V, E, \mathcal{W})$ be a matrix-weighted network with a spanning tree and positive definite edge weights. If agents update their states using \eqref{eq:RandomUpdate1}, then for every sample path $\omega \in \Omega',$ no eigenvalue $\lambda$ of $\prod_{k=0}^t\mathcal{F}(k), \forall t \geq t'(\omega)$ has $|\lambda| > 1.$
\end{lemma}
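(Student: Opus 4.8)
The plan is to transfer the spectral question to the un‑permuted update matrices and then produce a quadratic Lyapunov function for a single‑agent step. Because $\mathcal{X}(t)$ is a fixed coordinate permutation of $X(t)$, the products $\prod_{k=0}^{t}\mathcal{F}(k)$ and $\mathbf{W}_t:=\prod_{k=0}^{t}\mathbf{U}(k)$ are similar through a permutation matrix and hence share their eigenvalues, so it suffices to bound the spectrum of $\mathbf{W}_t$. Recall from \eqref{eq:Update_random} and Remark~\ref{rm:structure_P_Q} that if agent $l$ is picked at time $k$, then $\mathbf{U}(k)$ leaves every $x_i$, $i\neq l$, unchanged and sends $x_l\mapsto x_l+\tau\sum_{j\in\mathcal{N}_l}W_{lj}(x_j-x_l)$. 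Two observations: (a) each $\mathbf{U}(k)$ fixes the consensus subspace $\mathcal{C}:=\{\mathbf{1}_n\otimes w:\ w\in\mathbb{R}^d\}$ pointwise, hence so does $\mathbf{W}_t$; and (b) consider the weighted disagreement form
\[
V(x)\ :=\ \sum_{(i,j)\in E}(x_i-x_j)^{\mathsf T}W_{ij}(x_i-x_j)\ =\ x^{\mathsf T}\mathcal{M}\,x,\qquad \mathcal{M}\succeq 0,
\]
for which $\ker\mathcal{M}=\mathcal{C}$, since each $W_{ij}$ is positive definite and the spanning tree makes $G$ weakly connected.

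The crux is the inequality $V(\mathbf{U}(k)x)\le V(x)$, with equality iff $\mathbf{U}(k)x=x$. Setting $g:=\sum_{j\in\mathcal{N}_l}W_{lj}(x_j-x_l)$ and $L_l:=\sum_{j\in\mathcal{N}_l}W_{lj}$, which is positive definite, expand the $W_{lj}$‑quadratic forms of $(x_l+\tau g)-x_j$ and use $\sum_{j\in\mathcal{N}_l}W_{lj}(x_l-x_j)=-g$ to obtain
\[
V(\mathbf{U}(k)x)-V(x)\ =\ -\tau\, g^{\mathsf T}\big(2\mathbf{I}_d-\tau L_l\big)g .
\]
For $\tau$ in the admissible range $\mathcal{R}$ the matrix $2\mathbf{I}_d-\tau L_l$ is positive definite, so the right‑hand side is $\le 0$ and vanishes exactly when $g=0$, i.e. when $\mathbf{U}(k)x=x$. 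Complexifying, the very same identity shows the Hermitian form $v\mapsto v^{*}\mathcal{M}v$ is non‑increasing under each $\mathbf{U}(k)$, hence under $\mathbf{W}_t$, for every $t$ (in particular for $t\ge t'(\omega)$).

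The eigenvalue bound is now immediate. If $\mathbf{W}_t v=\lambda v$ with $v\neq 0$, then either $v^{*}\mathcal{M}v>0$, whence $|\lambda|^{2}\,v^{*}\mathcal{M}v=(\mathbf{W}_t v)^{*}\mathcal{M}(\mathbf{W}_t v)\le v^{*}\mathcal{M}v$ forces $|\lambda|\le 1$; or $v^{*}\mathcal{M}v=0$, whence $v\in\ker\mathcal{M}=\mathcal{C}$, so $\mathbf{W}_t v=v$ and $\lambda=1$. Either way $|\lambda|\le 1$, which is the claim. The step I expect to be the main obstacle is verifying the sign of $V(\mathbf{U}(k)x)-V(x)$ when $G$ is genuinely directed: updating $x_l$ also disturbs the terms $(x_i-x_l)^{\mathsf T}W_{il}(x_i-x_l)$ carried by the incoming edges $(i,l)\in E$, contributing a stray term $-2\tau\,g^{\mathsf T}\big(\sum_{i:(i,l)\in E}W_{il}(x_i-x_l)\big)$ of indeterminate sign, so $V$ as written is a bona fide Lyapunov function only when reciprocal edges carry equal weights. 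Handling the directed case would require weighting the edge terms by a left null structure of the weighted graph Laplacian (so that the stray term is absorbed) or, alternatively, falling back on the theory of products of SIA matrices once a time window whose union graph is strongly connected has elapsed; pinning down the correct weighted energy is the real work.
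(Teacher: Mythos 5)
Your proposal has a genuine gap, and it is the one you flag yourself: the Lyapunov argument does not close for directed networks, which is exactly the setting of the lemma. The paper's model is a \emph{directed} graph whose only connectivity assumption is a rooted out-branching (see Section~\ref{sec:model} and Example~\ref{ex:fivenode_globalconsensus}, where the network is explicitly not strongly connected and edges need not be reciprocated). When agent $l$ updates, the energy $V(x)=\sum_{(i,j)\in E}(x_i-x_j)^{\mathsf T}W_{ij}(x_i-x_j)$ picks up, from every incoming edge $(i,l)\in E$ with $l\in\mathcal{N}_i$ but $i\notin\mathcal{N}_l$, a first-order term $-2\tau\,g^{\mathsf T}W_{il}(x_i-x_l)$ that bears no relation to $g=\sum_{j\in\mathcal{N}_l}W_{lj}(x_j-x_l)$ and has indeterminate sign; your clean identity $V(\mathbf{U}(k)x)-V(x)=-\tau\,g^{\mathsf T}(2\mathbf{I}_d-\tau L_l)g$ holds only when every edge is reciprocated with equal weight. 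Since you do not exhibit the "correct weighted energy" for the directed case (and the suggested fixes — a left-null reweighting or an SIA product argument — are not carried out), the proof as written establishes the spectral bound only for symmetric networks, not for the lemma as stated. A secondary, smaller issue: you assert that $\tau\in\mathcal{R}$ forces $2\mathbf{I}_d-\tau L_l\succ 0$, but $\mathcal{R}$ is an entrywise bound on $\sum_{j}W_{ij}^{(k,m)}$, and $\lambda_{\max}(L_l)$ can exceed $2/\tau$ under that bound once $d$ is moderately large; this needs either a proof or a strengthened step-size hypothesis.

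For contrast, the paper does not attack the spectrum directly at all: it first proves (Lemma~\ref{lm:prod_F(t) convergence}) that $\prod_{k=0}^{t}\mathcal{F}(k)$ is a Cauchy sequence and hence converges, and then argues by contradiction that an eigenvalue with $|\lambda|>1$ would make the Jordan block $\mathbf{J}_{\lambda\neq 1}$, and hence the product, diverge. That route is soft — it buys the bound for free from convergence, with no connectivity-symmetry requirement — whereas your energy argument, where it applies, is self-contained, quantitative, and would also yield the strict decrease off the consensus subspace. If you want to salvage your approach in the paper's generality, the natural fix is to weight the disagreement form by the positive left eigenvector structure associated with the rooted out-branching (or to restrict the lemma's statement to undirected networks and say so); as it stands, the key inequality $V(\mathbf{U}(k)x)\le V(x)$ is unproven in the regime the lemma covers.
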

\begin{proof}
	We prove the result by contradiction. Let there be an eigenvalue $\lambda_1$ of $\prod_{k=0}^t\mathcal{F}(k), \forall t \geq t'(\omega)$ such that $|\lambda_1|>1.$ By Lemma~\ref{lm:prod_F(t) convergence}, we know that $\lim\limits_{t \rightarrow \infty} \prod_{k=0}^t\mathcal{F}(k)$ exists. For every $t \geq t'(\omega)$, the Jordan canonical form of $\prod_{k=0}^t\mathcal{F}(k)$ can be written as $\prod_{k=0}^t\mathcal{F}(k)=S\prod_{k=0}^tJ(k)S^{-1}$ where $S$ is a matrix with eigenvectors and $\prod_{k=0}^tJ(k)$ is the Jordan matrix of eigenvalues of $\prod_{k=0}^t\mathcal{F}(k).$ It is easy to verify that $\lim\limits_{t \rightarrow \infty} \prod_{k=0}^t\mathcal{F}(k) = S \lim\limits_{t \rightarrow \infty}\prod_{k=0}^tJ(k)S^{-1}.$ As each element of $S$ is finite, by Lemma~\ref{lm:prod_F(t) convergence}, $\lim\limits_{t \rightarrow \infty}\prod_{k=0}^tJ(k)$ exists. By Lemma~\ref{lm:multiplicity_eigen_random}, we know that $\prod_{k=0}^t\mathcal{F}(k)$ has $d$ unity eigenvalues and $(nd-d)$ non-unity eigenvalues. The Jordan matrix of $\prod_{k=0}^t\mathcal{F}(k)$ can be written as 
	$\prod_{k=0}^tJ(k) =\begin{bmatrix}
		\textbf{J}_{\lambda \neq 1} & \textbf{0}_{(nd-d) \times d}\\ 
		\textbf{0}_{d \times (nd-d)} &  \textbf{I}_{d}\\
	\end{bmatrix}.$
	It has $d$ Jordan blocks of size $1 \times 1$ corresponding to unity eigenvalue whose algebraic and  geometric multiplicities are $d.$ The Jordan block corresponding to non-unity eigenvalues is $ \textbf{J}_{\lambda \neq 1}$ which has non-unity eigenvalues in its diagonal and zeros on off-diagonal elements. Observe that $\lim\limits_{t \rightarrow \infty} \textbf{J}_{\lambda \neq 1}^t$ will diverge to infinity because $|\lambda_1| >1.$ This implies that $\lim\limits_{t \rightarrow \infty}  \prod_{k=0}^tJ(k)$ does not exists which contradicts the result of Lemma~\ref{lm:prod_F(t) convergence}. Recall by Lemma \ref{lm:async_sp} that the event $E_l$ (existence of spanning tree in finite time steps) happens infinitely often. Hence the argument for $\textbf{J}_{\lambda \neq 1}$ holds true for every consecutive $t'(\omega)<\infty$ time slots. Hence, the matrix $\prod_{k=0}^t\mathcal{F}(k), \forall t \geq t'(\omega)$ can not have any eigenvalue greater than one.
\end{proof}
Now we present the main result of this work.

\begin{theorem}\label{th:random}
	Let $G = (V, E, \mathcal{W})$ be a matrix-weighted network with a spanning tree and positive definite edge weights. If agents update their states using \eqref{eq:Update_random}, then the network achieves global consensus asymptotically almost surely for all initial states.
\end{theorem}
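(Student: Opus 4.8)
\emph{Proof proposal.}

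The plan is to pin down the almost-sure limit $\mathcal{B}(\omega)$ of the random matrix product $M_t:=\prod_{k=0}^{t}\mathcal{F}(k)$ and to show that its column space lies inside the $d$-dimensional consensus subspace $\mathcal{V}^{\star}:=\operatorname{span}\{v^{(1)},\dots,v^{(d)}\}\subseteq\mathbb{R}^{nd}$, where $v^{(i)}$ has the all-ones vector $\mathbf{1}_{n}$ in its $i$-th block of length $n$ and zeros elsewhere. Granting this, $\mathcal{X}(t)\to\mathcal{B}(\omega)\mathcal{X}(0)\in\mathcal{V}^{\star}$ for every initial state, so the $i$-th block of the limit is $\beta_i\mathbf{1}_{n}$ for some scalars $\beta_1,\dots,\beta_d$; since that block stacks the $i$-th coordinates of all $n$ agents, every agent converges to the common state vector $(\beta_1,\dots,\beta_d)^{T}$, which is global consensus along $\omega$. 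Because \eqref{eq:RandomUpdate1} is only a permutation rewriting of \eqref{eq:Update_random} and the relevant set of sample paths will have probability one, this yields the theorem.

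Two preliminary facts are needed. First, \emph{existence of the limit}: for every $\omega\in\Omega'$, Lemma~\ref{lm:prod_F(t) convergence} makes $M_t$ a Cauchy sequence, hence $M_t\to\mathcal{B}(\omega)$ for a finite matrix $\mathcal{B}(\omega)$, and by Lemma~\ref{lm:async_sp} / Remark~\ref{rm:samplepath} this holds on a set of probability one. Second, \emph{the $v^{(i)}$ are common unit eigenvectors}: let $\mathcal{F}^{(j)}$ denote the realization of $\mathcal{F}(k)$ when agent $j$ is the updating agent (recall Remark~\ref{rm:finite_choose}). Using \eqref{eq:Pi}--\eqref{eq:Qij} with $\text{sgn}(W_{lm})=1$ for positive definite weights, every diagonal block $P_i$ has row sums $1$ and every off-diagonal block $Q_{ji}$ ($j\neq i$) has row sums $0$, whence $\mathcal{F}^{(j)}v^{(i)}=v^{(i)}$ for all $i,j$; since the $v^{(i)}$ have disjoint block supports they are linearly independent, so $\mathcal{V}^{\star}\subseteq\ker(\mathcal{F}^{(j)}-\mathbf{I})$ for every $j$ and therefore $\mathcal{V}^{\star}\subseteq\ker(M_t-\mathbf{I})$ for every $t$.

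The heart of the argument is $\operatorname{range}(\mathcal{B}(\omega))\subseteq\mathcal{V}^{\star}$, obtained in two steps. \emph{Step 1: every column of $\mathcal{B}(\omega)$ is a common fixed point of all the update matrices.} Writing $M_t=\mathcal{F}(t)M_{t-1}$ and using $M_{t-1}\to\mathcal{B}(\omega)$, $M_t\to\mathcal{B}(\omega)$ together with the uniform bound $\|\mathcal{F}(t)\|\le\max_{j}\|\mathcal{F}^{(j)}\|<\infty$, one gets $\mathcal{F}(t)\mathcal{B}(\omega)=M_t-\mathcal{F}(t)\big(M_{t-1}-\mathcal{B}(\omega)\big)\to\mathcal{B}(\omega)$; on the probability-one event that every agent updates infinitely often, for each fixed $j$ the times with $\mathcal{F}(t)=\mathcal{F}^{(j)}$ form an infinite set, along which the constant matrix $\mathcal{F}^{(j)}\mathcal{B}(\omega)$ must equal $\mathcal{B}(\omega)$, so $\mathcal{F}^{(j)}\mathcal{B}(\omega)=\mathcal{B}(\omega)$ for all $j$, i.e.\ $\operatorname{range}(\mathcal{B}(\omega))\subseteq\bigcap_j\ker(\mathcal{F}^{(j)}-\mathbf{I})$. \emph{Step 2: $\bigcap_j\ker(\mathcal{F}^{(j)}-\mathbf{I})=\mathcal{V}^{\star}$.} The inclusion $\supseteq$ is the second preliminary fact. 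Conversely, any $x$ fixed by every $\mathcal{F}^{(j)}$ is fixed by every product of them, so $x\in\ker(M_t-\mathbf{I})$; for $t\ge t'(\omega)$ Lemma~\ref{lm:multiplicity_eigen_random} gives $\dim\ker(M_t-\mathbf{I})=d$ (Lemma~\ref{lm:PQ_eigen_less_random} guaranteeing moreover that no eigenvalue escapes the unit disc), so since $\mathcal{V}^{\star}$ is a $d$-dimensional subspace of $\ker(M_t-\mathbf{I})$ we get $\ker(M_t-\mathbf{I})=\mathcal{V}^{\star}$ and hence $x\in\mathcal{V}^{\star}$. Combining the two steps gives $\operatorname{range}(\mathcal{B}(\omega))\subseteq\mathcal{V}^{\star}$, and the conclusion follows as in the first paragraph.

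I expect Step 1 to be the delicate part: knowing that $\mathcal{B}(\omega)$ exists is easy, but identifying it forces one to extract the relations $\mathcal{F}^{(j)}\mathcal{B}(\omega)=\mathcal{B}(\omega)$ along sparse random subsequences and to keep careful track of the exceptional null set ($\omega\notin\Omega'$, or some agent selected only finitely often). A more computational alternative that avoids subsequences is to recycle the telescoping estimate underlying Lemma~\ref{lm:prod_F(t) convergence} (modelled on Lemma~\ref{lm:cauchy}): for large finite $r$ the product $M_t$ is, uniformly in $t$, close to $\big(\prod_{k=r}^{t}\mathcal{P}(k)\big)\big(\prod_{k=0}^{r-1}\mathcal{F}(k)\big)$; letting $t\to\infty$ and applying Lemma~\ref{lm:random_P} to the shifted $\mathcal{P}$-product (whose induced graphs still contain a spanning tree infinitely often) shows that $\mathcal{B}(\omega)$ is a limit, over $r$, of matrices of the form (block-diagonal with rank-one blocks)$\,\times\,$(fixed matrix), each of which has all rows equal within every $n\times n$ block; as this property is closed, $\mathcal{B}(\omega)$ inherits it, which again gives $\operatorname{range}(\mathcal{B}(\omega))\subseteq\mathcal{V}^{\star}$ and hence consensus.
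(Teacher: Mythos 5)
Your proposal is correct, and it reaches the conclusion by a genuinely different route than the paper. The paper's proof works through the Jordan canonical form: it combines Lemma~\ref{lm:multiplicity_eigen_random} and Lemma~\ref{lm:PQ_eigen_less_random} to argue that $\lim_{t\to\infty}\prod_{k=0}^tJ(k)$ is the projection onto the $d$-dimensional unit eigenspace, and then explicitly multiplies out $S\,(\lim\prod J(k))\,S^{-1}$ using the structure of the unit eigenvectors $\overrightarrow{V_1},\dots,\overrightarrow{V_d}$ to exhibit $\mathcal{B}(\omega)$ as a stack of rank-one blocks. You instead characterize the limit algebraically: the identity $\mathcal{F}(t)\mathcal{B}(\omega)=M_t-\mathcal{F}(t)(M_{t-1}-\mathcal{B}(\omega))$, the uniform bound on the finitely many realizations $\mathcal{F}^{(j)}$, and the almost-sure recurrence of each agent's selection yield $\mathcal{F}^{(j)}\mathcal{B}(\omega)=\mathcal{B}(\omega)$ for every $j$, after which the dimension count $\dim\ker(M_t-\mathbf{I})=d$ from Lemma~\ref{lm:multiplicity_eigen_random} pins the range of $\mathcal{B}(\omega)$ inside the consensus subspace $\mathcal{V}^\star$. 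What your route buys is robustness: it uses only the existence of the limit (Lemma~\ref{lm:prod_F(t) convergence}), the common unit eigenvectors, and the geometric multiplicity, and it dispenses entirely with Lemma~\ref{lm:PQ_eigen_less_random} and with the paper's implicit assumption that a single, $t$-independent similarity matrix $S$ diagonalizes every partial product $\prod_{k=0}^t\mathcal{F}(k)$ -- a point the paper glosses over, since the Jordan basis of a nonhomogeneous product generally changes with $t$. What the paper's route buys is a more explicit description of the limit (each horizontal block $R_i$ is rank one, so the consensus value is an explicit linear functional of $\mathcal{X}(0)$). The one place to be careful in your write-up is the bookkeeping of null sets -- you must intersect $\Omega'$ with the event that every agent is selected infinitely often, which you do note -- and, minor point, you could shortcut Step~2 by observing that $\mathcal{F}^{(j)}\mathcal{B}(\omega)=\mathcal{B}(\omega)$ for all $j$ already gives $M_t\mathcal{B}(\omega)=\mathcal{B}(\omega)$, so $\operatorname{range}(\mathcal{B}(\omega))\subseteq\ker(M_t-\mathbf{I})=\mathcal{V}^\star$ directly.
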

\begin{proof}
	For any sample path $\omega \in \Omega',$ by Lemma~\ref{lm:multiplicity_eigen_random} and Lemma~\ref{lm:PQ_eigen_less_random} we can write:
	$\lim\limits_{t \rightarrow \infty}\prod_{k=0}^tJ(k)=\begin{bmatrix}
		\textbf{0}_{(nd-d) \times (nd-d)} & \textbf{0}_{(nd-d) \times d}\\ 
		\textbf{0}_{d \times (nd-d)} &  \textbf{I}_{d}\\
	\end{bmatrix}.$
	Hence, the state update rule~\eqref{eq:RandomUpdate1} can be written as:
	\begin{equation*}
		\lim_{t \rightarrow \infty} \mathcal{X}(t) 
		=S \begin{bmatrix}
			\textbf{0}_{(nd-d) \times (nd-d)} & \textbf{0}_{(nd-d) \times d}\\ 
			\textbf{0}_{d \times (nd-d)} &  \textbf{I}_{d}\\
		\end{bmatrix} S^{-1} \mathcal{X}(0).
	\end{equation*}
	The last $d$ rows of $S^{-1}$ remain same and all other rows become zero in the product $\lim\limits_{t \rightarrow \infty}\prod_{k=0}^tJ(k) S^{-1}.$ The last $d$ columns of $S$ are linearly independent eigenvectors corresponding to unity eigenvalue of $\prod_{k=0}^t\mathcal{F}(k).$ See the definition of these vectors $\overrightarrow{V_1},\cdots,\overrightarrow{V_d}$ in the proof of Lemma~\ref{lm:multiplicity_eigen_random}. Without loss of generality, let $k=1$ in $\overrightarrow{V_1},\cdots,\overrightarrow{V_d}.$ Then each non-zero row of $\lim\limits_{t \rightarrow \infty}\prod_{k=0}^tJ(k) S^{-1}$ repeats $n$ times when we multiply $S$ with $\lim\limits_{t \rightarrow \infty}\prod_{k=0}^tJ(k) S^{-1}.$ Thus, $S \lim\limits_{t \rightarrow \infty}\prod_{k=0}^tJ(k) S^{-1}$ converges to a matrix $\mathcal{B}(\omega).$ Matrix $\mathcal{B}(\omega)$ contains $d$ horizontal blocks, i.e., $\mathcal{B}(\omega)=[R_1 ~R_2 ~ \dots ~R_d]^T$, where each block $R_i \in \mathbb{R}^{n \times nd}$ is a rank-1 matrix. Thus, $k^{th}$ dimension of all agents converge to the same value on sample path $\omega$. As $\omega \in \Omega'$ and $\mathbb{P}(\Omega')=1.$ This proves the almost sure convergence of global consensus.
\end{proof}

Observe that all the results for the asynchronous system model are proved on sample paths for which the existence of spanning tree on the induced graph of update matrices in finite time is guaranteed by Lemma~\ref{lm:async_sp}. Recall that the underlying graph $G$ has a spanning tree and thus in synchronous time model the spanning tree exists for the update matrix $\mathbf{P}$ (refer \eqref{eq:Update}) at all times. Thus all results of this section hold true for the synchronous time model as well which is formally presented here.

\begin{corollary}
	\label{cr:synch}
	Let $G = (V, E, \mathcal{W})$ be a matrix-weighted network with a spanning tree and positive definite edge weights. If agents synchronously update their states by \eqref{eq:Update}, then the network achieves global consensus asymptotically almost surely for all initial states.
\end{corollary}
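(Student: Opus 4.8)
The plan is to treat the synchronous update as the degenerate, deterministic case of the analysis carried out for Theorem~\ref{th:random}, in which the sequence of random update matrices $\mathcal{F}(k)$ collapses to a single fixed matrix $\mathcal{F}=\mathcal{P}+\mathcal{Q}$ (and likewise $\mathcal{P}(k)\equiv\mathcal{P}$, $\mathcal{Q}(k)\equiv\mathcal{Q}$). Working in the $\mathcal{X}$-coordinates of \eqref{eq:SyncUpdate}, which are just a fixed reordering of the $X$-coordinates of \eqref{eq:Update}, it suffices to show that $\lim_{t\to\infty}\mathcal{F}^t$ has the horizontal rank-$1$-block structure established at the end of the proof of Theorem~\ref{th:random}; the consensus conclusion for \eqref{eq:Update} then follows exactly as there.

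The first step is to remove the only ingredient that used randomness, Lemma~\ref{lm:async_sp}. By \eqref{eq:Pi} and positive definiteness of every edge weight, each diagonal block $P_i$ is a fixed $n\times n$ stochastic matrix with strictly positive diagonal whose induced graph has precisely the edge set of $G$; since $G$ has a spanning tree, so does the induced graph of each $P_i$ (and hence of every power $P_i^t$). Thus the conclusion of Lemma~\ref{lm:async_sp} holds with $\Omega'=\Omega$ and $t'(\omega)=1$, with no Borel--Cantelli argument needed. Applying Lemma~\ref{lm:random_P} with the constant matrix $\mathcal{P}$ (equivalently, Lemma~\ref{lm:rank1convergence} blockwise) gives that $\lim_t\mathcal{P}^t$ is block diagonal with a rank-$1$ matrix in each block, and Lemma~\ref{lm:random_Q} gives $\lim_t\mathcal{Q}^t=\mathbf 0$ (its argument carries over, and in fact simplifies, when $\mathcal{Q}$ is held fixed). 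Since $\|\mathcal{P}\|=1$ and $\lim_t\mathcal{P}^t$ exists, Lemma~\ref{lm:prod_F(t) convergence} (in the spirit of Lemma~\ref{lm:cauchy} with $A=\mathcal{P}$, $B=\mathcal{Q}$) shows that $\mathcal{F}^t=(\mathcal{P}+\mathcal{Q})^t$ is Cauchy and therefore converges to some matrix $\mathcal{B}$.

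It then remains to pin down the structure of $\mathcal{B}$, which proceeds exactly as in the proof of Theorem~\ref{th:random}: Lemmas~\ref{lm:multiplicity_eigen_random} and \ref{lm:PQ_eigen_less_random} apply verbatim to the fixed matrix $\mathcal{F}$, so $\lambda=1$ is an eigenvalue of $\mathcal{F}$ with algebraic and geometric multiplicity $d$ and no eigenvalue has modulus larger than $1$. Writing $\mathcal{F}=SJS^{-1}$ in Jordan form, $\lim_t J^t$ is the eigenprojection onto the $d$-dimensional unity eigenspace spanned by the vectors $\overrightarrow{V_1},\dots,\overrightarrow{V_d}$ of Lemma~\ref{lm:multiplicity_eigen_random}, so $\mathcal{B}=S(\lim_t J^t)S^{-1}$ has $d$ horizontal rank-$1$ blocks; this says precisely that, for each coordinate $k$, all agents' $k$-th coordinates converge to one common value. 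As the synchronous dynamics involves no randomness, the ``almost surely'' qualifier is trivially met.

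I do not expect a genuine obstacle: the synchronous case is strictly simpler than Theorem~\ref{th:random}, and the excerpt already observes that $\mathbf{P}$ carries a spanning tree at every step. The only items needing a line of care are (i) noting that the coordinate reordering relating \eqref{eq:Update} and \eqref{eq:SyncUpdate} is a bijection, so that the block structure of $\lim_t\mathcal{F}^t$ transfers back to a consensus statement for $X(t)=\mathbf{P}^tX(0)$, and (ii) checking that the appendix proofs of Lemmas~\ref{lm:random_Q} and \ref{lm:prod_F(t) convergence} do not secretly exploit the time-variation of the update matrices --- they do not, and both specialize cleanly to the constant case.
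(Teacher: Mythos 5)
Your proposal is correct and follows essentially the same route as the paper: the paper itself dispatches this corollary by observing that the only role of randomness was to guarantee (via Lemma~\ref{lm:async_sp}) a spanning tree in the induced graph of the accumulated $\mathcal{P}$-products, and that in the synchronous model the spanning tree is present in $\mathbf{P}$ deterministically at every step, so Lemmas~\ref{lm:random_P}--\ref{lm:PQ_eigen_less_random} and the Jordan-form argument of Theorem~\ref{th:random} apply verbatim with $\Omega'=\Omega$. Your additional checks (that the appendix proofs do not exploit time-variation, and that the reordering between \eqref{eq:Update} and \eqref{eq:SyncUpdate} is a bijection) are sound and merely make explicit what the paper leaves implicit.
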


\begin{example}\label{ex:tenode_globalconsensus}
	See a network of 10 agents and 20 edges containing a spanning tree in Figure~\ref{fig:Tennodenw}. Each edge is assigned a positive definite weight matrix. Additionally, the initial state of each agent is defined by a randomly selected three-dimensional vector. Figure~\ref{fig:global1} illustrates the evolution of the state vectors leading to global consensus when states are updated using \eqref{eq:Update_random}. Note that the final consensus vector depends on the initial state vectors and the sequence of agents selected for update.
\end{example}
\begin{figure}
	\center \includegraphics[width=0.5\linewidth]{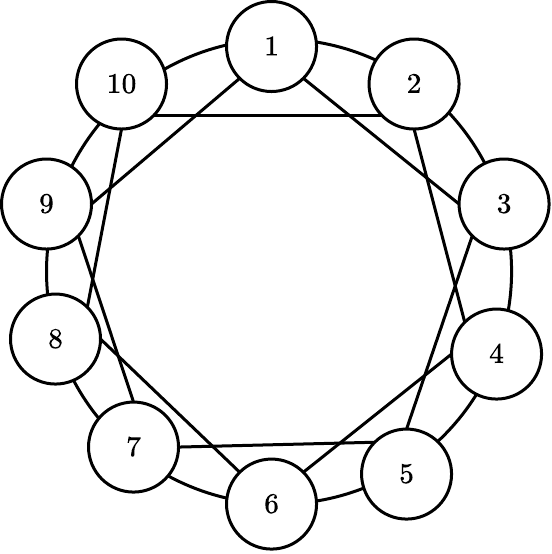}
	\caption{A $10$-agents $4$-regular network with a spanning tree.}
	\label{fig:Tennodenw}
\end{figure}
\begin{figure*}
	\centering
	\subfigure[]{%
		\includegraphics[width=0.3\linewidth]{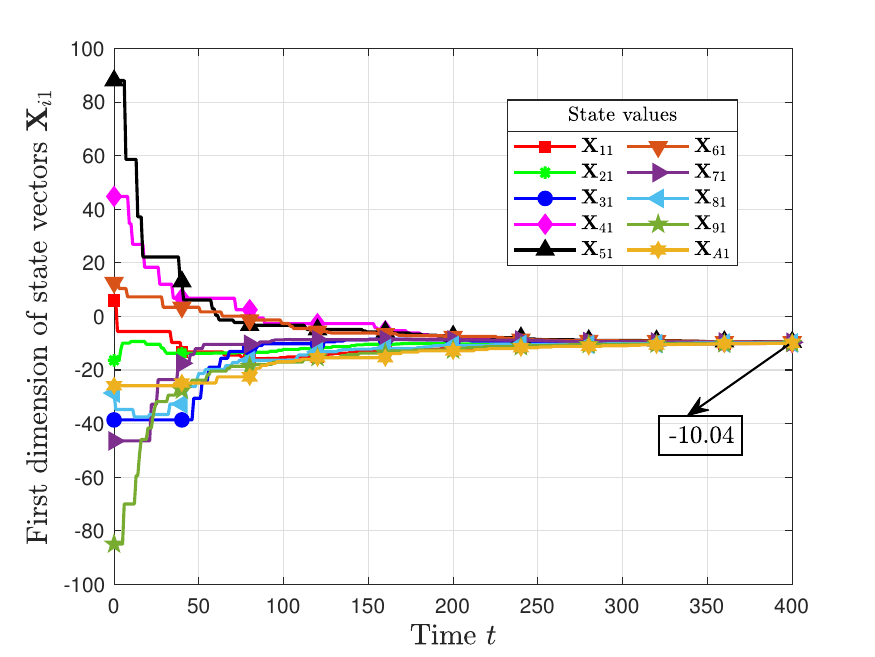}
		\label{fig:globalconsensus1}}
	\quad
	\subfigure[]{%
		\includegraphics[width=0.3\linewidth]{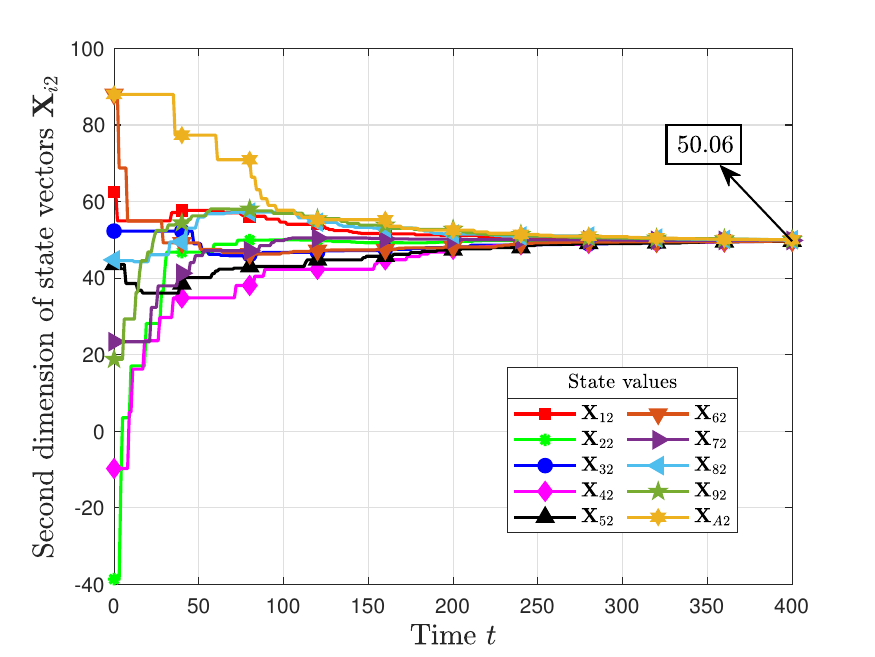}
		\label{fig:globalconsensus2}}
	\quad
	\subfigure[]{%
		\includegraphics[width=0.3\linewidth]{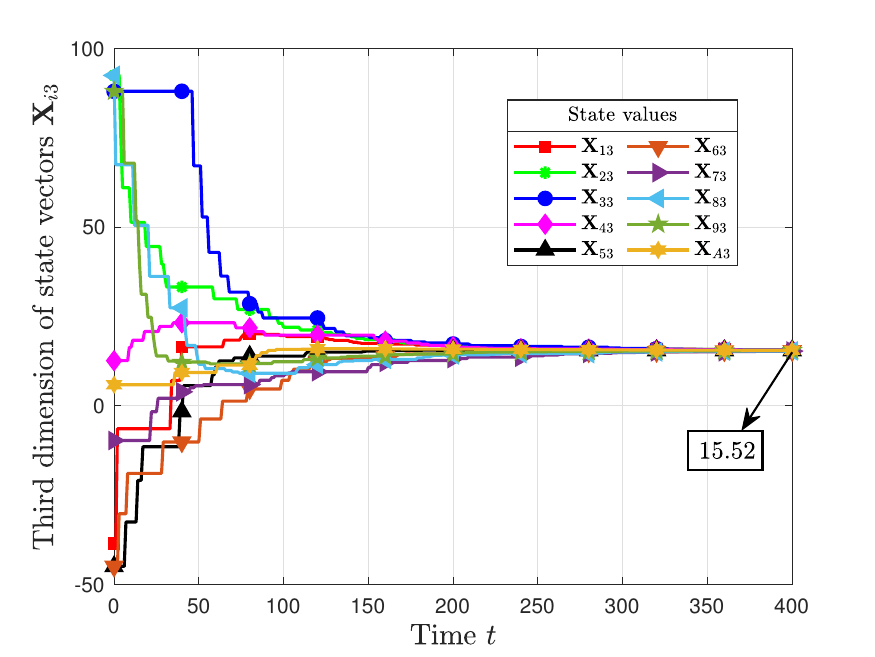}
		\label{fig:globalconsensus3}}
	\caption{ Evolution of state vectors of an agent $i \in V$ for the network shown in Figure \ref{fig:Tennodenw} with asynchronous update rule \eqref{eq:Update_random} showing global consensus. (a) First dimension of state vectors. (b) Second dimension of state vectors. (c) Third dimension of state vectors.}
	\label{fig:global1}
\end{figure*}

Note that the convergence of $\mathcal{P}(k)$ and $\mathcal{Q}(k)$ (shown in Lemma \ref{lm:random_P} and \ref{lm:random_Q}) depend on the fact that all $W_{ij}$ matrices are positive definite. In the next section, we analyze the state evolution when some edge weights are negative definite.
%

\section{Bipartite consensus in synchronous model}
\label{sec:bipartite}

In Section~\ref{sec:async_result}, we showed that the global consensus is achieved in network $G$ if it has a spanning tree (both asynchronous and synchronous time model) and all edge weights are positive definite. In this section we discuss the state evolution when some edge weights are negative definite inspired by the signed graphs in scalar consensus literature \cite{Liu17}. A negative definite weight matrix for an edge can represent competition (or mistrust) among the agents while positive definite weight can represent cooperation. In such a network, we observe that the network does not achieve global consensus instead the type of the consensus depends on the structure of the network. 
\begin{definition}\cite{Su20}
	A network is said to achieve a bipartite consensus if the agent set $V$ can be partitioned into two sets $V_1, V_2$ such that $\lim\limits_{t \rightarrow \infty}{X_i=\mathbf{C}}, ~\forall i \in V_1,$ and $\lim\limits_{t \rightarrow \infty}{X_j=-\mathbf{C}}, ~\forall j \in V_2,$ where $\mathbf{C}$ is a non-zero vector of dimension $d.$  A network is said to achieve zero consensus if all the state vectors converge to zero vector, i.e., $\lim\limits_{t \rightarrow \infty} X_i(t) =\mathbf{0}_d, ~\forall i \in V.$
\end{definition}

In scalar consensus, bipartite consensus is achieved if the network is structurally balanced. The generalization of structurally balanced property on matrix-weighted networks is presented here.

\begin{definition}\cite{Pan19}\label{def:balanced}
	A matrix-weighted network $G=(V,E,\mathcal{W})$ is called structurally balanced if there exists a bi-partition of agents $V = V_1 \cup V_2$ such that $W_{ij}$ is positive definite if $i,j \in V_k, ~\forall k \in \{1,2\},$ and $W_{ij}$ is negative definite if $i \in V_k, j \in V \setminus V_k.$
\end{definition}

See example of a structurally balanced network in Figure \ref{fig:SB}.

A network with positive and negative definite edge weights but is not structurally balanced is called \textit{structurally unbalanced} network. 
For simplicity of exposition, in this work we present the analysis for the synchronous update model but the analysis holds true for asynchronous update model as well. Numerical example for asynchronous update model is presented in Example \ref{ex:tenode_bipartiteconsensus}. Now we present and prove that bipartite consensus is achieved on structurally balanced networks.
\begin{theorem}
	\label{th:bipartite}
	Let $G = (V, E, \mathcal{W})$ be a matrix-weighted structurally balanced network with a spanning tree. If agents synchronously update their state by \eqref{eq:Update}, then the network achieves bipartite consensus asymptotically almost surely for all initial states.
\end{theorem}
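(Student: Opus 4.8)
The plan is to reduce the bipartite consensus problem on the structurally balanced network $G$ to the global consensus result of Corollary~\ref{cr:synch} (equivalently Theorem~\ref{th:random}) via a gauge transformation. Since $G$ is structurally balanced, fix the bipartition $V = V_1 \cup V_2$ from Definition~\ref{def:balanced}, and define the diagonal gauge matrix $\Sigma = \mathrm{diag}(\sigma_1 \mathbf{I}_d, \ldots, \sigma_n \mathbf{I}_d) \in \mathbb{R}^{nd \times nd}$, where $\sigma_i = 1$ if $i \in V_1$ and $\sigma_i = -1$ if $i \in V_2$. First I would apply the change of variables $\tilde{X}(t) = \Sigma X(t)$ (equivalently $\tilde{X}_i(t) = \sigma_i X_i(t)$ for each agent). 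The key algebraic step is to check that the transformed weight matrices $\tilde{W}_{ij} = \sigma_i \sigma_j W_{ij}$ are all positive definite: if $i,j$ are in the same cluster, $\sigma_i\sigma_j = 1$ and $W_{ij}$ was already positive definite; if they are in different clusters, $\sigma_i \sigma_j = -1$ and $W_{ij}$ was negative definite, so $\tilde W_{ij} = -W_{ij}$ is positive definite. One must also verify that $\mathrm{sgn}(\tilde W_{ij}) \tilde W_{ij} = |\sigma_i\sigma_j| \,\mathrm{sgn}(W_{ij}) W_{ij} = \mathrm{sgn}(W_{ij})W_{ij}$ for the diagonal self-weight term, so the ``loop'' term in the update rule is unchanged.

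Next I would verify that the update rule \eqref{eq:Update} is covariant under this transformation, i.e.\ that $\tilde X(t)$ obeys the same DeGroot-type update \eqref{eq:Update} but with weight matrices $\tilde W_{ij}$. Writing out the $i$-th block of the update, $X_i(t+1) = (I_d - \tau\sum_{j\in\mathcal N_i}\mathrm{sgn}(W_{ij})W_{ij})X_i(t) + \tau\sum_{j\in\mathcal N_i}W_{ij}X_j(t)$, and multiplying through by $\sigma_i$, the self term becomes $(I_d - \tau\sum_j \mathrm{sgn}(W_{ij})W_{ij})\tilde X_i(t)$ since $\sigma_i^2 = 1$, and the neighbor term becomes $\tau\sum_j \sigma_i W_{ij}\sigma_j \tilde X_j(t) = \tau\sum_j \tilde W_{ij}\tilde X_j(t)$. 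So $\tilde X$ evolves exactly as the synchronous model on the matrix-weighted graph $\tilde G = (V,E,\tilde{\mathcal W})$ with all positive definite weights; concretely, $\tilde X(t) = \tilde{\mathbf P}^t \tilde X(0)$ where $\tilde{\mathbf P} = \Sigma \mathbf P \Sigma$. Also note the step-size range $\mathcal R$ is unchanged since $\mathrm{sgn}(\tilde W_{ij})\tilde W_{ij}^{(k,m)} = \mathrm{sgn}(W_{ij})W_{ij}^{(k,m)}$, so the same admissible $\tau$ works.

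The graph $\tilde G$ has the same edge set $E$ as $G$, hence it still has a directed spanning tree, and now all its edge weight matrices are positive definite. Therefore Corollary~\ref{cr:synch} applies to $\tilde G$: $\tilde X(t) = \tilde{\mathbf P}^t \tilde X(0)$ converges, and all agents reach a common limit $\lim_{t\to\infty}\tilde X_i(t) = \mathbf C$ for some $d$-vector $\mathbf C$ (depending on $\tilde X(0)$). Translating back, $\lim_{t\to\infty} X_i(t) = \sigma_i^{-1}\mathbf C = \sigma_i \mathbf C$, which equals $\mathbf C$ for $i \in V_1$ and $-\mathbf C$ for $i \in V_2$. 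This is precisely bipartite consensus with partition $(V_1,V_2)$. (If $\mathbf C = \mathbf 0$ for the particular initial condition, the definition still covers the generic ``non-zero $\mathbf C$'' case for generic initial states; one can remark that the consensus value depends on $X(0)$ just as in the global case.)

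The main obstacle — though it is a mild one — is the bookkeeping in verifying covariance of the update rule, particularly making sure the matrix-sign function behaves correctly under the gauge so that the diagonal ``degree'' term of $\mathbf P$ maps to the corresponding term of $\tilde{\mathbf P}$, and that the step-size $\tau$ remains admissible for $\tilde G$; once $\tilde{\mathbf P} = \Sigma\mathbf P\Sigma$ is established with $\Sigma^2 = \mathbf I$, the limit $\lim_t \tilde{\mathbf P}^t = \Sigma(\lim_t \mathbf P^t)\Sigma$ and everything follows from Corollary~\ref{cr:synch}. No genuinely new convergence argument is needed — the content is entirely in the structural-balance-to-positivity reduction.
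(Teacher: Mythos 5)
Your proposal is correct and follows essentially the same route as the paper: the paper's proof uses exactly this gauge (signature) transformation, defining $\Delta = \mathrm{diag}\{\mathbf{I}_d,\ldots,\mathbf{I}_d,-\mathbf{I}_d,\ldots,-\mathbf{I}_d\}$ and $Y(t)=\Delta^{-1}X(t)$, verifying that the transformed update matrices have the same positive-definite-weight structure as in the global consensus case, and then invoking the Section~\ref{sec:async_result} arguments before translating back to $X_i \to \pm\mathbf{C}$. Your phrasing of the reduction directly in terms of the transformed weights $\tilde{W}_{ij}=\sigma_i\sigma_j W_{ij}$ and a direct appeal to Corollary~\ref{cr:synch} is a slightly more streamlined packaging of the identical idea.
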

\begin{proof} 
Let $V_1$ and $V_2$ be the partition of agents following structurally balanced condition of Definition~\ref{def:balanced}.
Without loss of generality, we label agents such that first $k$ agents belong to $V_1$ and the remaining $n-k$ agents belong to $V_2.$ Let $\Delta \coloneqq \text{diag}\{\mathbf{I}_d,\ldots,\mathbf{I}_d,-\mathbf{I}_d,\ldots,-\mathbf{I}_d\}$ be a block diagonal matrix with the matrices $\mathbf{I}, -\mathbf{I}$ repeating $k$ and $n-k$ times respectively. Let $\mathbf{D} \coloneqq \Delta^{-1}\mathbf{P}\Delta$ and a vector $Y(t) \coloneqq \Delta^{-1}X(t).$ Then, from update rule \eqref{eq:Update} we can write the update rule for vector $Y$ as:
\begin{equation}\label{eq:Yupdate}
	Y(t+1)=\mathbf{D}Y(t).
\end{equation}
By rearranging the vector $Y$ to get $k$-th dimension of the state vectors together, \eqref{eq:Yupdate} can be written as:
\begin{equation}\label{eq:Yupdate1}
	\mathcal{Y}(t)=(\mathcal{S+T})\mathcal{Y}(t-1)=(\mathcal{S+T})^t\mathcal{Y}(0),
\end{equation}
where, 
\[ \mathcal{S}=
	\begin{bmatrix}
		S_1 & \mathbf{0} &  \cdots & \mathbf{0}\\
		\mathbf{0} & S_2 &  \cdots & \mathbf{0}\\
		\vdots & \vdots & \ddots & \vdots \\
		\mathbf{0} & \mathbf{0} & \cdots & S_d
	\end{bmatrix},
	\mathcal{T}=
	\begin{bmatrix}
		\mathbf{0} & T_{12}&  \cdots & T_{1d}\\
		T_{21} & \mathbf{0} & \cdots & T_{2d}\\
		\vdots & \vdots & \ddots & \vdots \\
		T_{d1} & T_{d2} & \cdots & \mathbf{0}
	\end{bmatrix}
\]
with $S_i \in \mathbb{R}^{n \times n}, T_{ij} \in \mathbb{R}^{n \times n},$ and $\mathbf{0} \in \mathbb{R}^{n \times n}.$ The matrices $S_i$ and $T_{ij}$ can be obtained by rearranging the elements of matrix $\mathbf{D}$ with their $(k,m)$-th elements as:
\begin{equation*}
	S_i^{(k,m)}= \begin{cases}
		1-\tau \sum\limits_{l \in \mathcal{N}_k^T}{W_{kl}^{(i,i)}}+\tau \sum\limits_{l \in \mathcal{N}_k^{NT}}{W_{kl}^{(i,i)}}~\text{for~} k=m,\\
		\tau W_{km}^{(i,i)}~~~~~~~~~~~~~~~~~~~\text{for~} k \neq m ~\text{\&}~ m \in \mathcal{N}_k^T,\\
		-\tau W_{km}^{(i,i)}~~~~~~~~~~~~~~~~\text{for~} k \neq m ~\text{\&}~ m \in \mathcal{N}_k^{NT}.
	\end{cases}
\end{equation*}
\begin{equation*}
	T_{ij}^{(k,m)}= \begin{cases}
		-\tau \sum\limits_{l \in \mathcal{N}_k^T}{W_{kl}^{(i,j)}}+\tau \sum\limits_{l \in \mathcal{N}_k^{NT}}{W_{kl}^{(i,j)}}~~~~~\text{for~} k=m,\\
		\tau W_{km}^{(i,j)}~~~~~~~~~~~~~~~~~~~\text{for~} k \neq m  ~\text{\&}~ m \in \mathcal{N}_k^T,\\
		-\tau W_{km}^{(i,j)}~~~~~~~~~~~~~~~~\text{for~} k \neq m  ~\text{\&}~ m \in \mathcal{N}_k^{NT}.
	\end{cases}
\end{equation*}
Here $\mathcal{N}_k^T$ is the set of in-neighbors of an agent $k \in V_i$ from the same set $V_i$ and $\mathcal{N}_k^{NT}$ is the set of in-neighbors of an agent $k$ from $V \setminus V_i.$ In other words, edge matrix $W_{km}$ is positive definite if $m \in \mathcal{N}_k^T$ and is negative definite if $m \in \mathcal{N}_k^{NT}.$ It is easy to verify that the $S_i$ and $T_{ij}$ have the same properties as that of $P_i$ and $Q_{ij}$ (see \eqref{eq:Pi}, \eqref{eq:Qij}) respectively. Thus the almost sure convergence of global consensus for $Y$ in \eqref{eq:Yupdate} can be proved on the similar lines of $X$ in \eqref{eq:Update} with the arguments presented in Section~\ref{sec:async_result}. By definition of $Y$ and $\mathbf{D},$ observe that if the vector $Y_i$ of an agent $i$ converges to $\mathbf{C},$ then $X_i$ converges to $\mathbf{C}$ for $i \in V_1$ and to $-\mathbf{C}$ for $i \in V_2.$ This proves that the network achieves bipartite consensus when the network is structurally balanced.
\end{proof}

\begin{example} \label{ex:tenode_bipartiteconsensus}
	A structurally balanced network of $10$ agents with a spanning tree is shown in Figure~\ref{fig:SB}. The edge weight matrices are either positive definite or negative definite which are chosen randomly. Here agent set partitions as $V_1=\{1,3,5,7,9\}$ and $V_2 =\{2,4,6,8,10\}.$ State vectors evolution when agents update their states using \eqref{eq:Update_random} is shown in Figure \ref{fig:bipartite}. Observe that all the agents in $V_1$ converge to a state vector $C$ whereas all agents in $V_2$ converge to $-C$ thus achieving bipartite consensus.
\end{example}
\begin{figure}
	\center \includegraphics[width=0.5\linewidth]{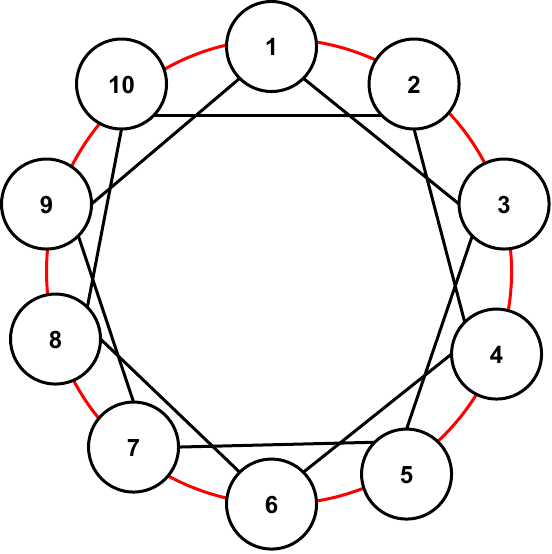}
	\caption{A structurally balanced network with $V_1 =\{1,3,5,7,9\}$ and $V_2 =\{2,4,6,8,10\}.$ Here black and \textcolor{red}{red} edges represent positive definite and \textcolor{red}{negative definite} weight matrices respectively.}
	\label{fig:SB}
\end{figure}
\begin{figure*}
	\centering
	\subfigure[]{%
		\includegraphics[width=0.3\linewidth]{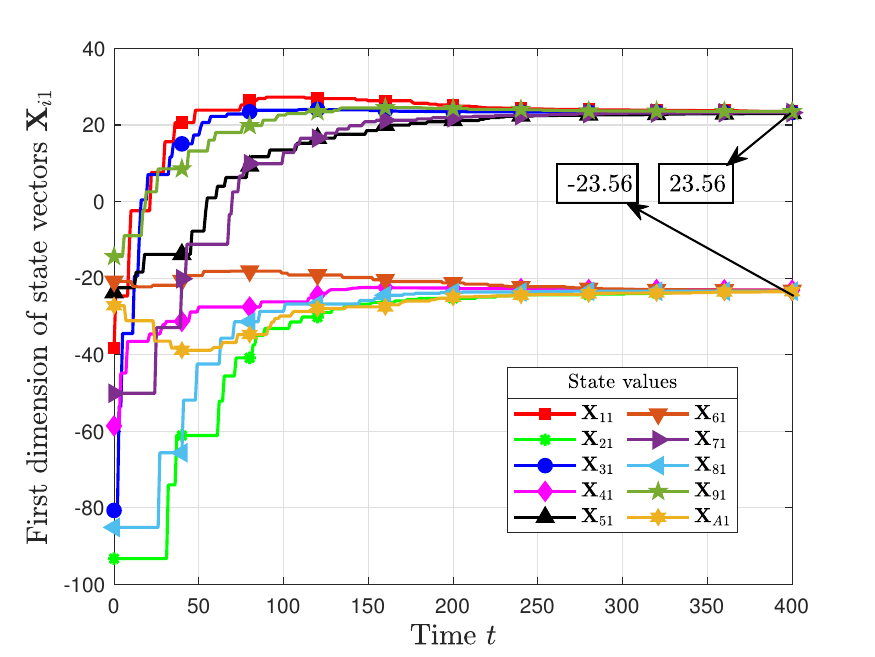}
		\label{fig:BC1}}
	\quad
	\subfigure[]{%
		\includegraphics[width=0.3\linewidth]{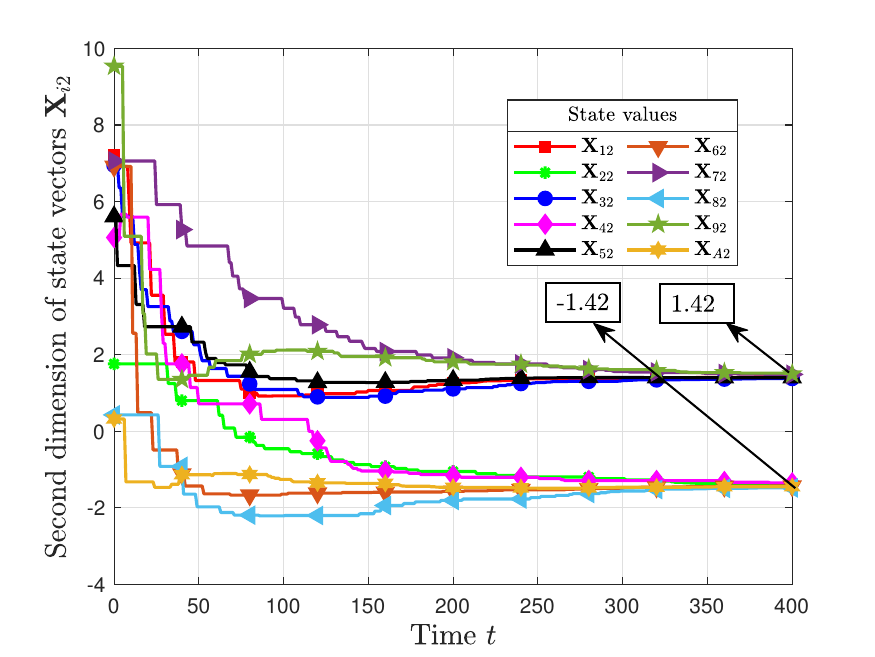}
		\label{fig:BC2}}
	\quad
	\subfigure[]{%
		\includegraphics[width=0.3\linewidth]{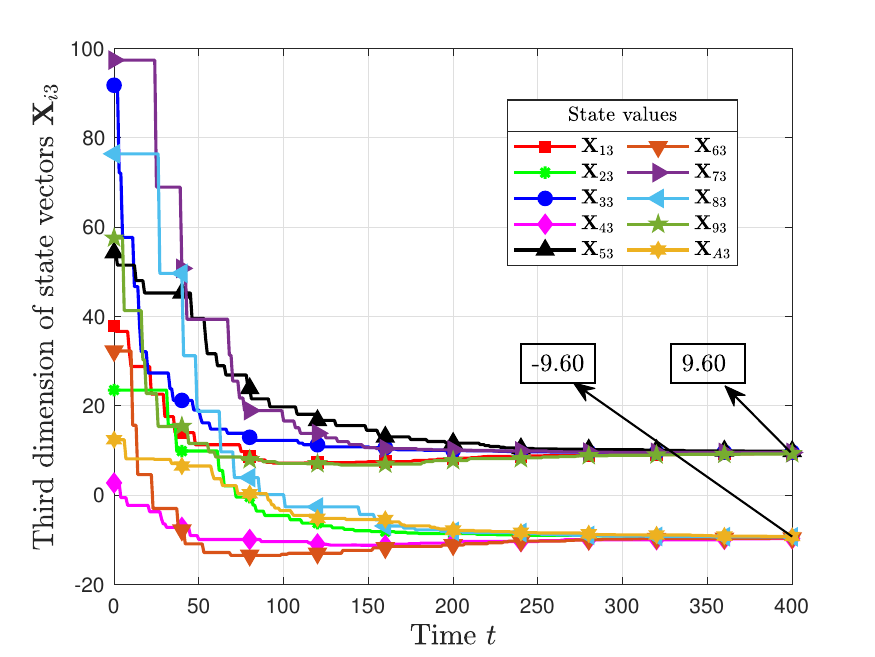}
		\label{fig:BC3}}
	\caption{ Evolution of state vectors of an agent $i \in V$ for the network shown in Figure \ref{fig:SB} with asynchronous update rule \eqref{eq:Update_random} showing bipartite consensus. (a) First dimension of state vectors. (b) Second dimension of state vectors. (c) Third dimension of state vectors.}
	\label{fig:bipartite}
\end{figure*}

In the next section, we study network condition for achieving zero consensus.
%

\section{Zero consensus in synchronous model}
\label{sec:zero}

 In this section, we show that under certain network conditions the agents achieve zero consensus. In particular, in Section~\ref{sec:negative} we study networks with all negative definite edge weights and in Section~\ref{sec:unbalanced} we study structurally unbalanced networks.   

\subsection{All edge weights are negative definite}
\label{sec:negative}

In this section, we consider a network in which every edge weight matrix is negative definite and prove that all the agents' states reach to zero vector when agents update their states synchronously \eqref{eq:Update}. Our results also hold true for asynchronous update model. Numerical example for asynchronous update model is presented in Example \ref{ex:tenode_nd_zeroconsensus}. Similar to Section~\ref{sec:async_result}, we start by proving the convergence of series $\mathcal{F}^t$ by looking at convergence of $\mathcal{P}^t$ and $\mathcal{Q}^t.$ 

\begin{lemma}\label{lm:nd_P}
	Let $G = (V, E, \mathcal{W})$ be a matrix-weighted network with a spanning tree and negative definite edge weights. If agents update their states using \eqref{eq:SyncUpdate}, then $\lim\limits_{t \rightarrow \infty}\mathcal{P}^t$ converges almost surely to a zero matrix. 
\end{lemma}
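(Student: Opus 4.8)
The plan is to examine the structure of the matrix $\mathcal{P}$ under the negative definite assumption and show that all its eigenvalues have modulus strictly less than $1$, so that Lemma~\ref{lm:zeroconvergence} applies directly. Recall from \eqref{eq:Pi} that $P_i^{(k,k)} = 1 - \tau \sum_{l \in \mathcal{N}_k} \text{sgn}(W_{kl}) W_{kl}^{(i,i)}$ and $P_i^{(k,m)} = \tau W_{km}^{(i,i)}$ for $k \neq m$. When every $W_{kl}$ is negative definite we have $\text{sgn}(W_{kl}) = -1$ and $W_{kl}^{(i,i)} < 0$ (a diagonal entry of a negative definite matrix). Hence the diagonal entry becomes $1 - \tau\sum_{l}(-1)W_{kl}^{(i,i)} = 1 + \tau\sum_{l}W_{kl}^{(i,i)}$, which is a sum of $1$ and negative terms, while the off-diagonal entries $\tau W_{km}^{(i,i)}$ are also negative. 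So, unlike the positive definite case, $\mathcal{P}$ is no longer a nonnegative stochastic matrix: its off-diagonal entries are negative and its row sums are strictly less than $1$.

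First I would pin down the row sums: the $k$-th row sum of $P_i$ equals $1 + \tau\sum_{l \in \mathcal{N}_k} W_{kl}^{(i,i)} + \tau\sum_{m \neq k, m \in \mathcal{N}_k} W_{km}^{(i,i)}$, but the term $\sum_{l}\text{sgn}(W_{kl})W_{kl}^{(i,i)}$ in the diagonal already runs over all of $\mathcal{N}_k$, so a careful bookkeeping shows the row sum collapses to exactly $1 + 2\tau\sum_{l\in\mathcal{N}_k}W_{kl}^{(i,i)}$ if self-loops are absent, or more cleanly that the absolute row sums can be controlled. The cleaner route is to bound $\|\mathcal{P}\|$ (maximum absolute row sum). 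By the definition of the step-size range $\mathcal{R}$, $\tau$ is chosen small enough that $\tau \cdot 2\sum_{j\in\mathcal{N}_i}\text{sgn}(W_{ij})W_{ij}^{(k,m)} < 1$ for the relevant indices; combined with $W_{kl}^{(i,i)}<0$ this forces each diagonal entry to lie in $(0,1)$ and makes the total absolute row sum of $P_i$ strictly less than $1$. Then $\|\mathcal{P}\| < 1$, and since the spectral radius is bounded by any induced norm, every eigenvalue $\lambda$ of $\mathcal{P}$ satisfies $|\lambda| \le \|\mathcal{P}\| < 1$. Lemma~\ref{lm:zeroconvergence} then gives $\lim_{t\to\infty}\mathcal{P}^t = \mathbf{0}$ immediately, and the convergence is deterministic in the synchronous model (the ``almost surely'' is vacuous here, or inherited from the asynchronous variant where one averages over the random choice of updating agent).

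The main obstacle I anticipate is the bookkeeping in the first step: verifying that the absolute row sum of $P_i$ is genuinely strictly below $1$ rather than merely equal to $1$. One has to be careful that the negative off-diagonal entries and the diagonal entry do not conspire to give absolute row sum exactly $1$ (which would only yield $|\lambda|\le 1$ and not suffice). The resolution is that the diagonal entry $1 + \tau\sum_l W_{kl}^{(i,i)}$ is itself strictly less than $1$ because the sum is over a nonempty neighbor set with strictly negative terms, so even after adding $|\tau W_{km}^{(i,i)}| = -\tau W_{km}^{(i,i)}$ back for the off-diagonals, a strict gap remains — precisely one copy of $-\tau\sum_l W_{kl}^{(i,i)} > 0$ is left over. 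If some agent has an empty in-neighbor set this needs separate handling, but the spanning tree assumption together with the block structure ensures the argument goes through for the reachable part, and an isolated row would just be $1$ on the diagonal, contradicting that $G$ has a spanning tree unless that agent is the root — a case one disposes of by noting the root's dynamics are driven through the rest of the network. I would present the norm bound $\|\mathcal{P}\|<1$ as the crux and then invoke Lemma~\ref{lm:zeroconvergence}.
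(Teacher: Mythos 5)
There is a genuine gap, and it sits exactly at the point you identified as the crux. Your plan hinges on showing $\|\mathcal{P}\|<1$ in the maximum-absolute-row-sum norm, but this inequality is false: the absolute row sum of each row of $P_i$ is \emph{exactly} $1$. Indeed, with all weights negative definite the diagonal entry of row $k$ is $1-\tau\sum_{l\in\mathcal{N}_k}\operatorname{sgn}(W_{kl})W_{kl}^{(i,i)}=1+\tau\sum_{l\in\mathcal{N}_k}W_{kl}^{(i,i)}=1-\tau\sum_{l\in\mathcal{N}_k}\lvert W_{kl}^{(i,i)}\rvert$, while each off-diagonal entry $\tau W_{km}^{(i,i)}$ contributes $\tau\lvert W_{km}^{(i,i)}\rvert$ to the absolute row sum; these contributions cancel the deficit in the diagonal term exactly, giving absolute row sum $1-\tau\sum_l\lvert W_{kl}^{(i,i)}\rvert+\tau\sum_m\lvert W_{km}^{(i,i)}\rvert=1$. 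Your ``resolution'' of this obstacle conflates the \emph{signed} row sum (which is indeed $1+2\tau\sum_l W_{kl}^{(i,i)}<1$) with the \emph{absolute} row sum that the infinity norm requires; the ``leftover copy of $-\tau\sum_l W_{kl}^{(i,i)}$'' you invoke is consumed precisely by converting the negative off-diagonal entries to their absolute values. Consequently the norm bound only yields $\lvert\lambda\rvert\le 1$, which is not enough to invoke Lemma~\ref{lm:zeroconvergence}.

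What is missing is an argument excluding eigenvalues of modulus exactly $1$. The paper proceeds via the Gershgorin theorem: each eigenvalue of $P_i$ lies in a disc centered at the real point $1-\tau\sum_l\lvert W_{kl}^{(i,i)}\rvert>0$ with radius $\tau\sum_l\lvert W_{kl}^{(i,i)}\rvert$, so the disc is internally tangent to the unit circle and the only candidate eigenvalue of modulus $1$ is $\lambda=1$ itself; it then rules out $\lambda=1$ by a separate rank/null-space argument showing the columns of $P_i-\mathbf{I}$ are linearly independent (so $(P_i-\mathbf{I})Y=\mathbf{0}$ has no nonzero solution). Your proposal contains no substitute for this second step, so as written it does not establish the strict spectral-radius bound that the conclusion requires.
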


See Appendix \ref{sec:app_d} for proof of Lemma \ref{lm:nd_P}. The convergence of $\mathcal{Q}^t$ can be proved on the lines of Lemma~\ref{lm:random_Q} and we skip the proof and state the result directly.
\begin{lemma}\label{lm:nd_Qconverge}
Let $G = (V, E, \mathcal{W})$ be a matrix-weighted network with a spanning tree and negative definite edge weights. If agents update their states using \eqref{eq:SyncUpdate}, then $\lim\limits_{t \rightarrow \infty}\mathcal{Q}^t$ converges almost surely to a zero matrix. 
\end{lemma}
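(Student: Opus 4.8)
The plan is to replay the proof of Lemma~\ref{lm:random_Q} almost verbatim, the only new ingredient being a sign-invariance of the matrix $\mathcal{Q}$ when every edge weight is flipped from positive to negative definite. From \eqref{eq:Qij} (and Remark~\ref{rm:structure_P_Q}) the entries of a block $Q_{ij}$ are $Q_{ij}^{(k,m)} = \tau W_{km}^{(i,j)}$ for $k \neq m$, which carry no matrix-sign term, and $Q_{ij}^{(k,k)} = -\tau\sum_{l\in\mathcal{N}_k}\text{sgn}(W_{kl})W_{kl}^{(i,j)}$ on the block diagonal. When every $W_{kl}$ is negative definite each $\text{sgn}(W_{kl})$ equals $-1$ instead of $+1$, so each $Q_{ij}^{(k,k)}$ merely changes sign while keeping its magnitude; the off-block-diagonal entries, the support of $\mathcal{Q}$ (the edge set $E$ of $G$), the fact that every diagonal block of $\mathcal{Q}$ equals $\mathbf{0}_{n\times n}$, and the admissible step-size range $\mathcal{R}$ are all unchanged. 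Hence the entrywise absolute value $|\mathcal{Q}|$ is exactly the nonnegative matrix appearing in the positive-definite analysis of Section~\ref{sec:async_result}.

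The steps are then: (i) record that $|\mathcal{Q}|$ is the same under either sign convention; (ii) note that the positive-definite analysis of Section~\ref{sec:async_result} — which, by the remark following Corollary~\ref{cr:synch}, applies verbatim to the synchronous model, where the induced graph of $\mathbf{P}$ (equivalently $\mathcal{F}$) always contains a spanning tree — shows $\mathcal{Q}^t \to \mathbf{0}$ there, and since every estimate in that proof controls only the magnitudes of the entries of $\mathcal{Q}$ (through $\tau\in\mathcal{R}$), its edge-support structure, and its vanishing diagonal blocks, it in fact gives $\rho(|\mathcal{Q}|) < 1$, i.e. $|\mathcal{Q}|^{t}\to\mathbf{0}$ by Lemma~\ref{lm:zeroconvergence}; (iii) transfer to the negative-definite $\mathcal{Q}$ using the standard bound $\rho(\mathcal{Q}) \le \rho(|\mathcal{Q}|) < 1$ (see, e.g., \cite{horn2012matrix}) and apply Lemma~\ref{lm:zeroconvergence} again to conclude $\lim_{t\to\infty}\mathcal{Q}^t = \mathbf{0}$. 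Equivalently, one avoids spectral radii and simply uses the entrywise domination $|\mathcal{Q}^t| \le |\mathcal{Q}|^{t} \to \mathbf{0}$.

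I expect the only point needing genuine care is the sign-invariance of $|\mathcal{Q}|$: one must confirm that flipping every $\text{sgn}(W_{ij})$ from $+1$ to $-1$ affects only the block-diagonal entries $Q_{ij}^{(k,k)}$, and there only by a sign, and that $\mathcal{R}$ retains the same form so that the $\tau$-dependent estimates underlying Lemma~\ref{lm:random_Q} stay valid. Once that bookkeeping is done, each estimate from the positive-definite proof is a statement about the single matrix $|\mathcal{Q}|$ and carries over with no further work — which is exactly why the result can be stated without repeating the proof.
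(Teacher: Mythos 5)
Your proposal is correct and is essentially the paper's own (omitted) argument: the paper proves this "on the lines of Lemma~\ref{lm:random_Q}," whose only substantive estimate — that $\|\mathcal{Q}\|<1$ (equivalently, via Gershgorin as in Lemma~\ref{lm:unbalanced_q}, that $\rho(\mathcal{Q})<1$) — depends only on the entrywise magnitudes $|\mathcal{Q}|$ and the step-size condition, and is therefore blind to the definiteness signs, after which Lemma~\ref{lm:zeroconvergence} (or $\|\mathcal{Q}^t\|\le\|\mathcal{Q}\|^t$) finishes the proof. Your bookkeeping of exactly which entries flip sign is slightly off (under $W\mapsto -W$ it is the off-block-diagonal entries $\tau W_{km}^{(i,j)}$ that change sign, not the block-diagonal ones), but this is immaterial since only $|\mathcal{Q}|$ enters the estimates.
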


By using the convergence of $\mathcal{P}^t$ and $\mathcal{Q}^t$ we prove the following result.
 
\begin{lemma}\label{lm:nd_small_eigen}
	Let $G = (V, E, \mathcal{W})$ be a matrix-weighted network with a spanning tree and negative definite edge weights. If agents update their states using \eqref{eq:SyncUpdate}, then $\lim\limits_{t \rightarrow \infty} \mathcal{F}^t$ converges almost surely to a zero matrix. 
\end{lemma}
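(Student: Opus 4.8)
The plan is to prove directly that $\lim_{t\to\infty}\mathcal{F}^{t}=\mathbf{0}$, in two moves: first establish that $\mathcal{F}^{t}=(\mathcal{P}+\mathcal{Q})^{t}$ converges at all, reusing the Cauchy-sequence machinery of Lemma~\ref{lm:cauchy} (equivalently Lemma~\ref{lm:prod_F(t) convergence}), and then pin the limit down to the zero matrix. Taking $A=\mathcal{P}$ and $B=\mathcal{Q}$, the ingredients I would record are: (i) $\|\mathcal{P}\|\le 1$ --- indeed each absolute row sum of $\mathcal{P}$ equals $1$ for $\tau\in\mathcal{R}$, since by \eqref{eq:Pi} the diagonal defect $\tau\sum_{l\in\mathcal{N}_k}|W_{kl}^{(i,i)}|$ exactly matches the total magnitude of the off-diagonal entries of that block, the same computation as in the positive-definite case; (ii) $\|\mathcal{Q}\|\le\delta<1$, again a consequence of the step-size range $\mathcal{R}$ exactly as in Section~\ref{sec:async_result}; (iii) $\lim_{t\to\infty}\mathcal{P}^{t}=\mathbf{0}$ by Lemma~\ref{lm:nd_P}, so in particular this limit exists; and (iv) $\lim_{t\to\infty}\mathcal{Q}^{t}=\mathbf{0}$ by Lemma~\ref{lm:nd_Qconverge}. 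These are precisely the hypotheses used in the proof of Lemma~\ref{lm:cauchy}, the one difference being that $\mathcal{F}=\mathcal{P}+\mathcal{Q}$ is not a positive matrix; but positivity of $A+B$ is invoked there only to conclude that the limit is positive and is never used in showing $(A+B)^{t}$ is Cauchy. Running that argument verbatim therefore yields that $\mathcal{F}^{t}$ converges to some matrix $\mathcal{B}$.

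The second move shows $\mathcal{B}=\mathbf{0}$, and here the strengthening over Lemma~\ref{lm:cauchy} --- that $\lim\mathcal{P}^{t}$ is not merely assumed to exist but equals $\mathbf{0}$ --- does the work. For any fixed finite $r$ set $J_{t,r}=\mathcal{P}^{\,t-r}\mathcal{F}^{r}$; then $\lim_{t\to\infty}J_{t,r}=\big(\lim_{s\to\infty}\mathcal{P}^{s}\big)\mathcal{F}^{r}=\mathbf{0}$ by Lemma~\ref{lm:nd_P}. On the other hand, the estimate on $\|J_{t}-J_{t,r}\|$ established in the proof of Lemma~\ref{lm:cauchy} (which uses only $\|\mathcal{P}\|\le1$, $\|\mathcal{Q}\|<1$, and $\lim\mathcal{Q}^{t}=\mathbf{0}$) gives: for every $\epsilon>0$ one can choose $r$ so that $\|\mathcal{F}^{t}-J_{t,r}\|\le\epsilon$ for all sufficiently large $t$. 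Combining the two, $\|\mathcal{B}\|=\lim_{t\to\infty}\|\mathcal{F}^{t}\|\le\limsup_{t\to\infty}\|\mathcal{F}^{t}-J_{t,r}\|+\limsup_{t\to\infty}\|J_{t,r}\|\le\epsilon+0$. As $\epsilon>0$ is arbitrary, $\mathcal{B}=\mathbf{0}$, i.e.\ $\lim_{t\to\infty}\mathcal{F}^{t}=\mathbf{0}$; since the premises furnished by Lemmas~\ref{lm:nd_P} and \ref{lm:nd_Qconverge} hold almost surely, so does the conclusion.

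I expect the delicate step to be the convergence in the first move rather than the evaluation in the second: one must check that the Cauchy estimates of Lemma~\ref{lm:cauchy} genuinely go through with $A=\mathcal{P}$, $B=\mathcal{Q}$, in particular that $\|\mathcal{Q}\|<1$ under $\tau\in\mathcal{R}$ and that $\|\mathcal{F}\|$ being possibly larger than $1$ (its absolute row sums can exceed $1$) causes no trouble --- which is exactly why the argument cannot be reduced to submultiplicativity of the norm and must lean on the block-diagonal/block-off-diagonal splitting $\mathcal{F}=\mathcal{P}+\mathcal{Q}$ with both parts power-convergent to zero. An alternative that mirrors the positive-definite development of Section~\ref{sec:async_result} would be to establish convergence of $\mathcal{F}^{t}$ as above, then show by a Jordan-form argument (as in Lemma~\ref{lm:PQ_eigen_less_random}) that no eigenvalue of $\mathcal{F}$ exceeds $1$ in modulus, and finally rule out the eigenvalue $\lambda=1$ --- whose presence with multiplicity $d$ is precisely what produced nonzero global consensus in Theorem~\ref{th:random}, and whose disappearance here is caused by the negative sign turning the Laplacian null-vectors $\mathbf{1}_{n}\otimes(\text{const})$ into non-fixed directions; but that last step is the genuine obstacle in the eigenvalue route and is what the direct argument above neatly avoids, since $\lim\mathcal{P}^{t}=\mathbf{0}$ forces the limit to vanish outright.
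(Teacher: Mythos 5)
Your reduction to Lemma~\ref{lm:cauchy} does not go through, and the gap sits in the step you judge safest. Your entire argument uses only four facts about the summands: $\|\mathcal{P}\|\le 1$, $\lim_{t}\mathcal{P}^{t}=\mathbf{0}$, $\|\mathcal{Q}\|\le\delta<1$, and $\lim_{t}\mathcal{Q}^{t}=\mathbf{0}$. These are not sufficient to conclude that $(\mathcal{P}+\mathcal{Q})^{t}$ converges, let alone to $\mathbf{0}$. Take $A=\left(\begin{smallmatrix}0&1\\ 1/2&0\end{smallmatrix}\right)$ and $B=\left(\begin{smallmatrix}0&0\\ 1/2&0\end{smallmatrix}\right)$: then $\|A\|=1$, $A^{t}\to\mathbf{0}$ (eigenvalues $\pm 1/\sqrt{2}$), $\|B\|=1/2$ and $B^{2}=\mathbf{0}$, yet $A+B$ is the permutation matrix, whose powers alternate between itself and $\mathbf{I}$ and never converge. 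The same example defeats your second move: $J_{t,r}=A^{t-r}(A+B)^{r}\to\mathbf{0}$ as $t\to\infty$ for every fixed $r$, while $\|(A+B)^{t}\|\equiv 1$, so the claimed estimate $\|(A+B)^{t}-J_{t,r}\|\le\epsilon$ is false. The root cause is that the expansion $(A+B)^{s}-A^{s}=A^{s-1}B+\cdots+B^{s}$ used in step (a) of the proof of Lemma~\ref{lm:cauchy} is valid only for commuting matrices; the omitted mixed words such as $BA^{s-1}$ are exactly what this example exploits, and $\mathcal{P}$ and $\mathcal{Q}$ do not commute here. So some structural information about $\mathcal{F}=\mathcal{P}+\mathcal{Q}$ itself, beyond norm and power facts about its two blocks, is indispensable.

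That missing information is precisely what the paper's proof supplies, and it is the step you flag at the end as ``the genuine obstacle'' and hope to sidestep: the paper shows directly that $\mathcal{F}$ has no eigenvalue equal to $1$ (a rank--nullity argument on the columns of $\mathcal{F}-\mathbf{I}$, which uses the sign pattern forced by negative definiteness together with the step-size range) and no eigenvalue of modulus exceeding $1$ (the Jordan-form argument of Lemma~\ref{lm:PQ_eigen_less_random}), and then applies Lemma~\ref{lm:zeroconvergence}. Note also that even if convergence $\mathcal{F}^{t}\to\mathcal{B}$ were granted, the identity $\mathcal{F}\mathcal{B}=\mathcal{B}$ places every column of $\mathcal{B}$ in $\ker(\mathcal{F}-\mathbf{I})$, so identifying the limit as zero again comes down to showing $1\notin\sigma(\mathcal{F})$. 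Your preliminary observations ($\|\mathcal{P}\|=1$ in the negative-definite case, $\mathcal{P}^{t}\to\mathbf{0}$, $\mathcal{Q}^{t}\to\mathbf{0}$) are correct and consistent with Lemmas~\ref{lm:nd_P} and~\ref{lm:nd_Qconverge}, but the spectral analysis of $\mathcal{F}$ cannot be bypassed by the nonhomogeneous-product machinery.
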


See Appendix \ref{sec:app_e} for proof of Lemma \ref{lm:nd_small_eigen}. Now we are ready to present the main result of this section.
\begin{theorem}
	\label{thm:nd}
	Let $G = (V, E, \mathcal{W})$ be a matrix-weighted network with a spanning tree and negative definite edge weights. If agents update their state using \eqref{eq:Update}, then the network achieves zero consensus asymptotically almost surely for all initial states.
\end{theorem}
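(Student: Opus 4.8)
The plan is to reduce the statement directly to Lemma~\ref{lm:nd_small_eigen}, which already carries all the analytic weight. Recall that the synchronous update rule \eqref{eq:Update}, $X(t)=\mathbf{P}^tX(0)$, is precisely \eqref{eq:SyncUpdate}, $\mathcal{X}(t)=\mathcal{F}^t\mathcal{X}(0)$ with $\mathcal{F}=\mathcal{P}+\mathcal{Q}$, written in a regrouped coordinate system (the $nd$ entries of $X(t)$ are merely permuted so that the $k$-th coordinate of every agent is collected into $\mathcal{X}_k(t)$). Hence there is a fixed permutation matrix $\Pi$ with $\mathcal{X}(t)=\Pi X(t)$ and $\mathcal{F}=\Pi\mathbf{P}\Pi^{-1}$, so $\lim_{t\to\infty}\mathbf{P}^t=\mathbf{0}_{nd\times nd}$ if and only if $\lim_{t\to\infty}\mathcal{F}^t=\mathbf{0}_{nd\times nd}$, and likewise $\mathcal{X}(t)\to\mathbf{0}_{nd}$ iff $X(t)\to\mathbf{0}_{nd}$.

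First I would invoke Lemma~\ref{lm:nd_small_eigen}: under the hypotheses of the theorem (spanning tree, every edge weight negative definite, synchronous update), $\lim_{t\to\infty}\mathcal{F}^t$ converges almost surely to the zero matrix. Then, since matrix--vector multiplication is continuous, $\lim_{t\to\infty}\mathcal{X}(t)=\bigl(\lim_{t\to\infty}\mathcal{F}^t\bigr)\mathcal{X}(0)=\mathbf{0}_{nd}$ for every initial state $\mathcal{X}(0)$. Unpacking the definition of $\mathcal{X}(t)$, this says exactly that every coordinate of every agent's state vector tends to $0$, i.e.\ $\lim_{t\to\infty}X_i(t)=\mathbf{0}_d$ for all $i\in V$, which is zero consensus by definition.

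At this level there is essentially no obstacle: the difficulty has been pushed into Lemma~\ref{lm:nd_small_eigen} and the lemmas feeding it (Lemmas~\ref{lm:nd_P} and \ref{lm:nd_Qconverge}), where one shows $\mathcal{P}^t\to\mathbf{0}$ from the fact that each block $P_i$ is built from negative definite diagonal contributions and so has all eigenvalues of modulus strictly less than $1$, shows $\mathcal{Q}^t\to\mathbf{0}$ by the argument of Lemma~\ref{lm:random_Q}, and then combines the two — either by establishing $|\lambda|<1$ for every eigenvalue of $\mathcal{F}$ and applying Lemma~\ref{lm:zeroconvergence}, or by a Cauchy-sequence argument in the spirit of Lemma~\ref{lm:cauchy}. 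The only point worth a remark in the write-up is that the synchronous model carries no randomness, so the phrase ``almost surely'' here refers to the sure event (equivalently, the conclusion holds on the probability-one set handed down by Lemma~\ref{lm:nd_small_eigen}). Hence the network achieves zero consensus asymptotically for all initial states.
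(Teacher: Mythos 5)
Your proposal is correct and matches the paper's own proof, which likewise derives Theorem~\ref{thm:nd} directly from the regrouped update rule \eqref{eq:SyncUpdate} and Lemma~\ref{lm:nd_small_eigen}; your explicit mention of the coordinate permutation and the vacuity of ``almost surely'' in the deterministic synchronous model are harmless elaborations of the same argument.
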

\begin{proof}
	The proof directly follows from the modified update rule~\eqref{eq:SyncUpdate} and Lemma~\ref{lm:nd_small_eigen}.
\end{proof}
\begin{example} \label{ex:tenode_nd_zeroconsensus}
    Consider the $10$ agents $4$-regular network shown in Figure \ref{fig:Tennodenw} with each edge weight being a negative definite matrix. Initial state vectors and the edge weight matrices are chosen randomly. State vectors evolution when agents update their states using \eqref{eq:Update_random} is shown in Figure \ref{fig:nd_zero}. Observe that all the agents converge to a zero vector thus achieving zero consensus.
\end{example}
\begin{figure*}
	\centering
	\subfigure[]{%
		\includegraphics[width=0.3\linewidth]{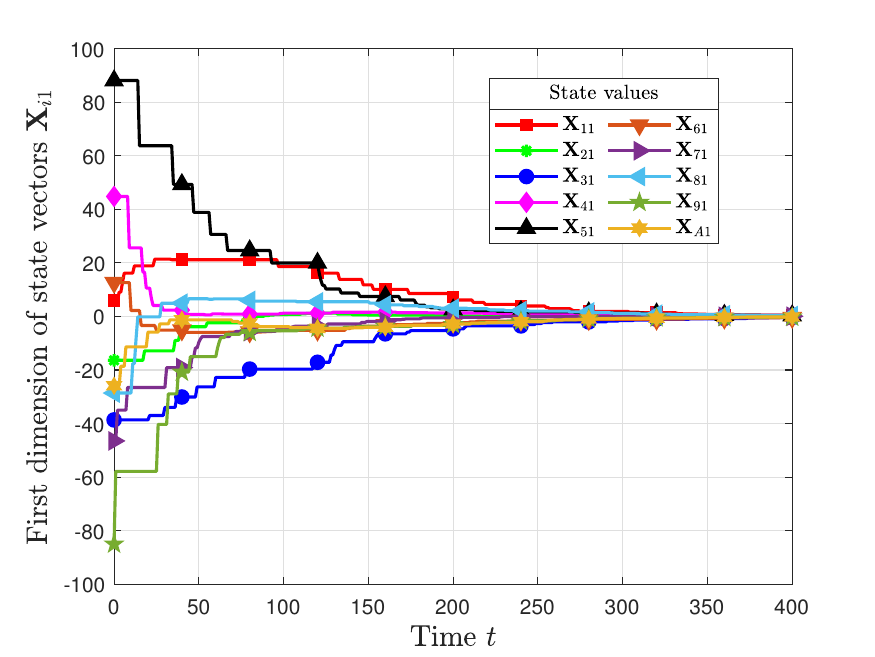}
		\label{fig:ND_ZC1}}
	\quad
	\subfigure[]{%
		\includegraphics[width=0.3\linewidth]{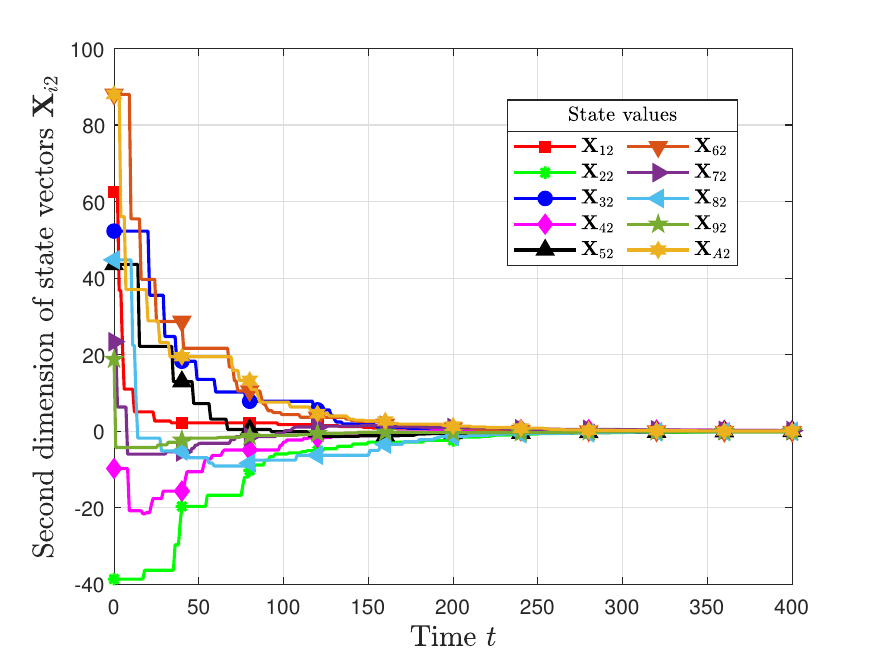}
		\label{fig:ND_ZC2}}
	\quad
	\subfigure[]{%
		\includegraphics[width=0.3\linewidth]{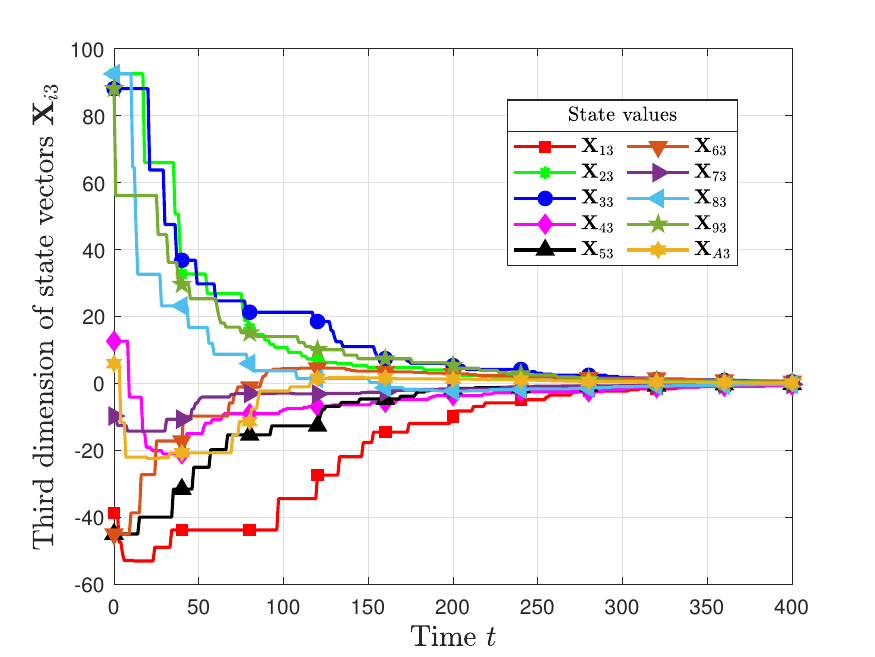}
		\label{fig:ND_ZC3}}
	\caption{Evolution of state vectors of an agent $i \in V$ for the network of Figure \ref{fig:Tennodenw} considering each edge weight as a negative definite matrix and state update with asynchronous update rule \eqref{eq:Update_random} showing zero consensus. (a) First dimension of state vectors. (b) Second dimension of state vectors.  (c) Third dimension of state vectors.}
	\label{fig:nd_zero}
\end{figure*}
\subsection{Structurally unbalanced networks}
\label{sec:unbalanced}

In this section we analyze vector consensus on structurally unbalanced networks. By the definition of a structurally unbalanced network there is at least an edge with positive definite and an edge with negative definite weight matrix. We start by proving that both $\mathcal{P}^t$ and $\mathcal{Q}^t$ matrices converge to zero matrix in this case thus leading to zero consensus. 
\begin{lemma}\label{lm:unbalanced_p}
Let $G = (V, E, \mathcal{W})$ be a structurally unbalanced matrix-weighted network with a spanning tree. If agents update their states using \eqref{eq:SyncUpdate}, then $\lim\limits_{t \rightarrow \infty}\mathcal{P}^t$ converges almost surely to a zero matrix.
\end{lemma}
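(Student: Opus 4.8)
The plan is to reduce the statement to a spectral-radius estimate and then invoke Lemma~\ref{lm:zeroconvergence}. Since $\mathcal{P}$ is block diagonal (see \eqref{eq:P}), one has $\mathcal{P}^t=\mathrm{diag}(P_1^t,\dots,P_d^t)$, so it suffices to prove $\rho(P_i)<1$ for each $i\in\{1,\dots,d\}$; then $P_i^t\to\mathbf{0}$ by Lemma~\ref{lm:zeroconvergence} and hence $\mathcal{P}^t\to\mathbf{0}$. Note that in the synchronous model the update matrix is fixed, so this convergence is actually deterministic, which is stronger than the ``almost surely'' in the statement.

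The first step is to extract from \eqref{eq:Pi} and the choice $\tau\in\mathcal{R}$ two facts about each block $P_i$: its diagonal entries $P_i^{(k,k)}$ are strictly positive, and the entrywise-absolute-value matrix $|P_i|$ is row-stochastic with induced graph equal to $G$ together with all self-loops; in particular that induced graph has a spanning tree, so by Lemma~\ref{lm:rank1convergence} $|P_i|$ is unichain, with a unique closed communicating class $C$ on which $|P_i|$ is irreducible. The comparison $\rho(P_i)\le\rho(|P_i|)=1$ (see \cite{horn2012matrix}) then reduces matters to excluding eigenvalues of modulus exactly one. So suppose $P_i v=\lambda v$ with $v\neq\mathbf{0}$, $|\lambda|=1$. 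Entrywise $|v|\le|P_i|\,|v|$; pairing with the nonnegative left Perron vector $c^{T}$ of $|P_i|$ ($c^{T}|P_i|=c^{T}$) forces $|P_i|\,|v|=|v|$ on $C=\operatorname{supp}(c)$, so $|v|$ restricted to $C$ is a Perron vector of the irreducible block and is strictly positive there. (If instead $|v|$ vanishes on $C$, then $v$ restricts to an eigenvalue-$\lambda$ eigenvector of the principal submatrix of $P_i$ on the transient vertices, whose absolute value is substochastic with every row able to reach a deficient row and hence has spectral radius $<1$ --- a contradiction.) On $C$, the identity $|v_k|=\sum_m|P_i^{(k,m)}|\,|v_m|$ together with $|\sum_m P_i^{(k,m)}v_m|=|v_k|$ is an equality in the triangle inequality, so all nonzero summands $P_i^{(k,m)}v_m$ share one argument; taking $m=k$ and using $P_i^{(k,k)}>0$ forces $\lambda=1$, and then taking $m\neq k$ forces $\arg v_m=\arg v_k$ when the edge $(k,m)$ is positive definite and $\arg v_m=\arg v_k+\pi$ when it is negative definite. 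Thus on $C$ the entries of $v$ take arguments in only two values $\pi$ apart, the induced $\pm$ labelling of $C$ respects edge signs, and one checks that this yields a structural-balance partition of $G$, contradicting the hypothesis. Hence $\rho(P_i)<1$, and Lemma~\ref{lm:zeroconvergence} finishes the proof.

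The main obstacle is the final rigidity step, and more subtly carrying it out under the weak hypothesis that $G$ merely has a spanning tree rather than being strongly connected: one must track the closed communicating class $C$ of $|P_i|$ and extend the $\pm$ labelling from $C$ to the transient vertices along the spanning tree without creating an inconsistency. An alternative and cleaner route to $\rho(P_i)<1$, available when the edge matrices are symmetric so that $P_i=\mathbf{I}_n-\tau L_i$ with $L_i$ a symmetric signed Laplacian, is to invoke the classical fact that such an $L_i$ is positive definite exactly when the underlying signed network is structurally unbalanced, and then use the step size $\tau\in\mathcal{R}$ to place every eigenvalue $1-\tau\mu$ of $P_i$ inside $(-1,1)$; the Perron--Frobenius route above avoids the symmetry assumption at the cost of the bookkeeping just described.
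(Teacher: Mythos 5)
Your overall strategy is genuinely different from the paper's: the paper bounds the spectrum of each block $P_i$ by Gershgorin discs and then tries to exclude $\lambda=1$ by showing the columns of $P_i-\mathbf{I}$ are linearly independent, whereas you compare $P_i$ with its entrywise absolute value $|P_i|$, use $\rho(P_i)\le\rho(|P_i|)=1$, and rule out unimodular eigenvalues through the equality case of the triangle inequality. Up to the point where you conclude that the unique closed communicating class $C$ of $|P_i|$ must admit a $\pm$ labelling consistent with the edge signs (including the correct treatment of the case $|v|_C=\mathbf{0}$ via the substochastic transient block), your argument is sound.

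The gap is precisely the step you flag as ``the main obstacle,'' and it is not a bookkeeping issue that can be repaired: the eigenvector equation places no structural-balance constraint on the transient vertices, so the labelling on $C$ cannot in general be extended to a balance partition of $G$. Concretely, take $d=1$, $V=\{1,2,3\}$, $\mathcal{N}_1=\{2\}$, $\mathcal{N}_2=\{1\}$, $\mathcal{N}_3=\{1,2\}$, weights $W_{12}=W_{21}=W_{31}=1$, $W_{32}=-1$, and any admissible $\tau$; e.g.\ $\tau=0.2$ gives $P_1$ with rows $(0.8,0.2,0)$, $(0.2,0.8,0)$, $(0.2,-0.2,0.6)$. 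This network has a spanning tree rooted at agent $1$ and is structurally unbalanced, yet $v=(1,1,0)^T$ satisfies $P_1v=v$: the closed class $C=\{1,2\}$ carries only a positive edge and is balanced, while the imbalance lives entirely on the transient vertex $3$, which your rigidity argument never sees. So the intermediate claim ``this yields a structural-balance partition of $G$'' is false, $\rho(P_1)=1$, and $P_1^t\not\to\mathbf{0}$; no completion of your argument can avoid this (the paper's linear-independence argument stumbles on the same example, since the vanishing of row $3$ of $(P_1-\mathbf{I})b$ does not force $b_3=\pm b_1$). Your proof does become complete under the additional hypothesis that $C=V$, i.e.\ that $G$ is strongly connected --- in particular for connected undirected networks --- and your alternative signed-Laplacian route likewise presupposes that setting; at the stated level of generality the structural unbalance must be witnessed inside the closed class for the conclusion to hold.
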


See Appendix \ref{sec:app_f} for proof of Lemma \ref{lm:unbalanced_p}. 

The next lemma shows that $\lim\limits_{t \rightarrow \infty}\mathcal{Q}^t$ converges almost surely to a zero matrix when the network is structurally unbalanced.
\begin{lemma}\label{lm:unbalanced_q}
		Let $G = (V, E, \mathcal{W})$ be a structurally unbalanced matrix-weighted network with a spanning tree. If agents update their states using \eqref{eq:SyncUpdate}, then $\lim\limits_{t \rightarrow \infty}\mathcal{Q}^t$ converges almost surely to a zero matrix.
\end{lemma}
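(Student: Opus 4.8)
The plan is to mirror the proof of Lemma~\ref{lm:random_Q} (and its synchronous counterpart quoted for Lemma~\ref{lm:nd_Qconverge}), exploiting that structural imbalance changes only the \emph{signs} of the entries of $\mathcal{Q}$, not their positions nor the way their magnitudes are controlled by the weight matrices. First I would recall from Remark~\ref{rm:structure_P_Q} and \eqref{eq:Qij} that $\mathcal{Q}$ has zero diagonal blocks and that each of its nonzero entries equals, up to the factor $\tau$, either an off-diagonal entry $W_{km}^{(i,j)}$ ($i\neq j$) of some weight matrix, at a position relating two distinct agents $k,m$, or a signed sum $\sum_{l\in\mathcal{N}_k}\text{sgn}(W_{kl})W_{kl}^{(i,j)}$ of such entries, at a within-agent cross-dimension position. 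Since the update \eqref{eq:SyncUpdate} is deterministic, the ``almost surely'' qualifier is vacuous and it suffices to show $\lim_{t\to\infty}\mathcal{Q}^t=\mathbf{0}$.

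The main step is to dominate $\mathcal{Q}$ entrywise by a nonnegative matrix $\overline{\mathcal{Q}}$ that is blind to the signs: let $\overline{\mathcal{Q}}$ have entries $\tau\sum_{l\in\mathcal{N}_k}|W_{kl}^{(i,j)}|$ at the within-agent positions and $\tau|W_{km}^{(i,j)}|$ at the between-agent positions, so that $|\mathcal{Q}^{(a,b)}|\le\overline{\mathcal{Q}}^{(a,b)}$ and hence $|(\mathcal{Q}^t)^{(a,b)}|\le(\overline{\mathcal{Q}}^t)^{(a,b)}$ for every $t$. Because $\overline{\mathcal{Q}}$ depends only on the magnitudes $|W_{kl}^{(i,j)}|$ and on the topology of $G$, it is exactly the bounding matrix used in the proof of Lemma~\ref{lm:random_Q}, and the same computation gives $\|\overline{\mathcal{Q}}\|=2\tau\max_{i,k}\sum_{j\neq i}\sum_{l\in\mathcal{N}_k}|W_{kl}^{(i,j)}|$; using that an off-diagonal entry of a positive- or negative-definite matrix is dominated by its diagonal entries, $|W_{kl}^{(i,j)}|\le\frac{1}{2}\big(|W_{kl}^{(i,i)}|+|W_{kl}^{(j,j)}|\big)$, the choice $\tau\in\mathcal{R}$ forces $\|\overline{\mathcal{Q}}\|<1$. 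Then $\rho(\mathcal{Q})\le\|\mathcal{Q}\|\le\|\overline{\mathcal{Q}}\|<1$, so $\mathcal{Q}^t\to\mathbf{0}$ either by Lemma~\ref{lm:zeroconvergence} or directly from $\|\overline{\mathcal{Q}}^t\|\le\|\overline{\mathcal{Q}}\|^t\to 0$ together with the entrywise domination. This is precisely the ingredient that, combined with Lemma~\ref{lm:unbalanced_p}, will let one conclude $\mathcal{F}^t\to\mathbf{0}$ and hence zero consensus on structurally unbalanced networks.

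The one place the argument is not purely formal is the strict inequality $\|\overline{\mathcal{Q}}\|<1$: one must verify that the row sums of $|\mathcal{Q}|$, which collect off-diagonal weight entries summed over both the neighbour set $\mathcal{N}_k$ and the $d-1$ remaining coordinates, are still controlled by the same step-size that was chosen to make $\mathcal{P}$ sub-stochastic. This is exactly the estimate performed in the proof of Lemma~\ref{lm:random_Q}, and since it uses the sign pattern of the weights only through $|\text{sgn}(W_{kl})|=1$, it transfers verbatim to the structurally unbalanced setting; that is what lets the result be proved ``on the lines of Lemma~\ref{lm:random_Q}.''
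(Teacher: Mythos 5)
Your proposal is correct and is essentially the paper's own argument: since the diagonal of $\mathcal{Q}$ is zero, your bound $\rho(\mathcal{Q})\le\|\overline{\mathcal{Q}}\|<1$ via the maximum absolute row sum coincides exactly with the Gershgorin-disc estimate the paper uses, and both conclude with Lemma~\ref{lm:zeroconvergence}. The only cosmetic difference is that you package the sign-blindness through an entrywise-dominating matrix $\overline{\mathcal{Q}}$, whereas the paper applies the triangle inequality directly to the row sums; the substance (absolute row sums bounded by $2\tau\sum_{j}\sum_{m}|W_{km}^{(i,j)}|<1$ from the step-size choice) is identical.
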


See Appendix \ref{sec:app_g} for proof of Lemma \ref{lm:unbalanced_q}. 

The proof of the following theorem follows by mimicking the proof of Theorem~\ref{th:random} by using Lemmas~\ref{lm:unbalanced_p},~\ref{lm:unbalanced_q}.
\begin{theorem}\label{thm:unbalanced}
		Let $G = (V, E, \mathcal{W})$ be a structurally unbalanced matrix-weighted network with a spanning tree. If agents update their states using \eqref{eq:Update}, then the network achieves zero consensus asymptotically almost surely for all initial states.
\end{theorem}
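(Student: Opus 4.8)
\emph{Proof idea.} The plan is to follow the proof of Theorem~\ref{th:random}, with the ``rank-one block'' limit replaced by the ``zero'' limit. Concretely, it is enough to establish the structurally-unbalanced counterpart of Lemma~\ref{lm:nd_small_eigen}, namely that $\lim_{t\to\infty}\mathcal{F}^t=\mathbf{0}_{nd}$; substituting this into the rearranged synchronous update \eqref{eq:SyncUpdate} gives $\lim_{t\to\infty}\mathcal{X}(t)=\lim_{t\to\infty}\mathcal{F}^t\mathcal{X}(0)=\mathbf{0}$ for every initial state, which is precisely zero consensus (equivalently, one may work with $\mathbf{P}^t$ and \eqref{eq:Update}). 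The ``almost surely'' is vacuous for the synchronous model; for the asynchronous variant the same argument would run over every sample path $\omega\in\Omega'$ exactly as in Section~\ref{sec:async_result}.

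First I would show $\lim_{t\to\infty}\mathcal{F}^t$ exists. Put $A:=\mathcal{P}$, $B:=\mathcal{Q}$, so $\mathcal{F}=A+B$. By the choice of step-size $\tau\in\mathcal{R}$, each block $P_i$ of $\mathcal{P}$ has nonnegative diagonal entries and absolute row sums equal to $1$ (note $\text{sgn}(W_{kl})W_{kl}^{(i,i)}=|W_{kl}^{(i,i)}|$ whether $W_{kl}$ is positive or negative definite), so $\|\mathcal{P}\|=1$; moreover $\lim_{t\to\infty}\mathcal{P}^t=\mathbf{0}$ exists by Lemma~\ref{lm:unbalanced_p} and $\lim_{t\to\infty}\mathcal{Q}^t=\mathbf{0}$ by Lemma~\ref{lm:unbalanced_q}. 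Re-running the Cauchy-sequence estimate in the proof of Lemma~\ref{lm:cauchy} with these three facts shows that $\mathcal{F}^t$ is a Cauchy sequence and hence converges to some matrix $D$. (The positivity of $C$ assumed in Lemma~\ref{lm:cauchy} enters only in describing the limit, not in the Cauchy estimate, so the sign pattern of $\mathcal{F}$ is irrelevant here; this is the same observation already exploited in the proof of Lemma~\ref{lm:nd_small_eigen}.)

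Next I would argue $D=\mathbf{0}$. Since $\lim_{t\to\infty}\mathcal{F}^t$ exists, the spectral radius of $\mathcal{F}$ is at most $1$, every eigenvalue of modulus $1$ equals $1$ and is semisimple, and $D$ is the spectral projector onto $\ker(\mathcal{F}-\mathbf{I}_{nd})$ — this is exactly the eigenvalue bookkeeping behind Lemmas~\ref{lm:multiplicity_eigen_random} and \ref{lm:PQ_eigen_less_random}. Hence $D=\mathbf{0}$ iff $1$ is not an eigenvalue of $\mathcal{F}$, and this is the ``multiplicity zero'' analogue of Lemma~\ref{lm:multiplicity_eigen_random}: in the positive-definite case the $d$ eigenvectors $\overrightarrow{V_1},\dots,\overrightarrow{V_d}$ for $\lambda=1$ are built from the constant vector (equivalently, $\{\mathbf{1}_n\otimes v : v\in\mathbb{R}^d\}$ is the kernel of the matrix-weighted Laplacian $\mathbf{L}$, and $\mathbf{P}=\mathbf{I}_{nd}-\tau\mathbf{L}$ is permutation-similar to $\mathcal{F}$), whereas on a structurally unbalanced network $\mathbf{L}$ has trivial kernel: a nonzero null vector of $\mathbf{L}$ would prescribe a two-valued labeling of $V$ certifying a bipartition that witnesses structural balance, contradicting the hypothesis. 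This nonsingularity of the signed matrix-weighted Laplacian on structurally unbalanced networks is the classical structural fact of \cite{Pan19,Su20,Pan22}, and it is precisely the property underpinning Lemmas~\ref{lm:unbalanced_p} and \ref{lm:unbalanced_q}. Therefore $\ker(\mathcal{F}-\mathbf{I}_{nd})=\{\mathbf{0}\}$, so $D=\mathbf{0}$, i.e. $\lim_{t\to\infty}\mathcal{F}^t=\mathbf{0}$, and \eqref{eq:SyncUpdate} finishes the proof.

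The main obstacle is the last step — ruling out the eigenvalue $\lambda=1$ of $\mathcal{F}$ — because this is where the hypothesis must be used in full and where the argument genuinely diverges from Theorem~\ref{th:random} ($\lambda=1$ there has algebraic and geometric multiplicity $d$; here it must have multiplicity $0$). A secondary point requiring care is that Lemma~\ref{lm:cauchy} is stated for a positive matrix $C$ whereas $\mathcal{F}$ has entries of both signs, so the Cauchy estimate should be re-invoked directly (as in Lemma~\ref{lm:nd_small_eigen}) rather than Lemma~\ref{lm:cauchy} cited as a black box; everything else — the reduction via \eqref{eq:SyncUpdate}, the computation $\|\mathcal{P}\|=1$, and the passage to sample paths in the asynchronous setting — is routine.
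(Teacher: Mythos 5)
Your proposal is correct and follows the same route the paper intends: the paper's proof of Theorem~\ref{thm:unbalanced} is a one-line instruction to mimic Theorem~\ref{th:random} using Lemmas~\ref{lm:unbalanced_p} and~\ref{lm:unbalanced_q}, and what you write out is exactly the glue that mimicking requires --- since $\mathcal{P}$ and $\mathcal{Q}$ do not commute, $\mathcal{P}^t\to\mathbf{0}$ and $\mathcal{Q}^t\to\mathbf{0}$ alone do not give $\mathcal{F}^t\to\mathbf{0}$, so one must first get existence of $\lim_t\mathcal{F}^t$ from the Cauchy estimate and then kill the limit by excluding $\lambda=1$. Your observation that the positivity hypothesis in Lemma~\ref{lm:cauchy} plays no role in the Cauchy estimate matches how the paper reuses that argument elsewhere (Lemma~\ref{lm:prod_F(t) convergence}, Lemma~\ref{lm:nd_small_eigen}). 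The one place you genuinely diverge is the mechanism for ruling out the unit eigenvalue: the paper's pattern (Appendix~E for the all-negative case, Appendix~F for $P_i$) is a direct rank--nullity computation showing the columns of $\mathcal{F}-\mathbf{I}$ (resp.\ $P_i-\mathbf{I}$) are linearly independent, using structural unbalance to force all coefficients to vanish; you instead identify $\ker(\mathcal{F}-\mathbf{I})$ with the kernel of the signed matrix-weighted Laplacian and invoke the gauge/bipartition argument that a nonzero null vector would certify structural balance. The two are equivalent in substance (the sign-propagation along the spanning tree is the same combinatorial fact in both), but your version works at the level of the whole matrix $\mathcal{F}$ rather than block by block and makes the role of the structural-unbalance hypothesis more transparent; its only cost is that the quadratic-form step is cleanest for undirected graphs, whereas the paper's column argument is stated (if somewhat loosely) for directed graphs with a spanning tree --- a caveat that applies equally to the paper's own appendices.
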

Combining the results of Theorem~\ref{th:bipartite} and Theorem~\ref{thm:unbalanced}, we obtain the following corollary.
\begin{corollary} \label{cor:necessary_sufficient}
	Let $G=(V,E,\mathcal{W})$ be a matrix-weighted network with a spanning tree. If agents update their states using \eqref{eq:Update}, then the network achieves bipartite consensus asymptotically almost surely for all initial states if and only if $G$ is structurally balanced.
\end{corollary}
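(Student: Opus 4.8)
The plan is to read off both directions of the equivalence from Theorem~\ref{th:bipartite} and Theorem~\ref{thm:unbalanced}, the only extra ingredients being that a spanning-tree network whose edge weights are all positive or negative definite is \emph{either} structurally balanced \emph{or} structurally unbalanced, and that zero consensus and bipartite consensus are mutually exclusive limiting behaviours, since the latter requires the common limit magnitude $\mathbf{C}$ to be non-zero.

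For the ``if'' direction I would simply invoke Theorem~\ref{th:bipartite}: when $G$ is structurally balanced and agents use~\eqref{eq:Update}, there is a bipartition $V = V_1 \cup V_2$ with $\lim_{t \to \infty} X_i(t) = \mathbf{C}$ for $i \in V_1$ and $\lim_{t \to \infty} X_j(t) = -\mathbf{C}$ for $j \in V_2$ almost surely, which is exactly bipartite consensus; the degenerate case $V_2 = \emptyset$ (an all-positive-definite network) collapses to the global consensus of Corollary~\ref{cr:synch}. The point needing a line of care is that bipartite consensus is defined with $\mathbf{C} \neq \mathbf{0}_d$, which is inherited from the non-degeneracy of the initial states already implicit in Theorem~\ref{th:bipartite}.

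For the ``only if'' direction I would argue by contraposition. Suppose $G$ is not structurally balanced; then, since every edge weight is positive or negative definite, $G$ is structurally unbalanced (the remaining all-negative-definite subcase being handled identically by Theorem~\ref{thm:nd}). Theorem~\ref{thm:unbalanced} then yields $\lim_{t \to \infty} X_i(t) = \mathbf{0}_d$ for every $i \in V$, for all initial states, almost surely. A network with this limiting behaviour achieves zero consensus and hence cannot achieve bipartite consensus, because the latter forces each limit into $\{\mathbf{C}, -\mathbf{C}\}$ with $\mathbf{C} \neq \mathbf{0}_d$. Therefore any network achieving bipartite consensus must be structurally balanced, and the two directions together give the claimed equivalence.

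Everything is short once Theorems~\ref{th:bipartite} and~\ref{thm:unbalanced} are in place, so I expect the main obstacle to be purely organizational: making the case split airtight --- verifying that ``structurally balanced'' and ``structurally unbalanced'' really exhaust all spanning-tree networks with positive/negative-definite edge weights --- and being explicit about the non-zero initial condition needed so that the balanced side genuinely delivers $\mathbf{C} \neq \mathbf{0}_d$, matching the ``for all initial states'' phrasing in the corollary.
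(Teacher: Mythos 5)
Your proposal is correct and follows essentially the same route as the paper, which obtains the corollary simply by combining Theorem~\ref{th:bipartite} (balanced $\Rightarrow$ bipartite consensus) with Theorem~\ref{thm:unbalanced} (unbalanced $\Rightarrow$ zero consensus, which excludes bipartite consensus since $\mathbf{C}\neq\mathbf{0}_d$). You are in fact somewhat more careful than the paper in making the case split exhaustive and in flagging the degenerate all-positive/all-negative and zero-initial-state cases, which the paper's one-line justification glosses over.
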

\begin{example} \label{ex:tenode_zeroconsensus}
	Consider a structurally unbalanced network by making edge weights between agents 1, 3 and 6, 8 negative definite in Figure \ref{fig:SB}. State evolution on this network is shown in Figure \ref{fig:zero} when agents update their states using \eqref{eq:Update_random}. Observe that all agents converge to zero vector thus achieving zero consensus. 
\end{example}
\begin{figure*}
	\centering
	\subfigure[]{%
		\includegraphics[width=0.3\linewidth]{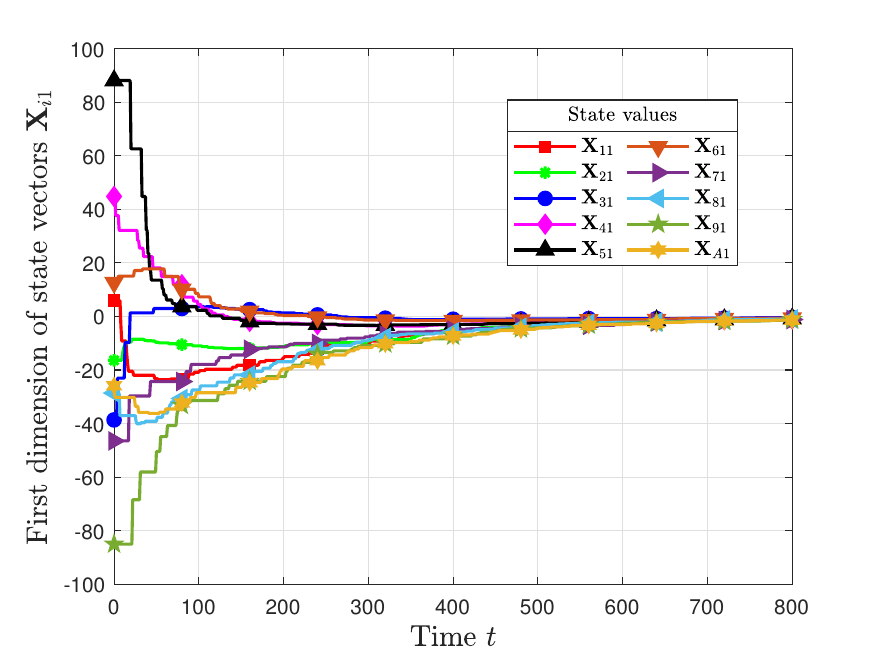}
		\label{fig:ZC1}}
	\quad
	\subfigure[]{%
		\includegraphics[width=0.3\linewidth]{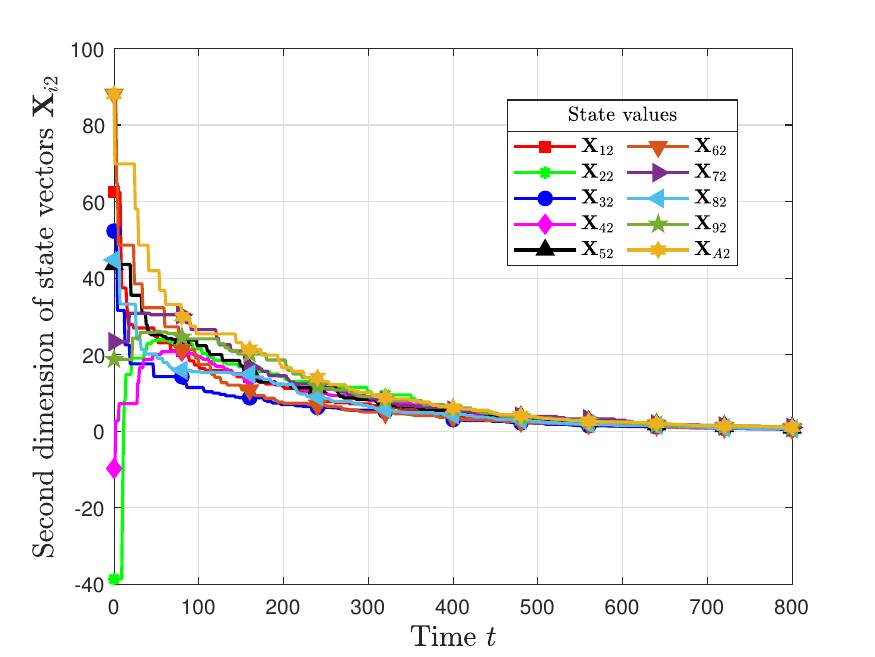}
		\label{fig:ZC2}}
	\quad
	\subfigure[]{%
		\includegraphics[width=0.3\linewidth]{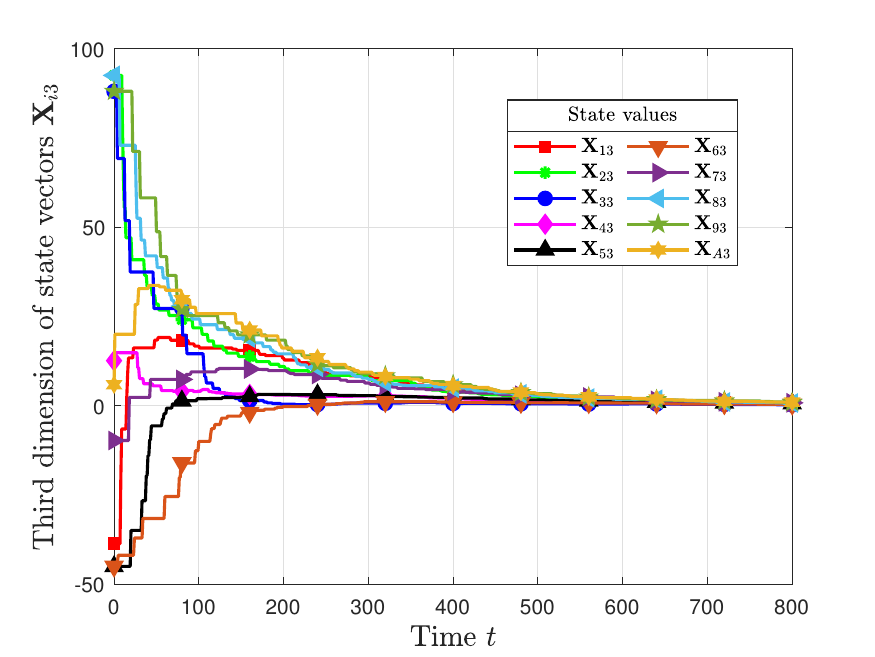}
		\label{fig:ZC3}}
	\caption{Evolution of state vectors of agents $i \in V$ for a structurally unbalanced network with asynchronous update rule \eqref{eq:Update_random} showing zero consensus. (a) First dimension of state vectors. (b) Second dimension of state vectors.  (c) Third dimension of state vectors.}
	\label{fig:zero}
\end{figure*}
%

\section{Numerical Results}
\label{sec:simulation}

In this section, we present numerical examples to illustrate some of the results presented in Sections \ref{sec:async_result}, \ref{sec:bipartite} and \ref{sec:zero}.

The following example shows that agents may converge to different state vectors on distinct sample paths in asynchronous update model despite having identical weight matrices and initial state vectors due to random sequence of update matrices.

\begin{example} \label{ex:tenode_globalconsensus1}
	Recall the network of Figure~\ref{fig:Tennodenw} and the global consensus achieved by agents using \eqref{eq:Update_random} shown in Figure~\ref{fig:global1}. Figure~\ref{fig:global2} shows the convergence of state vectors for the same network on a sample path other than that of Figure~\ref{fig:global1} with same weight matrices and initial state vectors. Observe that the state vectors converge to different consensus values in Figures~\ref{fig:global1} and \ref{fig:global2} due to random selection of updating agent at every time.
\end{example}
\begin{figure*}
	\centering
	\subfigure[]{%
		\includegraphics[width=0.3\linewidth]{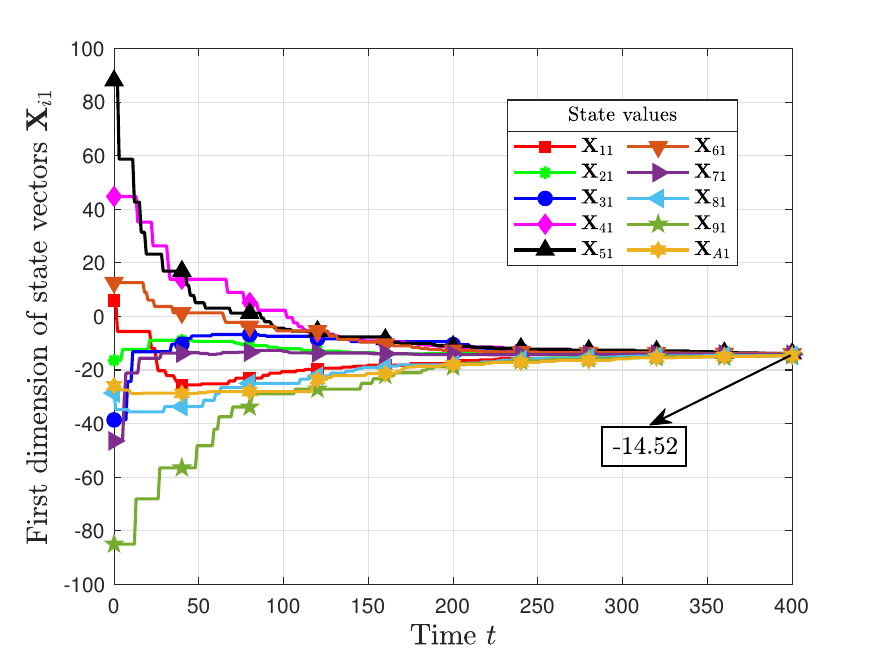}
		\label{fig:consensus1}}
	\quad
	\subfigure[]{%
		\includegraphics[width=0.3\linewidth]{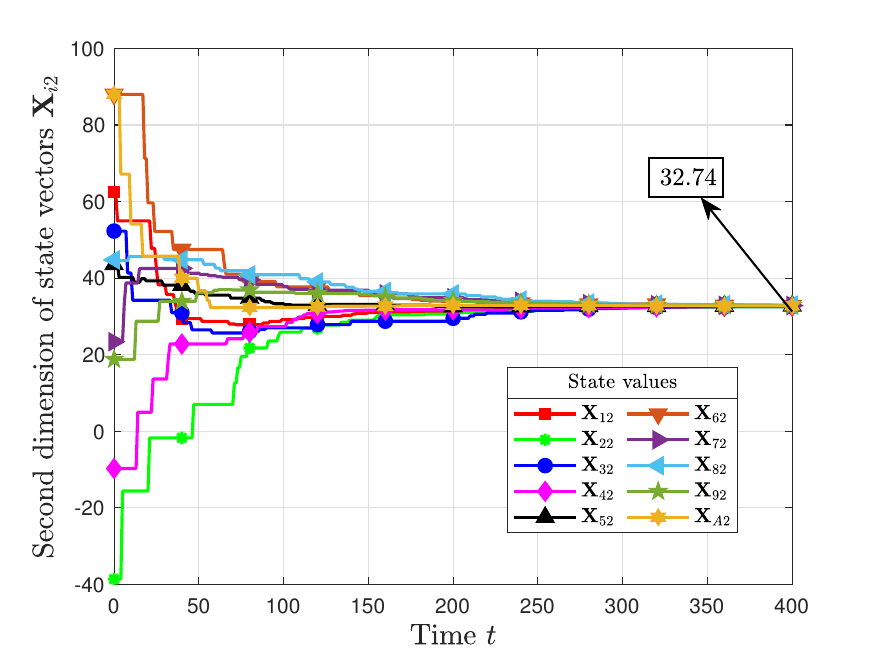}
		\label{fig:consensus2}}
	\quad
	\subfigure[]{%
		\includegraphics[width=0.3\linewidth]{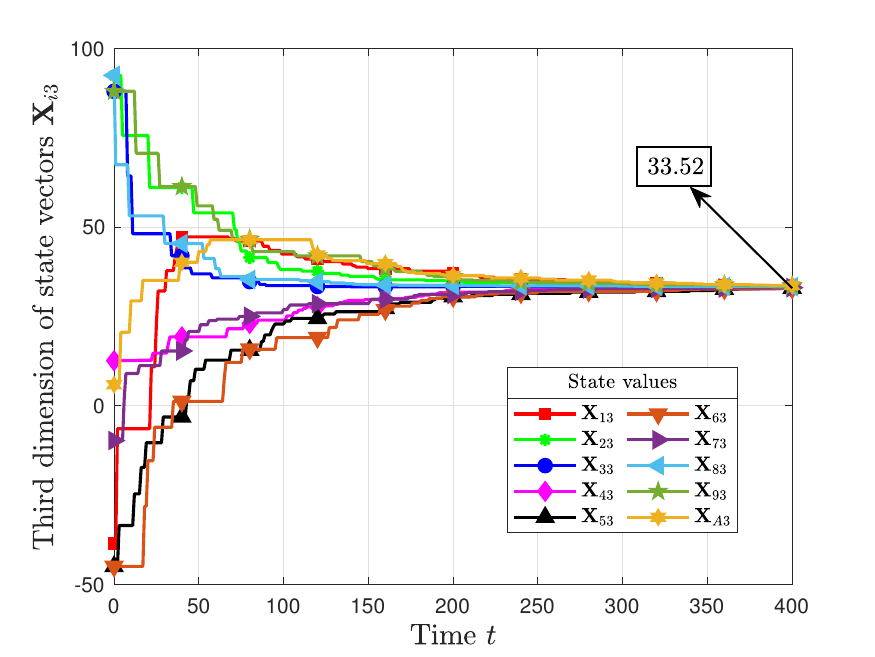}
		\label{fig:consensus3}}
	\caption{ Evolution of state vectors of an agent $i \in V$ for the network shown in Figure \ref{fig:Tennodenw} with asynchronous update rule \eqref{eq:Update_random} showing global consensus. (a) First dimension of state vectors. (b) Second dimension of state vectors. (c) Third dimension of state vectors.}
	\label{fig:global2}
\end{figure*}

The next example shows that the proper selection of step-size $\tau$ ensures the convergence of agents' states otherwise they become unbounded.
\begin{example}\label{ex:noconsensus}
	In Example \ref{ex:tenode_globalconsensus}, we select a step-size $\tau$ that satisfies the condition stated in Section \ref{sec:sync_model}, and it is observed that all agents in the network successfully achieve global consensus (see Figure~\ref{fig:global1}). Figure~\ref{fig:unbound} shows the state evolution for a step-size $\tau \notin \mathcal{R}$ when agents update their states using \eqref{eq:Update_random}. Observe that the state vectors become unbounded in this case due to non-convergence of $\prod_{k=0}^t \mathcal{F}(k)$ matrix.
\end{example}
\begin{figure*}
	\centering
	\subfigure[]{%
		\includegraphics[width=0.3\linewidth]{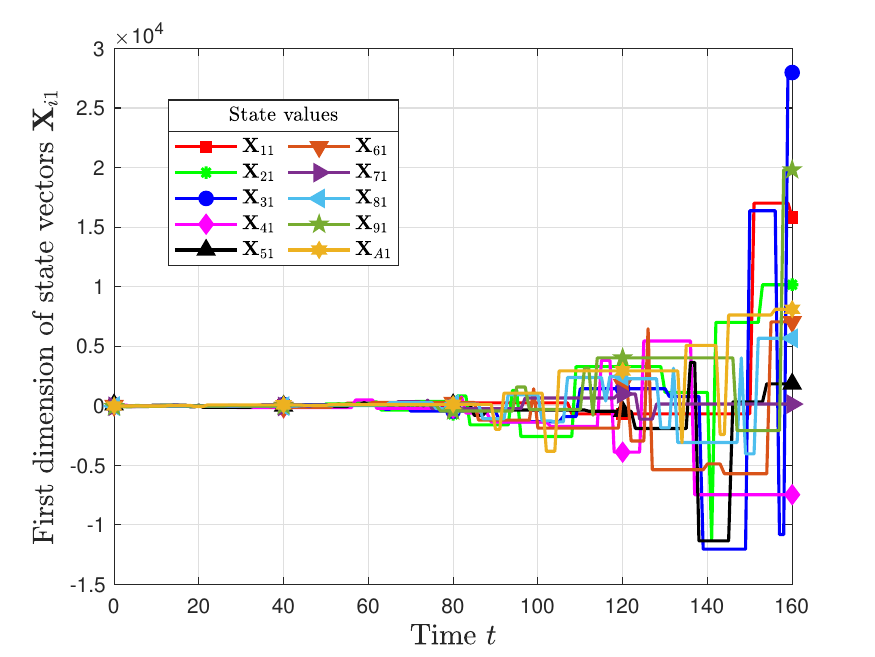}
		\label{fig:Noconsensus1}}
	\quad
	\subfigure[]{%
		\includegraphics[width=0.3\linewidth]{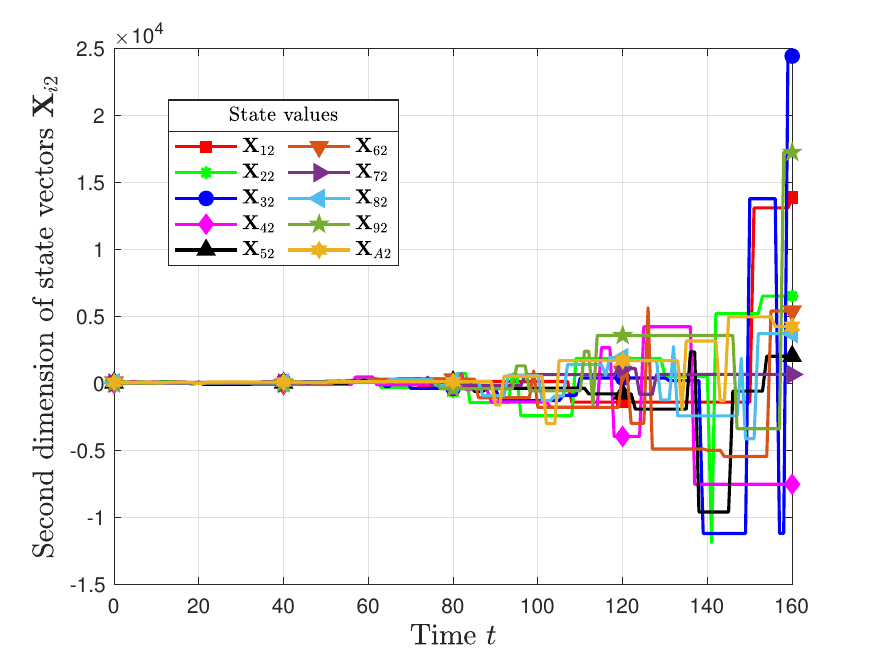}
		\label{fig:Noconsensus2}}
	\quad
	\subfigure[]{%
		\includegraphics[width=0.3\linewidth]{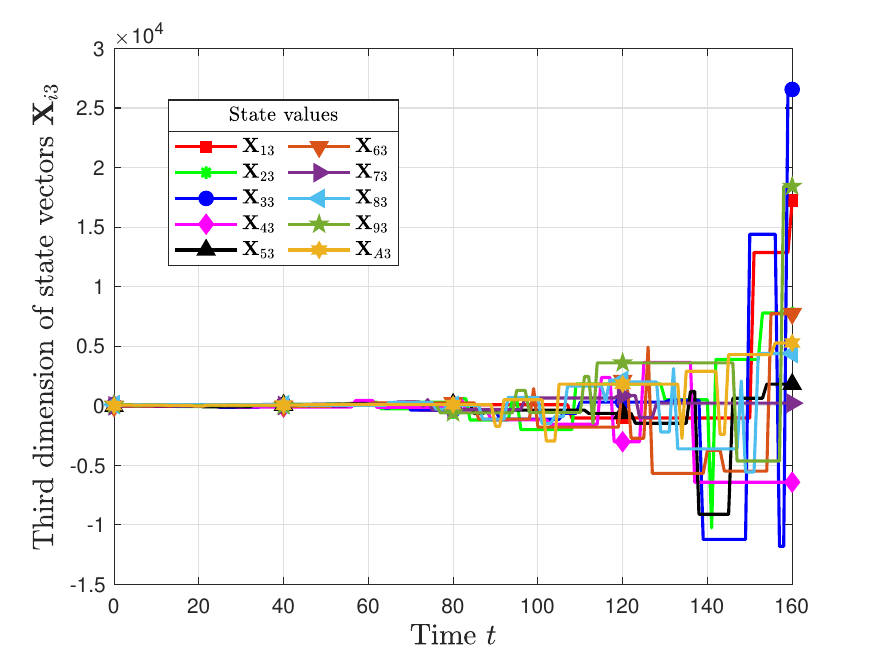}
		\label{fig:Noconsensus3}}
	\caption{ Evolution of state vectors for an agent $i \in V$ with asynchronous update rule \eqref{eq:Update_random} when $\tau \notin \mathcal{R}$ showing divergence of state vectors. (a) First dimension of state vectors. (b) Second dimension of state vectors. (c) Third dimension of state vectors.}
	\label{fig:unbound}
\end{figure*}

Now we show the convergence of agents' states in a 5 agents directed network showing that our results hold true for both undirected and directed networks with a spanning tree.

\begin{example} \label{ex:fivenode_globalconsensus}
	Observe that the directed network of Figure \ref{fig:5node} is not strongly connected but has a spanning tree. The evolution of the state vectors for every agent when agents update their states using \eqref{eq:Update_random} (see Figure~\ref{fig:5node_consensus}) shows that agents achieve global consensus. This shows that our result holds for directed graphs as well thus showing the robustness of our results.
\end{example}
\begin{figure*}
	\centering
	\subfigure[]{%
		\includegraphics[width=0.3\linewidth]{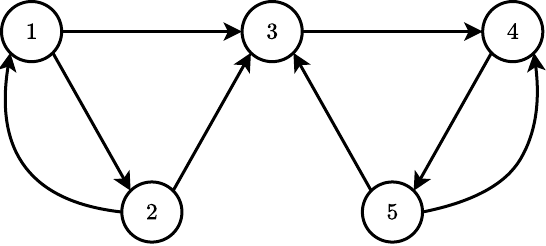}
		\label{fig:5node}}
	\quad
	\subfigure[]{%
		\includegraphics[width=0.3\linewidth]{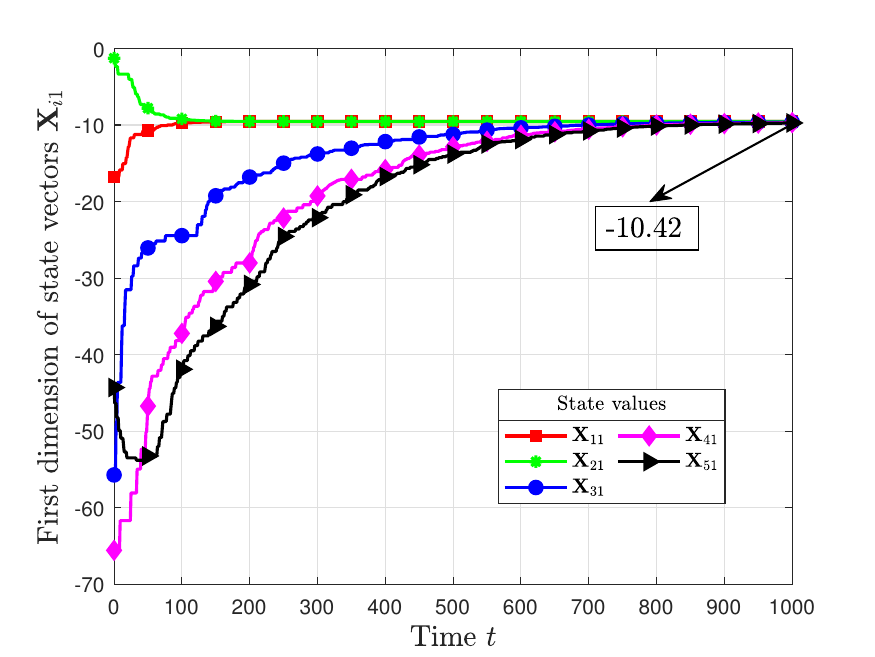}
		\label{fig:5node_dim1}}
	\quad
	\subfigure[]{%
		\includegraphics[width=0.3\linewidth]{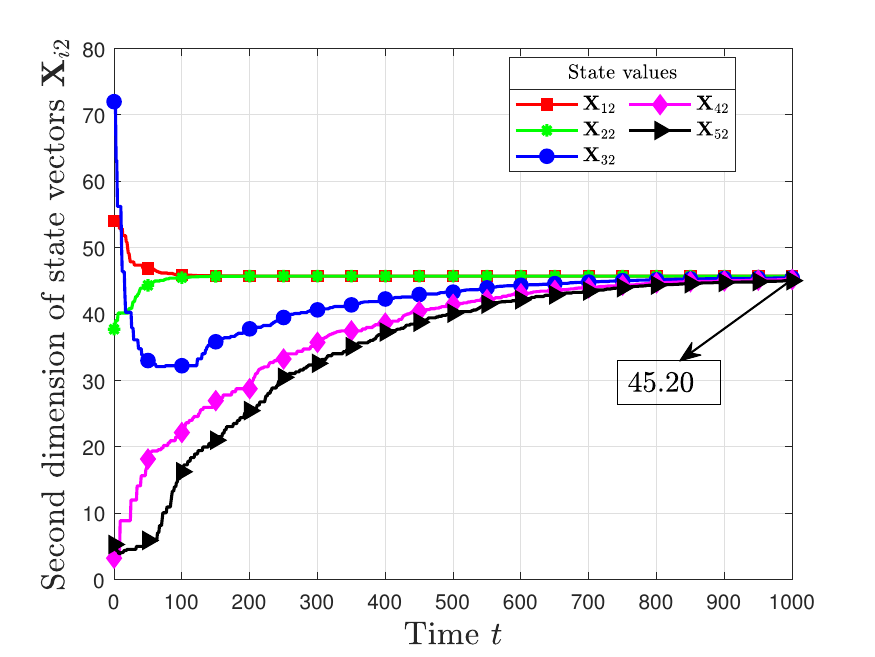}
		\label{fig:5node_dim2}}
	\caption{ Evolution of state vectors of an agent $i \in V$ for the network shown in Figure \ref{fig:5node} with asynchronous update rule \eqref{eq:Update_random} showing global consensus. (a) A directed network with $5$ agents containing spanning tree. (b) First dimension of state vectors. (c) Second dimension of state vectors.}
	\label{fig:5node_consensus}
\end{figure*}
Examples~\ref{ex:tenode_globalconsensus}-\ref{ex:fivenode_globalconsensus} consider specific networks generated by fixed edges. The following example shows that the results hold true for any randomly generated network as well. 
\begin{example}
	We consider a random geometric graph (RGG) \cite{penrose03} in which agents are placed in a metric space, typically the Euclidean plane, and edges are formed based on the distance between vertices\footnote{There are various ways to generate RGG networks. We generate the network where edges are chosen based on Euclidean distance. See \cite{bollobas98} for more details.}. We consider a RGG network, denoted by $G(200,0.4),$ in which $200$ agents are distributed uniformly at random in a unit square and an edge is between two agents if their Euclidean distance is less than or equal to the threshold $0.4.$ We make sure that our network $G(200,0.4)$ has a spanning tree. The evolution of state vectors of agents $\{2,97,160,198\} \in V,$ when agents update their states using \eqref{eq:Update_random} is shown in Figure \ref{fig:RGG_consensus}. Observe that all agents reach to same vector thus achieving global consensus.
\end{example}
\begin{figure*}
	\centering
	\subfigure[]{%
		\includegraphics[width=0.3\linewidth]{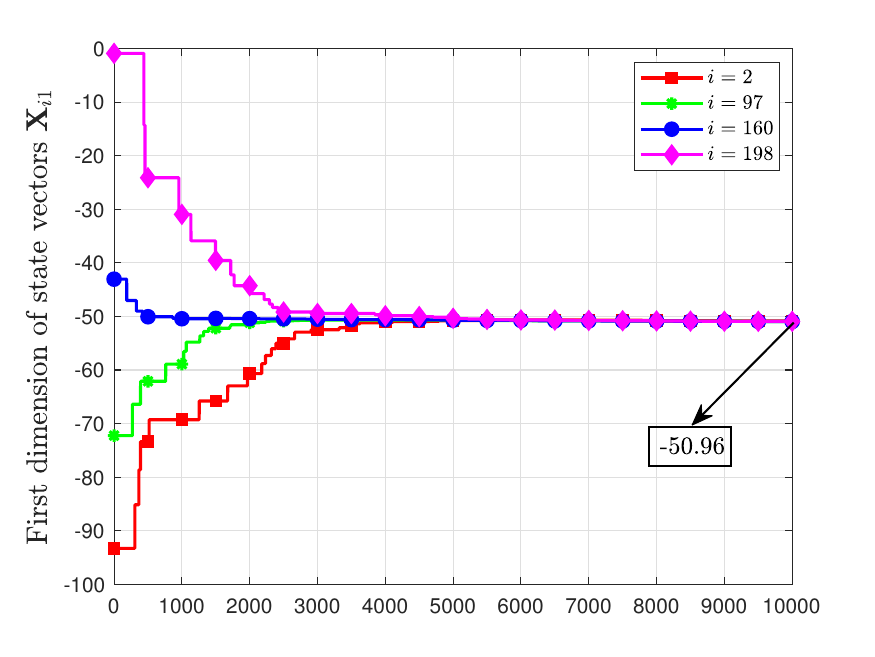}
		\label{fig:rgg_dim1}}
	\quad
	\subfigure[]{%
		\includegraphics[width=0.3\linewidth]{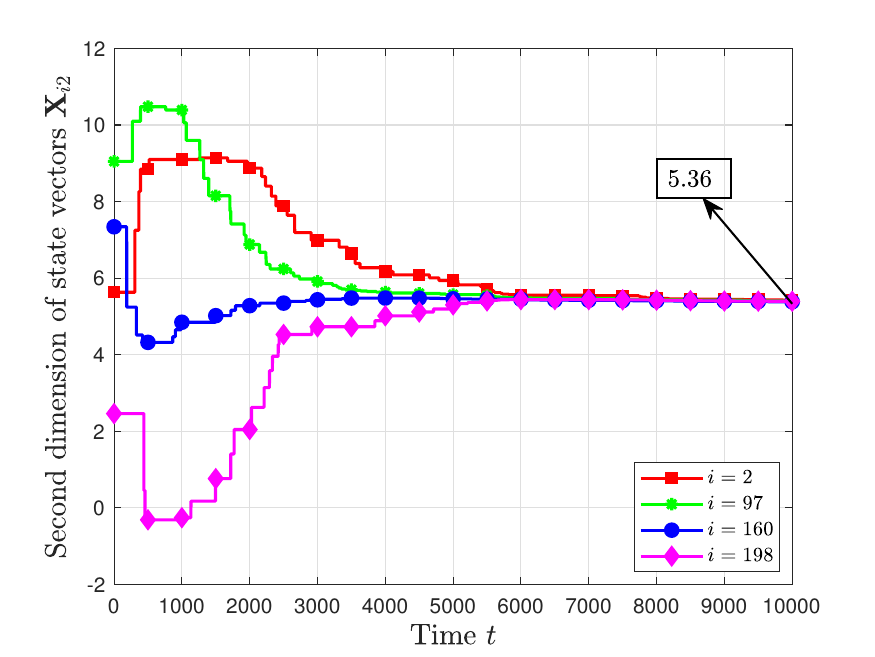}
		\label{fig:rgg_dim2}}
	\quad
	\subfigure[]{%
		\includegraphics[width=0.3\linewidth]{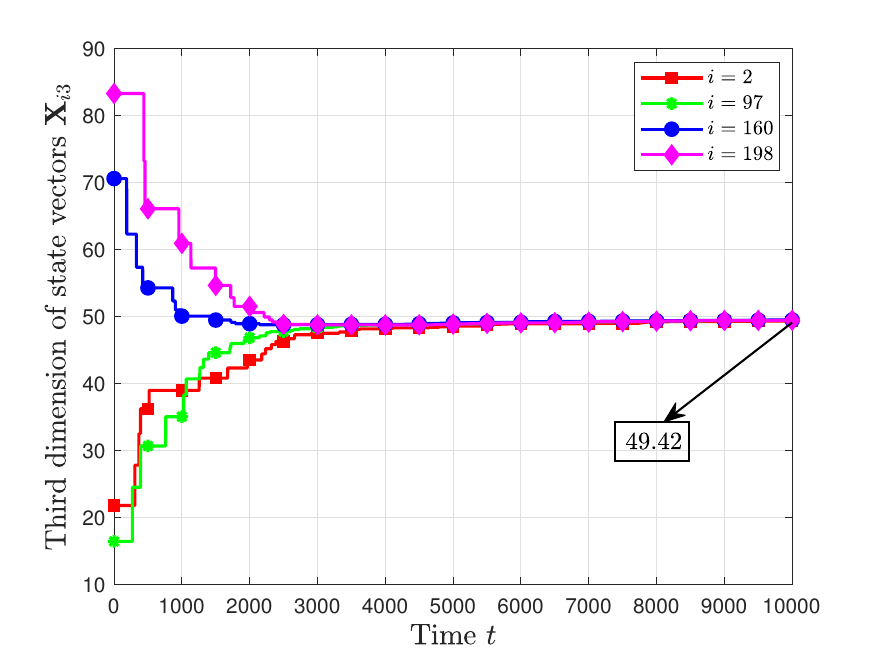}
		\label{fig:rgg_dim3}}
	\caption{ Evolution of state vectors of agents $\{2,97,160,198\}  \in V$ for the random geometric network $G(200,0.4)$ with asynchronous update rule \eqref{eq:Update_random} showing global consensus. (a) First dimension of state vectors. (b) Second dimension of state vectors. (c) Third dimension of state vectors.}
	\label{fig:RGG_consensus}
\end{figure*}
%

\section{Conclusion}
\label{sec:conclusion}

In this work, we studied the vector consensus over discrete-time matrix-weighted multi-agent networks using stochastic matrix convergence theory. In particular, we showed that the network achieves the global consensus almost surely for both synchronous and asynchronous update models when all edge weights are positive definite. The existing literature only shows the convergence in expectation for asynchronous update model.
We also considered the situation when the agents' interactions are either cooperative (denoted by positive definite edge weight) or competitive (edge weight being negative definite). We showed that the network achieves bipartite consensus almost surely if and only if the network is structurally balanced.Network achieves zero consensus almost surely when network is structurally unbalanced.  While proving our main results, we proved convergence of product of sum of nonhomogenous matrices which is of independent interest in the matrix convergence theory. Interesting future directions for exploration are the vector consensus with communication delays and quantized communication.
%
\bibliographystyle{IEEEtran}
\bibliography{Vector}

\begin{thebibliography}{10}
\providecommand{\url}[1]{#1}
\csname url@samestyle\endcsname
\providecommand{\newblock}{\relax}
\providecommand{\bibinfo}[2]{#2}
\providecommand{\BIBentrySTDinterwordspacing}{\spaceskip=0pt\relax}
\providecommand{\BIBentryALTinterwordstretchfactor}{4}
\providecommand{\BIBentryALTinterwordspacing}{\spaceskip=\fontdimen2\font plus
\BIBentryALTinterwordstretchfactor\fontdimen3\font minus
  \fontdimen4\font\relax}
\providecommand{\BIBforeignlanguage}[2]{{%
\expandafter\ifx\csname l@#1\endcsname\relax
\typeout{** WARNING: IEEEtran.bst: No hyphenation pattern has been}%
\typeout{** loaded for the language `#1'. Using the pattern for}%
\typeout{** the default language instead.}%
\else
\language=\csname l@#1\endcsname
\fi
#2}}
\providecommand{\BIBdecl}{\relax}
\BIBdecl

\bibitem{dorri2018multi}
A.~Dorri, S.~S. Kanhere, and R.~Jurdak, ``Multi-agent systems: A survey,''
  \emph{IEEE Access}, vol.~6, pp. 28\,573--28\,593, April 2018.

\bibitem{nedic2010constrained}
A.~Nedi{\'c}, A.~Ozdaglar, and P.~A. Parrilo, ``Constrained consensus and
  optimization in multi-agent networks,'' \emph{IEEE Transactions on Automatic
  Control}, vol.~55, no.~4, pp. 922--938, April 2010.

\bibitem{lin2016distributed}
P.~Lin, W.~Ren, and Y.~Song, ``Distributed multi-agent optimization subject to
  nonidentical constraints and communication delays,'' \emph{Automatica},
  vol.~65, pp. 120--131, March 2016.

\bibitem{mitra2020new}
A.~Mitra, J.~A. Richards, and S.~Sundaram, ``A new approach to distributed
  hypothesis testing and non-bayesian learning: Improved learning rate and
  byzantine resilience,'' \emph{IEEE Transactions on Automatic Control},
  vol.~66, no.~9, pp. 4084--4100, 2021.

\bibitem{lalitha2018social}
A.~Lalitha, T.~Javidi, and A.~D. Sarwate, ``Social learning and distributed
  hypothesis testing,'' \emph{IEEE Transactions on Information Theory},
  vol.~64, no.~9, pp. 6161--6179, September 2018.

\bibitem{ota2006multi}
J.~Ota, ``Multi-agent robot systems as distributed autonomous systems,''
  \emph{Advanced engineering informatics}, vol.~20, no.~1, pp. 59--70, 2006.

\bibitem{inigo2012robotics}
P.~I{\~n}igo-Blasco, F.~Diaz-del Rio, M.~C. Romero-Ternero,
  D.~Cagigas-Mu{\~n}iz, and S.~Vicente-Diaz, ``Robotics software frameworks for
  multi-agent robotic systems development,'' \emph{Robotics and Autonomous
  Systems}, vol.~60, no.~6, pp. 803--821, 2012.

\bibitem{soriano2013multi}
A.~Soriano, E.~J. Bernabeu, A.~Valera, and M.~Vall{\'e}s, ``Multi-agent systems
  platform for mobile robots collision avoidance,'' in \emph{Proceedings of the
  International Conference on Practical Applications of Agents and Multi-Agent
  Systems}.\hskip 1em plus 0.5em minus 0.4em\relax Springer, 2013, pp.
  320--323.

\bibitem{jiang2013understanding}
Y.~Jiang and J.~Jiang, ``Understanding social networks from a multiagent
  perspective,'' \emph{IEEE Transactions on Parallel and Distributed Systems},
  vol.~25, no.~10, pp. 2743--2759, October 2014.

\bibitem{Qin17}
J.~Qin, Q.~Ma, Y.~Shi, and L.~Wang, ``Recent advances in consensus of
  multi-agent systems: A brief survey,'' \emph{IEEE Transactions on Industrial
  Eletronics}, vol.~64, no.~6, pp. 4972--4983, June 2017.

\bibitem{Ning23}
B.~Ning, Q.~Han, Z.~Zuo, L.~Ding, Q.~Lu, and X.~Ge, ``Fixed-time and
  prescribed-time consensus control of multiagent systems and its applications:
  A survey of recent trends and methodologies,'' \emph{IEEE Transactions on
  Industrial Informatics}, vol.~19, no.~2, pp. 1121--1135, February 2023.

\bibitem{Trinh18}
M.~H. Trinh, C.~V. Nguyen, Y.-H. Lim, and H.-D. Ahn, ``Matrix-weighted
  consensus and its applications,'' \emph{Automatica}, vol.~89, pp. 415--419,
  March 2018.

\bibitem{Tran21}
Q.~V. Tran, M.~H. Trinh, and H.~Ahn, ``Discrte-time matrix-weighted
  consensus,'' \emph{IEEE Transactions on Control of Network Systems}, vol.~8,
  no.~4, pp. 1568--1578, December 2021.

\bibitem{Pan22}
L.~Pan, H.~Shao, M.~Mesbahi, D.~Li, and Y.~Xi, ``Cluster consensus on
  matrix-weighted switching networks,'' \emph{Automatica}, vol. 141, p. 110308,
  July 2022.

\bibitem{Liu23}
K.~Liu and Z.~Ji, ``Event-triggered average consensus on matrix-weighted
  networks,'' \emph{IEEE Transactions on Circuits and Systems II: Express
  Briefs}, vol.~71, no.~2, pp. 677--681, February 2024.

\bibitem{Foight20}
D.~R. Foight, M.~H. de~Badyn, and M.~Mesbahi, ``Performance and design of
  consensus on matrix-weighted and time-scaled graphs,'' \emph{IEEE
  Transactions on Control of Network Systems}, vol.~7, no.~4, pp. 1812--1822,
  December 2020.

\bibitem{Le-Phan24}
N.-M. Le-Phan, M.~H. Trinh, and P.~D. Nguyen, ``Randomized matrix weighted
  consensus,'' \emph{IEEE Transactions on Network Science and Engineering},
  vol.~11, no.~4, pp. 3536--3549, July-Aug. 2024.

\bibitem{ren2005consensus}
W.~Ren and R.~W. Beard, ``Consensus seeking in multiagent systems under
  dynamically changing interaction topologies,'' \emph{IEEE Transactions on
  Automatic Control}, vol.~50, no.~5, pp. 655--661, May 2005.

\bibitem{Dong18}
Y.~Dong, M.~Zhan, G.~Kou, Z.~Ding, and H.~Liang, ``A survey on the fusion
  process in opinion dynamics,'' \emph{Information Fusion}, vol.~43, pp.
  57--65, September 2018.

\bibitem{DeGroot74}
M.~H. DeGroot, ``Reaching a consensus,'' \emph{Journal of the American
  Statistical Association}, vol.~69, no. 345, pp. 118--121, March 1974.

\bibitem{olfati2007consensus}
R.~Olfati-Saber, J.~A. Fax, and R.~M. Murray, ``Consensus and cooperation in
  networked multi-agent systems,'' \emph{Proceedings of the IEEE}, vol.~95,
  no.~1, pp. 215--233, January 2007.

\bibitem{Boyd06}
S.~Boyd, A.~Ghosh, B.~Prabhakar, and D.~Shah, ``Randomized gossip algorithms,''
  \emph{IEEE Transactions on Information Theory}, vol.~52, no.~6, pp.
  2508--2530, 2006.

\bibitem{Shi15}
G.~Shi and B.~D.~O. Anderson, ``Consensus over random graph processes: Network
  borel–cantelli lemmas for almost sure convergence,'' \emph{IEEE Transaction
  on Information Theory}, vol.~61, no.~10, pp. 5690--5707, October 2015.

\bibitem{easley2010networks}
D.~Easley and J.~Kleinberg, \emph{Networks, crowds, and markets: Reasoning
  about a highly connected world}.\hskip 1em plus 0.5em minus 0.4em\relax
  Cambridge university press, 2010.

\bibitem{altafini2012consensus}
C.~Altafini, ``Consensus problems on networks with antagonistic interactions,''
  \emph{IEEE transactions on automatic control}, vol.~58, no.~4, pp. 935--946,
  April 2013.

\bibitem{Liu17}
J.~Liu, X.~Chen, T.~Ba{\c{s}}ar, and M.~A. Belabbas, ``Exponential convergence
  of the discrete and continuous-time altafini models,'' \emph{IEEE
  Transactions on Automatic Control}, vol.~62, no.~12, pp. 6168--6182, December
  2017.

\bibitem{Su23}
H.~Su and S.~Miao, ``Consensus on directed matrix-weighted networks,''
  \emph{IEEE Transactions on Automatic Control}, vol.~68, no.~4, pp.
  2529--2535, April 2023.

\bibitem{Miao23}
S.~Miao and H.~Su, ``Consensus of matrix-weighted hybrid multiagent systems,''
  \emph{IEEE Transactions on Cybernetics}, vol.~53, no.~1, pp. 668--678,
  January 2023.

\bibitem{Liang24}
L.~Liang, R.~Ding, S.~Liu, and R.~Su, ``Event-triggered privacy preserving
  consensus control with edge-based additive noise,'' \emph{IEEE Transactions
  on Automatic Control}, vol.~69, no.~10, pp. 7059--7066, October 2024.

\bibitem{Pan19}
L.~Pan, H.~Shao, M.~Mesbahi, Y.~Xi, and D.~Li, ``Bipartite consensus on
  matrix-valued weighted networks,'' \emph{IEEE Transactions on Circuits and
  Systems II: Express Briefs}, vol.~66, no.~8, pp. 1441--1445, August 2019.

\bibitem{Su20}
H.~Su, J.~Chen, Y.~Yang, and Z.~Rong, ``The bipartite consensus for multi-agent
  systems with matrix-weight-based signed network,'' \emph{IEEE Transactions on
  Circuits and Systems II: Express Briefs}, vol.~67, no.~10, pp. 2019--2023,
  October 2020.

\bibitem{Hartfiel02}
D.~J. Hartfiel, \emph{Nonhomogeneous matrix products}.\hskip 1em plus 0.5em
  minus 0.4em\relax World Scientific, 2002.

\bibitem{nedic2016convergence}
A.~Nedi{\'c} and J.~Liu, ``On convergence rate of weighted-averaging dynamics
  for consensus problems,'' \emph{IEEE Transactions on Automatic Control},
  vol.~62, no.~2, pp. 766--781, February 2017.

\bibitem{horn2012matrix}
R.~A. Horn and C.~R. Johnson, \emph{Matrix analysis}.\hskip 1em plus 0.5em
  minus 0.4em\relax Cambridge university press, 2012.

\bibitem{mounika2021opinion}
M.~Gunturu and P.~Vyavahare, ``Opinion dynamics in discrete-time networks with
  trust-mistrust interactions,'' in \emph{Proceedings of the International
  Conference on COMmunication Systems and NETworkS(COMSNETS)}, 2021, pp.
  456--464.

\bibitem{penrose03}
M.~Penrose, \emph{Random geometric graphs}.\hskip 1em plus 0.5em minus
  0.4em\relax OUP Oxford, 2003, vol.~5.

\bibitem{bollobas98}
B.~Bollob{\'a}s, \emph{Random graphs}.\hskip 1em plus 0.5em minus 0.4em\relax
  Springer, 1998.

\end{thebibliography}

\appendices
\section{Proof of Lemma~7}
\label{sec:app_a}
\begin{proof}
	By the properties of $Q_{ij}$ matrices, the row sum of $\mathcal{Q}(k), \forall k$ is zero and $0 < ||\mathcal{Q}(k)|| <1.$ Let $||\mathcal{Q}(k)||_+ = \max(||\mathcal{Q}(k)||,1)$ and $||\mathcal{Q}(k)||_- = \min(||\mathcal{Q}(k)||,1).$ Observe that $\sum_{k=0}^\infty (||\mathcal{Q}(k)||_+ -1)$ converges and $\sum_{k=0}^\infty (1- ||\mathcal{Q}(k)||_-)$ diverges. By Theorem~6.3 of \cite{Hartfiel02}, $\lim\limits_{t \rightarrow \infty} \prod_{k=0}^t\mathcal{Q}(k) =\mathbf{0}.$ Note that this argument does not need the existence of spanning tree on the sample path thus the result holds true for all the sample paths of $\Omega.$ 
\end{proof}
\section{Proof of Lemma~8}
\label{sec:app_b}
\begin{proof}
	On any sample path $\omega \in \Omega',$ recall that $\prod_{k=0}^t\mathcal{F}(k) = \prod_{k=0}^t(\mathcal{P}(k)+\mathcal{Q}(k)).$ We skip the notation $\omega.$ Note that the matrix norm of $\mathcal{P}(k)$ is $||\mathcal{P}(k)|| =1, \forall k, \lim\limits_{t \rightarrow \infty}\prod_{k=0}^t\mathcal{P}(k)$ exists by Lemma \ref{lm:random_P} and $\lim\limits_{t \rightarrow \infty} \prod_{k=0}^t\mathcal{Q}(k) \rightarrow \mathbf{0}$ by Lemma \ref{lm:random_Q}. Let $\mathcal{J}_t=\prod_{k=0}^t(\mathcal{P}(k)+\mathcal{Q}(k)), \forall t \geq t'(\omega)$ be a matrix sequence. We will show that the sequence $\mathcal{J}_t$ converges by proving that $\mathcal{J}_t$ is a Cauchy sequence. Let for some $t > r > t'(\omega),$ $\mathcal{J}_{t,r}=\prod_{l=0}^{t-r}\mathcal{P}(l)\prod_{k=0}^t(\mathcal{P}(k)+\mathcal{Q}(k)).$ By triangular inequality of the matrix norm, for some $t, s > r,$
	\begin{equation}
		||\mathcal{J}_t-\mathcal{J}_s|| \leq ||\mathcal{J}_t-\mathcal{J}_{t,r}||+||\mathcal{J}_{t,r}-\mathcal{J}_{s,r}||+||\mathcal{J}_{s,r}-\mathcal{J}_s||. \label{eq:J_expand_random}
	\end{equation}
	By using the fact that $||\mathcal{P}(k)|| =1, \forall k$ from Lemma~\ref{lm:random_P} and $\lim\limits_{t \rightarrow \infty} \prod_{k=0}^t\mathcal{Q}(k) \rightarrow \mathbf{0}$ from Lemma \ref{lm:random_Q}, one can show that for sufficiently large $t,s$ and $r,$ $||\mathcal{J}_t-\mathcal{J}_{t,r}|| \leq \epsilon/3, ||\mathcal{J}_{s,r}-\mathcal{J}_s|| \leq \epsilon/3$ on the lines of the proof of Lemma~\ref{lm:cauchy} for some constant $\epsilon>0.$ By using the result $\lim\limits_{t \rightarrow \infty}\prod_{k=0}^t\mathcal{P}(k)$ exists by Lemma \ref{lm:random_P}, one can also show that $||\mathcal{J}_{t,r}-\mathcal{J}_{s,r}|| \leq \epsilon/3.$  Hence $||\mathcal{J}_t-\mathcal{J}_s|| < \epsilon$ which implies, $\mathcal{J}_t, \forall t \geq t'(\omega)$ is a Cauchy sequence. As the matrix product space with norm $||.||$ is complete, this shows that the Cauchy sequence $\mathcal{J}_t, \forall t \geq t'(\omega)$ converges. This proves that for any $\omega \in \Omega',$ $\lim\limits_{t \rightarrow \infty}\prod_{k=0}^t\mathcal{F}(k)$ exists.
\end{proof}
\section{Proof of Lemma~9}
\label{sec:app_c}
\begin{proof}
	For every $t \geq t'(\omega),$ let $\prod_{k=0}^t\mathcal{F}(k)=\mathcal{A}_t+\mathcal{B}_t$ with $\mathcal{A}_t$ having similar structure to that of $\mathcal{P}$ in \eqref{eq:P} and $\mathcal{B}_t$ having similar structure to that of $\mathcal{Q}$ in \eqref{eq:Q}. Here $P_i^t$ is an $i^{th}$ diagonal block matrix in $\mathcal{A}_t$ similar to that of $P_i$ in $\mathcal{P}$ and $Q_{ij}^t$ is an $(i,j)^{th}$ block matrix in $\mathcal{B}_t$ similar to that of $Q_{ij}$ in $\mathcal{Q}.$ Using the arguments made in Lemma~\ref{lm:random_P}, $P_i^t, \forall i \in \{1,2,\ldots,d\}$ in $\mathcal{A}_t$ is a stochastic matrix with positive diagonal elements and the induced graph of $P_i^t, \forall i$ has a spanning tree. By Corollary~$3.5$ of \cite{ren2005consensus}, $P_i^t$ has algebraic multiplicity of one for eigenvalue $\lambda = 1$ and its all other eigenvalues are less than one. Let $v_i=k\mathbf{1} \in \mathbb{R}^n$ be the eigenvector of $P_i^t$ for some $i$ corresponding to $\lambda=1$ for some positive constant $k.$ Let $\overrightarrow{V_i} =[\mathbf{0}_n,\cdots,\mathbf{0}_n,v_i,\mathbf{0}_n,\cdots,\mathbf{0}_n]^T \in \mathbb{R}^{nd}$ with $v_i$ at the $i^{th}$ position. It is easy to observe that $\overrightarrow{V_i}$ is an eigenvector of $\mathcal{A}_t$ corresponding to $\lambda=1.$ Hence, the unit eigenvalue of $\mathcal{A}_t$ has algebraic multiplicity of $d$ with one eigenvector coming from each $P_i^t,\forall i.$
	
	The matrix $\prod_{k=0}^t\mathcal{F}(k), \forall t \geq t'(\omega)$ has $P_i^t$ matrices in the diagonal and $Q_{ij}^t$ matrices in its off-diagonal blocks. Each row sum of $Q_{ij}^t$ matrix is zero. Hence $\overrightarrow{V_1},\cdots,\overrightarrow{V_d}$ are also eigenvectors of $\prod_{k=0}^t\mathcal{F}(k), \forall t \geq t'(\omega)$ corresponding to $\lambda=1.$ Thus $\prod_{k=0}^t\mathcal{F}(k), \forall t \geq t'(\omega)$ has an eigenvalue $\lambda=1$ with $d$ multiplicity. Observe that the vectors $\overrightarrow{V_1},\cdots,\overrightarrow{V_d}$ are linearly independent hence the geometric multiplicity of unit eigenvalue of $\prod_{k=0}^t\mathcal{F}(k), \forall t \geq t'(\omega)$ is also $d.$
\end{proof}
\section{Proof of Lemma~11}
\label{sec:app_d}
\begin{proof}
	Recall the definition of $\mathcal{P}$ and $P_i$ from \eqref{eq:P} and \eqref{eq:Pi}. Note that when all the interaction weight matrices are negative definite, all the sign functions $\text{sgn}(W_{km})$ are negative and hence $P_i$s are not stochastic matrices. Let $z$ be an eigenvalue of $P_i$ for some $i \in \{1,\ldots,d\}.$ By Gershgorin theorem \cite{horn2012matrix} we know that,
	\begin{align*}
		\abs*{z-\left( 1+\tau \sum_{l \in \mathcal{N}_k}{W_{kl}^{(i,i)}} \right)}  & \stackrel {\text{(a)}} {\leq} \tau \sum_{l \in \mathcal{N}_k}{\mid W_{kl}^{(i,i)} \mid}\\
		& \stackrel {\text{(b)}} {\leq} -\tau \sum_{l \in \mathcal{N}_k}{ W_{kl}^{(i,i)}}. 
	\end{align*}
	Here the inequality (a) follows from Gershgorin Theorem and the inequality (b) follows from the negative definiteness of $W_{kl}$ matrices. Thus by the definition of step size $\tau,$ the eigenvalue can be lower bounded as $z \geq 1+\tau \sum_{l \in \mathcal{N}_k}{W_{kl}^{(i,i)}} + \tau \sum_{l \in \mathcal{N}_k}{W_{kl}^{(i,i)}} >0.$
	Similarly eigenvalues can be upper bounded by $z \leq 1+\tau \sum_{l \in \mathcal{N}_k}{W_{kl}^{(i,i)}} - \tau \sum_{l \in \mathcal{N}_k}{W_{kl}^{(i,i)}} = 1.$ Thus all the eigenvalues of matrix $P_i$ are $0 < z \leq 1, \forall i \in \{1,\ldots,d\}.$ 
	
	Now we will prove that any eigenvalue of $P_i$ can not equal to $1$ by contradiction. Let us assume that $P_i$ has a unity eigenvalue $z$ with eigenvector $Y$ such that $(P_i-\mathbf{I})Y =\mathbf{0}.$ This implies that $Y$ is in the null space of matrix $P_i-I.$ Let $v_1,v_2,\ldots,v_n$ be the columns of $(P_i-\mathbf{I})$ and $v:=a_1v_1+\ldots+a_nv_n$ where $a_i \in \mathbb{R}, \forall i$ are some constants. Column vectors $v_1,\ldots,v_n$ are linearly independent if $v=\mathbf{0}$ only when $a_i=0, \forall i.$ The first dimension of $v$ can be zero iff $a_2=a_3=\ldots=a_n=-a_1.$ Similarly, the $n^{th}$ dimension of $v$ is $0$ iff $a_1=a_2=\ldots=a_{n-1}=-a_n.$ Hence $v=\mathbf{0}$ iff $a_i=0, \forall i$ implying that all columns of $P_i-\mathbf{I}$ are linearly independent. Thus the rank of $(P_i-\mathbf{I})$ is $n.$ Using the rank-nullity theorem \cite{horn2012matrix}, the nullity of $(P_i-\mathbf{I})$ is $0$ which implies that the null space of $(P_i - \mathbf{I})$ is empty, i.e., $\nexists$ a non-zero $Y$ such that $(P_i - \mathbf{I})Y=\mathbf{0}.$ Thus, $z$ can not be equal to one. As $|z| <1,$ by Lemma~\ref{lm:zeroconvergence} it follows that $\lim\limits_{t \rightarrow \infty} P_i^t =\mathbf{0}.$
\end{proof}
\section{Proof of Lemma~13}
\label{sec:app_e}
\begin{proof}
	First we prove that $1$ is not an eigenvalue of $\mathcal{F}$ by contradiction. Let us assume that $\lambda =1$ is an eigenvalue of $\mathcal{F}$ with eigenvector $Y.$ Then $\left(\mathcal{F} - \mathbf{I} \right) Y=\mathbf{0}.$ Let $u_1,u_2,\ldots,u_{nd}$ be the column vectors of $\left(\mathcal{F} - \mathbf{I} \right).$ Consider the linear combination of column vectors as $u=a_1u_1+a_2u_2+\ldots+a_{nd}u_{nd}$ where $a_i \in \mathbb{R}, \forall i$ is a scalar. Column vectors $u_1,u_2,\ldots,u_{nd}$ are linearly independent if $u=\mathbf{0}$ can only be satisfied by $a_i=0, ~\forall i.$ The $k^{th}$ row of $u$ is zero if $-a_{(k-1)n+1}=a_{(k-1)n+1+j}, k=1,\ldots,d$ and $j=1,\ldots,(n-1).$ Now, it is easy to verify that all the rows of $u$ are zero, i.e., $u=\mathbf{0}$ iff $a_i=0,~\forall i.$ Hence all the columns of $\left(\mathcal{F} - \mathbf{I} \right)$ are linearly independent to each other and its rank is $n.$ Using the rank-nullity theorem \cite{horn2012matrix}, the nullity of $\left(\mathcal{F} - \mathbf{I} \right)$ is $0$ and there does not exist a non-zero $Y$ such that $\left(\mathcal{F} - \mathbf{I} \right)Y=\mathbf{0} \Rightarrow |\lambda| \neq 1.$
	
	The argument that no eigenvalue of $\mathcal{F}$ has magnitude greater than one follows on the same line of the proof of Lemma~\ref{lm:PQ_eigen_less_random}. Hence the matrix $\mathcal{F}$ has all the eigenvalues with magnitude less than one. The result now directly follows by applying Lemma~\ref{lm:zeroconvergence}.
\end{proof}
\section{Proof of Lemma~14}
\label{sec:app_f}
\begin{proof}
	We prove the result by using Lemma~\ref{lm:zeroconvergence} by showing that magnitude of all eigenvalues of $P_i,\forall i$ (recall $P_i$ from \eqref{eq:Pi}) are strictly less than one. Let $\mathcal{N}_{k}^T$ be the set of in-neighbors of agent $k$ with positive definite edge weight matrix and $\mathcal{N}_{k}^{NT}$ be the set of in-neighbors with negative definite edge weight matrix. Then, by \eqref{eq:Pi} and Gershgorin Theorem, every eigenvalue $z$ of $P_i$ can be written as:
	\begin{align*}
		&\abs*{z -\left(1-\tau \sum_{l \in \mathcal{N}_k^T}{W_{kl}^{(i,i)}}+\tau \sum_{l \in \mathcal{N}_k^{NT}}{W_{kl}^{(i,i)}} \right)} \\
		& \stackrel {\text{(a)}} {\leq} \tau \sum_{l \in \mathcal{N}_k^T}{\mid W_{kl}^{(i,i)} \mid} + \tau \sum_{l \in \mathcal{N}_k^{NT}}{\mid W_{kl}^{(i,i)} \mid} \\
		& \stackrel {\text{(b)}} {\leq} \tau \sum_{l \in \mathcal{N}_k^T} W_{kl}^{(i,i)} - \tau \sum_{l \in \mathcal{N}_k^{NT}} W_{kl}^{(i,i)}.
	\end{align*}
    Here the inequality (a) follows from Gershgorin Theorem and the inequality (b) is because $W_{kl}^{(i,i)} \leq 0$ whenever $W_{kl}$ is a negative definite matrix. Thus, $0<1-2 \tau \sum_{l \in \mathcal{N}_k^T}{W_{kl}^{(i,i)}}+2 \tau \sum_{l \in \mathcal{N}_k^{NT}}{W_{kl}^{(i,i)}} \leq |z| \leq 1.$
    This shows that every eigenvalue of $P_i$ is less than or equal to one. 
    
    Now we prove that $1$ is not an eigenvalue of $P_i$ by contradiction. Let $\lambda =1 $ be an eigenvalue of $P_i$ with eigenvector $Y$ and $(P_i - \mathbf{I})Y=\mathbf{0}.$ Let $y_1,\ldots,y_n$ be the column vectors of $P_i - \mathbf{I}.$ Consider the linear combination of column vectors as $y=b_1y_1+\ldots+b_ny_n$ where $b_i,\forall i$ is a scalar. Column vectors $y_1,\ldots,y_n$ are linearly independent if $y=\mathbf{0}$ iff $b_i=0~\forall i.$ The first row of $y$ is zero if $b_1=b_k ~\forall k \in \mathcal{N}_1^T$ and $b_1=-b_k ~\forall k \in \mathcal{N}_1^{NT}.$ Similarly, the $l^{th}$ row of $y$ is zero if $b_l=b_k~\forall k \in \mathcal{N}_l^T$ and $b_l=-b_k ~\forall k \in \mathcal{N}_l^{NT}.$ As the network is structurally unbalanced all these equalities will be true iff $b_i=0, \forall i.$ Hence, the rank of $P_i - \mathbf{I}$ is $n.$ Using the rank-nullity theorem \cite{horn2012matrix}, the null space of $P_i - \mathbf{I}$ is empty, i.e., $\nexists$ a non-zero $Y$ such that $(P_i - \mathbf{I})Y=\mathbf{0} \Rightarrow \lambda \neq 1.$ This contradicts our assumption hence all eigenvalues of $P_i$ are less than one.
   \end{proof}
\section{Proof of Lemma~15}
\label{sec:app_g}
\begin{proof}
	By Gershgorin circle theorem, every eigenvalue $z$ of $\mathcal{Q}$ is in the disc $\mid z- \alpha \mid \leq \gamma,$ where $\alpha = 0$ is the principal diagonal element of $\mathcal{Q}$ and $\gamma$ is the absolute sum of all the off diagonal elements of a row of $\mathcal{Q}.$ Thus,
	\begin{align*}
		\gamma &= \sum_{j \in \mathcal{N}_i} \mid -\tau \sum_{m \in \mathcal{N}_k^T}{W_{km}^{(i,i)}}+\tau \sum_{m \in \mathcal{N}_k^{NT}}{W_{km}^{(i,i)}} \mid \\
		&+ \sum_{j \in \mathcal{N}_i} \sum_{m \in \mathcal{N}_k} \mid \tau W_{km}^{(i,j)} \mid\\
		& \stackrel {\text{(a)}} {\leq} \tau \sum_{j \in \mathcal{N}_i} \sum_{m \in \mathcal{N}_k^T} \mid W_{km}^{(i,i)} \mid + \tau \sum_{j \in \mathcal{N}_i} \sum_{m \in \mathcal{N}_k^{NT}} \mid W_{km}^{(i,i)} \mid\\
		&+ \tau \sum_{j \in \mathcal{N}_i} \sum_{m \in \mathcal{N}_k} \mid  W_{km}^{(i,j)} \mid \\
		& = \tau \sum_{j \in \mathcal{N}_i} \sum_{m \in \mathcal{N}_k} \mid W_{km}^{(i,i)} \mid + \tau \sum_{j \in \mathcal{N}_i} \sum_{m \in \mathcal{N}_k} \mid  W_{km}^{(i,j)} \mid\\
		& = 2 \tau \sum_{j \in \mathcal{N}_i} \sum_{m \in \mathcal{N}_k} \mid  W_{km}^{(i,j)} \mid \\
		& \stackrel {\text{(b)}} {<} 1.		 
	\end{align*}
	Here the inequality (a) is due to triangular inequality, and the inequality (b) follows from the definition of step size $\tau.$ Thus the magnitude of all eigenvalues of $\mathcal{Q}$ are less than one and the result follows from Lemma~\ref{lm:zeroconvergence}.
\end{proof}

\end{document}